%% file: camera.tex
\documentclass{article}

\usepackage{microtype}
\usepackage{graphicx}
\usepackage{booktabs} % for professional tables
\usepackage[accepted]{lib/icml2026}

\usepackage{hyperref}
\ifdefined\nohyperref\else\ifdefined\hypersetup
  \definecolor{mydarkblue}{rgb}{0,0.08,0.45}
  \hypersetup{ %
    pdftitle={},
    pdfsubject={Proceedings of the International Conference on Machine Learning 2026},
    pdfkeywords={},
    pdfborder=0 0 0,
    pdfpagemode=UseNone,
    colorlinks=true,
    linkcolor=mydarkblue,
    citecolor=mydarkblue,
    filecolor=mydarkblue,
    urlcolor=mydarkblue,
    }
  \fi
\fi

\input{config/packs}
\input{config/define}

\begin{document}
\etocdepthtag.toc{main}

\input{config/metadata}

\begin{abstract}
\input{src_camera/0_abstract}
\end{abstract}

\section{Introduction}
    \label{section:introduction}
    \input{src_camera/1_introduction}
\section{Related Work}
    \label{section:related_work}
    \input{src_camera/2_related_work}
\section{Problem Definition}
    \label{section:problem_definition}
    \input{src_camera/3_problem_definition}
\section{Estimation Method}
    \label{section:method}
    \input{src_camera/4_method}
\section{Asymptotic Distribution}
    \label{section:theoretical_anslysis}
    \input{src_camera/5_theoretical_analysis}
\section{Experiments}
    \label{section:experiments}
    \input{src_camera/6_experiments}
\section{Conclusion}
    \label{section:conclusion}
    \input{src_camera/7_conclusion}
\input{src_camera/8_statements}

\section*{Acknowledgments}
We would like to thank the anonymous referees
for their valuable and helpful comments.
This work was partly supported by
``Program for Leading Graduate Schools'' of the Osaka University, Japan, % HWIP (2023-2027, naoki)
JSPS KAKENHI Grant-in-Aid for Scientific Research Number
JP25KJ1729,    % DC (2025-2028, naoki)
JP26H02499,  % kibanA (2026-2029, yasushi)
JST CREST JPMJCR23M3, % CREST (2023-2028, yasuko)
JST K Program JPMJKP25Y6, % K-program (2025-, kishima)
JST COI-NEXT JPMJPF2009, % COI-NEXT (fujita)
JST COI-NEXT JPMJPF2115, % COI-NEXT (sekitani)
Future Social Value Co-Creation Project - Osaka University. % 次世代拠点

% In the unusual situation where you want a paper to appear in the
% references without citing it in the main text, use \nocite
% \nocite{langley00}

\bibliography{
bib/references,
bib/general
% bib/ra,
% bib/semipara,
% bib/causality,
% bib/ope
}
\bibliographystyle{bib/icml2026}

%%%%%%%%%%%%%%%%%%%%%%%%%%%%%%%%%%%%%%%%%%%%%%%%%%%%%%%%%%%%%%%%%%%%%%%%%%%%%%%
%%%%%%%%%%%%%%%%%%%%%%%%%%%%%%%%%%%%%%%%%%%%%%%%%%%%%%%%%%%%%%%%%%%%%%%%%%%%%%%
% APPENDIX
%%%%%%%%%%%%%%%%%%%%%%%%%%%%%%%%%%%%%%%%%%%%%%%%%%%%%%%%%%%%%%%%%%%%%%%%%%%%%%%
%%%%%%%%%%%%%%%%%%%%%%%%%%%%%%%%%%%%%%%%%%%%%%%%%%%%%%%%%%%%%%%%%%%%%%%%%%%%%%%
\newpage
\appendix
\crefalias{section}{appendix}
\crefalias{subsection}{appendix}
\crefalias{subsubsection}{appendix}
\onecolumn

\etocsettagdepth{main}{none}
\renewcommand{\contentsname}{Appendix}
\tableofcontents
\addtocontents{toc}{\protect\setcounter{tocdepth}{3}}
\label{appendix}

\input{src_camera/99_appendix}

%%%%%%%%%%%%%%%%%%%%%%%%%%%%%%%%%%%%%%%%%%%%%%%%%%%%%%%%%%%%%%%%%%%%%%%%%%%%%%%
%%%%%%%%%%%%%%%%%%%%%%%%%%%%%%%%%%%%%%%%%%%%%%%%%%%%%%%%%%%%%%%%%%%%%%%%%%%%%%%

\end{document}

%% file: config/packs.tex
\usepackage[utf8]{inputenc} % allow utf-8 input
\usepackage[T1]{fontenc}    % use 8-bit T1 fonts
\usepackage{hyperref}       % hyperlinks
\usepackage{url}            % simple URL typesetting
\usepackage{booktabs}       % professional-quality tables
\usepackage{amsfonts}       % blackboard math symbols
\usepackage{nicefrac}       % compact symbols for 1/2, etc.
\usepackage{microtype}      % microtypography
\usepackage{xcolor}         % colors
\usepackage{mathrsfs}

\usepackage{amsmath}
\usepackage{amssymb}
\usepackage{mathtools}
\usepackage{amsthm}
\usepackage{lipsum}
\usepackage{graphicx}
\usepackage{array}
\usepackage{latexsym}
\usepackage{bm}
\usepackage{xspace}
\usepackage{physics}
\usepackage{colortbl}
\usepackage{algorithm}
\usepackage{algorithmic}
\usepackage{url}
\usepackage{multirow}
\usepackage{multicol}
\usepackage{enumitem}
\usepackage{pifont}
\usepackage[subrefformat=parens]{subcaption}
\usepackage{xifthen}  % ifthenelseを使うために必要
\usepackage{xparse}  % set default argument of LaTeX function
\usepackage{ifthen}  % packages for conditional branch
\usepackage{lib/eclbkbox}
\usepackage{tikz-cd}
\usetikzlibrary{automata, positioning}
\usepackage{leftidx}
\usepackage{tcolorbox}
\usepackage{lib/original}
\usepackage{tabularx}
\usepackage{thmtools, thm-restate}
\usepackage{etoc}
\usepackage[capitalize,noabbrev]{cleveref}
\usepackage[textsize=tiny]{todonotes}

%% file: config/define.tex
\input{config/basic_notation}

\newcommand{\T}{T}
\renewcommand{\t}{t}
\renewcommand{\i}{i}

\newcommand{\tautau}{{\tau+1}}
\renewcommand{\d}{d}
\newcommand{\n}{n}
\newcommand{\tindex}[1]{_{#1}}

\newcommand{\est}[1]{\widehat{#1}}
\newcommand{\ut}{\tindex{\t}}

\newcommand{\utau}{\tindex{\tau}}
\newcommand{\utautau}{\tindex{\tau+1}}

\newcommand{\uwt}{\tindex{\realhisttreat\ut}}

\newcommand{\uwtau}{\tindex{\realhisttreat\utau}}

\newcommand{\utm}{\tindex{\t-1}}
\newcommand{\utaum}{\tindex{\tau-1}}
\newcommand{\uit}{\tindex{\i,\t}}

\newcommand{\uitau}{\tindex{\i,\tau}}
\newcommand{\uiT}{\tindex{\i,\T}}
\newcommand{\uitm}{\tindex{\i,\t-1}}

\newcommand{\uione}{\tindex{\i,1}}

\newcommand{\outcome}{Y}
\newcommand{\histoutcome}{\bar{\outcome}}
\newcommand{\outcomespace}{\mathcal{\outcome}}

\newcommand{\realoutcome}{y}

\newcommand{\treat}{W}
\newcommand{\histtreat}{\bar{\treat}}
\newcommand{\realtreat}{w}
\newcommand{\realhisttreat}{\bar{\realtreat}}
\newcommand{\treatspace}{\mathcal{\treat}}
\newcommand{\histtreatspace}{\bar{\mathcal{\treat}}}
\newcommand{\outcomefunc}{\mu\uwt}
\newcommand{\estoutcomefunc}{\est\mu\uwt}
\newcommand{\wt}{\realhisttreat\ut}
\newcommand{\wtau}{\realhisttreat\utau}
\newcommand{\wtm}{\realhisttreat\utm}

\newcommand{\covariate}{X}

\newcommand{\covariatespace}{\mathcal{\covariate}}

\newcommand{\histcovariate}{\bar{\covariate}}
\newcommand{\realcovariate}{x}

\newcommand{\observe}{Z}
\newcommand{\observespace}{\mathcal{\observe}}

\newcommand{\histobserve}{\bar{\observe}}
\newcommand{\realobserve}{z}
\newcommand{\realhistobserve}{\bar{\realobserve}}
\newcommand{\history}{H}
\newcommand{\historyspace}{\mathcal{\history}}
\newcommand{\histhistoryspace}{\bar{\historyspace}}
\newcommand{\histhistory}{\bar{\history}}
\newcommand{\realhistory}{h}
\newcommand{\realhisthistory}{\bar{\realhistory}}
\newcommand{\m}{m}
\newcommand{\adjfunc}{\m\uwt}
\newcommand{\adjfuncfull}{\adjfunc^{(\t)}(\realhisthistory\ut)}
\newcommand{\stackadjfunc}{m\ut}
\newcommand{\estadjfunc}{\est m\uwt}

\newcommand{\eststackadjfunc}{\est{m}\ut}
\newcommand{\stackprob}{p\ut}
\newcommand{\estprob}{\est\prob}
\newcommand{\eststackprob}{\est\prob\ut}
\newcommand{\estimator}{\theta}
\newcommand{\estestimator}{\est\estimator}
\newcommand{\Gfunc}{\Gamma}
\newcommand{\estGfunc}{\est\Gfunc}
\newcommand{\correct}{A}
\newcommand{\estcorrect}{\est{\correct}}
\newcommand{\momentfunc}{\psi}

\newcommand{\nuisance}{\eta}

\newcommand{\fold}{L}
\newcommand{\regmodel}{\mathcal{M}}
\newcommand{\transmodel}{\mathcal{T}}

\NewDocumentCommand{\ind}{ m m o }{%
  \IfNoValueTF{#3}{
    #1 \perp\!\!\!\perp #2
  }{
    #1 \perp\!\!\!\perp #2 \mid #3
  }%
}
\NewDocumentCommand{\nind}{ m m o }{%
  \IfNoValueTF{#3}{
    #1 \not\!\!\,\perp\!\!\!\perp #2
  }{
    #1 \not\!\!\,\perp\!\!\!\perp #2 \mid #3
  }%
}
\NewDocumentCommand{\dsep}{ m m o }{%
  \IfNoValueTF{#3}{
    #1 \bowtie #2
  }{
    #1 \bowtie #2 \mid #3
  }%
}

\AtBeginDocument{%
  \providecommand\BibTeX{{%
    \normalfont B\kern-0.5em{\scshape i\kern-0.25em b}\kern-0.8em\TeX}}}

\definecolor{ForestGreen}{rgb}{0.13, 0.55, 0.13} % Approximate ForestGreen.
\definecolor{BrickRed}{rgb}{0.80, 0.25, 0.33}    % Approximate BrickRed.

%% file: config/basic_notation.tex
\NewDocumentCommand{\lookback}{m o}{%
  {\textcolor[HTML]{ff0000}{#1}}%
  \IfValueT{#2}{%
    \noindent{\color{blue}~(Note: #2)}%
  }%
}

\graphicspath{{./fig/}}

\allowdisplaybreaks[4]

\crefname{equation}{Eq.}{Eqs.}
\crefname{assumption}{Assumption}{Assumptions}
\NewDocumentCommand{\Exp}{ m o }{%
\mathbb{E}\IfValueT{#2}{_{#2}}\!\left[#1\right]%
}
\newcommand{\Var}[1]{\operatorname{Var}\!\left(#1\right)}

\newcommand{\Prob}{P}
\newcommand{\prob}{p}

\newcommand{\R}{\mathbb{R}}

\renewcommand{\P}{\mathbb{P}}
\newcommand{\E}{\mathbb{E}}
\newcommand{\G}{\mathbb{G}}

\newcommand{\1}{1\!\mathrm{l}}
\newcommand{\inner}[2]{\left\langle#1, #2\right\rangle}

\renewcommand{\mid}{\,|\,}

\newcommand{\mypara}[1]{\noindent\textbf{#1.}}

\declaretheorem[name=Theorem]{theorem}
\declaretheorem[name=Lemma,sibling=theorem]{lemma}

\newtheorem{remark}{Remark}
\newtheorem{definition}{Definition}
\newtheorem{assumption}{Assumption}

\DeclareMathAlphabet{\mathsfit}{\encodingdefault}{\sfdefault}{m}{sl}
\SetMathAlphabet{\mathsfit}{bold}{\encodingdefault}{\sfdefault}{bx}{n}

%% file: config/metadata.tex
% The \icmltitle you define below is probably too long as a header.
% Therefore, a short form for the running title is supplied here:
\icmltitlerunning{Modeling Covariate Transition for Efficient Estimation of Longitudinal Treatment Effects in Randomized Experiments}

\twocolumn[
    \icmltitle{Modeling Covariate Transition for Efficient Estimation of \\ Longitudinal Treatment Effects in Randomized Experiments}
    % \icmltitle{Title}
    
    % It is OKAY to include author information, even for blind
    % submissions: the style file will automatically remove it for you
    % unless you've provided the [accepted] option to the icml2026
    % package.
    
    % List of affiliations: The first argument should be a (short)
    % identifier you will use later to specify author affiliations
    % Academic affiliations should list Department, University, City, Region, Country
    % Industry affiliations should list Company, City, Region, Country
    
    % You can specify symbols, otherwise they are numbered in order.
    % Ideally, you should not use this facility. Affiliations will be numbered
    % in order of appearance and this is the preferred way.
    % \icmlsetsymbol{equal}{*}
    \icmlsetsymbol{intern}{\textdagger}
    
    \begin{icmlauthorlist}
    \icmlauthor{Naoki Chihara}{sanken,intern}
    \icmlauthor{Tatsushi Oka}{keio}
    \icmlauthor{Yasuko Matsubara}{sanken}
    \icmlauthor{Yasushi Sakurai}{sanken}
    % \icmlauthor{Firstname5 Lastname5}{yyy}
    % \icmlauthor{Firstname6 Lastname6}{sch,yyy,comp}
    % \icmlauthor{Firstname7 Lastname7}{comp}
    %\icmlauthor{}{sch}
    % \icmlauthor{Firstname8 Lastname8}{sch}
    \icmlauthor{Shota Yasui}{cyber}
    %\icmlauthor{}{sch}
    %\icmlauthor{}{sch}
    \end{icmlauthorlist}
    
    \icmlaffiliation{sanken}{SANKEN, The University of Osaka, Osaka, Japan.}
    \icmlaffiliation{cyber}{CyberAgent, Inc., Tokyo, Japan}
    \icmlaffiliation{keio}{Keio University, Tokyo, Japan.}
    
    \icmlcorrespondingauthor{Naoki Chihara}{naoki88@sanken.osaka-u.ac.jp}
    % \icmlcorrespondingauthor{Firstname2 Lastname2}{first2.last2@www.uk}
    
    % You may provide any keywords that you
    % find helpful for describing your paper; these are used to populate
    % the "keywords" metadata in the PDF but will not be shown in the document
    \icmlkeywords{Machine Learning, ICML}
    
    \vskip 0.3in
]

% this must go after the closing bracket ] following \twocolumn[ ...

% This command actually creates the footnote in the first column
% listing the affiliations and the copyright notice.
% The command takes one argument, which is text to display at the start of the footnote.
% The \icmlEqualContribution command is standard text for equal contribution.
% Remove it (just {}) if you do not need this facility.

% \printAffiliationsAndNotice{}  % leave blank if no need to mention equal contribution
\printAffiliationsAndNotice{\textsuperscript{\textdagger}Work done during research internship at CyberAgent AI Lab.}
% \printAffiliationsAndNotice{\icmlEqualContribution} % otherwise use the standard text.

%% file: src_camera/0_abstract.tex
%%% Plan 1. 
% We study how to estimate treatment effects of sequential treatments by exploiting post-treatment historical data in randomized experiments.
%%% Plan 2.
% Given ...,
% How can we estimate the treatment effects in randomized experiments across several periods under static regimes?
%%% Plan 3.
We present a regression-adjustment framework designed for the estimation of longitudinal treatment effects in randomized experiments under static regimes.
% We study the estimation of treatment effects in randomized experiments over multiple periods under static regimes.
While regression-adjustment methods are useful for variance reduction in randomized experiments by using pre-treatment covariates,
they usually focus only on average effects,
from which we cannot obtain valuable insights into when the effects appear and how long they continue.
To address this issue,
we consider intermediate outcomes and evolving post-treatment covariates over time, and
we represent such dynamic trajectories using transition kernels.
Furthermore, we establish the asymptotic normality and the semiparametric efficiency bound for our estimator, enabling more powerful statistical inference.
Simulation studies and empirical analysis using A/B test data from a streaming platform in Japan show the practical advantages of our method.

%% file: src_camera/1_introduction.tex
Randomized experiments~\cite{rct} have been the gold standard for inferring causal effects and are widely employed across various domains,
including
medicine~\cite{Rubin1997-ph} and
economics~\cite{Duflo2007-yx,Athey2017-zx}.
% and sociology~\cite{Baldassarri2017-ba}.
% In modern applications, particularly online platforms, these experiments are often conducted over extended periods, and
% the cumulative effect from longitudinal treatments evolves dynamically over time.
% For instance, in streaming services, a recommendation algorithm assigned today influences user preferences tomorrow, which in turn affects their future engagement.
% In such {longitudinal randomized experiments}, covariates evolve dynamically in response to the longitudinal treatments, creating a complex dependency structure.
% These temporal dynamics increase outcome variability and induce complex dependence, limiting statistical power in practice~\cite{Lewis2015-ke}.
However, the statistical power is limited, and large sample sizes are needed to obtain reliable results, as reported in online experiments~\cite{Lewis2015-ke}.
To address this challenge, regression adjustment has been widely used to reduce the variance of treatment effect estimators by utilizing auxiliary covariates collected in randomized experiments~\cite{Lin2013-xv,Deng2013-ig,Imbens2015-mr,Bloniarz2016-bi,Oka2026-eh}.

% Despite their validity, randomized experiments often suffer from low detection power~\cite{Lewis2015-ke}, requiring prohibitively large sample sizes to obtain reliable resultse.
% To address the efficiency bottleneck, adaptive experimental designs (e.g., multi-armed bandits) have gained popularity for their ability to dynamically optimize treatment assignments~\cite{bandit}. 
% However, although adaptive designs are suitable for regret minimization, they pose significant challenges for statistical inference.
% Specifically, standard estimators can become biased under adaptive assignment, and valid inference for long-term effects becomes complicated~\cite{Shin2021-li,Hadad2021-nr}.
% Consequently, there remains a critical need for methods that can enhance estimation efficiency without altering the experimental design;
% that is, within the robust framework of completely randomized experiments.

Although regression adjustment under randomized experiments is widely employed for variance reduction in treatment effect estimation,
standard methods typically utilize only pre-treatment covariates to estimate the average treatment effect (ATE).
As a result,
these methods overlook important insights into the temporal evolution of effects induced by early treatments.
This limitation leads to a significant missed opportunity, especially in modern applications,
% such as online platforms,
where randomized experiments are often conducted over extended periods under static regimes.
For example, the ATE does not capture the essential temporal dynamics of the response, such as whether the treatment yields an immediate impact, exhibits a delayed effect, or maintains persistence over time.
% Specifically, longitudinal treatment effects characterize the temporal response profile of treatments.
% Therefore, new regression adjustment approaches to the estimation of longitudinal treatment effects are needed that can accurately capture the dynamic interaction between intermediate outcomes and evolving covariate transitions.
Capturing these temporal dynamics requires incorporating information from intermediate outcomes and evolving post-treatment covariates.
However, naively conditioning on such post-treatment variables allows for variance reduction but introduces post-treatment bias, distorting the target estimand even in randomized settings~\cite{Freedman2008-qx,Montgomery2018-tn}.
For instance, in streaming services, user preferences dynamically evolve in response to recommendation algorithms.
Adjusting for these evolving preferences as covariates blocks the causal pathway, thereby failing to capture the time-evolving impact of treatments on long-term engagement.

% Note that this setting does not aim to dynamically evaluate or optimize treatment assignments for adaptive experimental designs~\cite{bandit}, but rather to estimate the effects of specific treatments.
% In this setting, intermediate outcomes and evolving covariates interact dynamically over time in response to longitudinal treatments, meaning that covariate transitions contain a potential source of variance reduction.
% For instance, in streaming services, a recommendation algorithm assigned today influences user preferences tomorrow, thereby affecting future engagement.

In this paper, we propose a regression adjustment method for the estimation of longitudinal treatment effects under randomized experiments.
A key advantage of our framework is its ability to incorporate dynamic transitions in covariates induced by treatments into the estimation procedure.
To account for temporal dynamics,
we exploit transition kernels to represent how past treatments shape future histories and
propagate outcome predictions forward along this kernel.
This forward integration aggregates information contained in post-treatment covariates while preserving the marginal target estimand, thereby avoiding the post-treatment bias inherent in naive conditioning.
As a by-product, this explicit modeling allows us to directly incorporate domain knowledge regarding system dynamics, such as seasonality or known exogenous shocks, into the transition mechanism.

In addition, the proposed method builds upon the Neyman-orthogonal moment conditions~\cite{Chernozhukov2018-jn,Chernozhukov2022-ar},
which provide robustness against first-order estimation errors in high-dimensional or complex nuisance components.
These nuisance functions are estimated using flexible machine learning methods, making our method model-agnostic.
Incorporating cross-fitting further strengthens robustness against estimation errors.
With this design,
we derive the asymptotic distribution of the
proposed estimator and establish its semiparametric efficiency bound under mild regularity conditions.
These results enable statistical inference, including standard error estimation and the construction of confidence intervals.

\mypara{Contributions}
The main contributions of this paper are:\linebreak
(i) we present the dynamic regression-adjustment method using transition kernels and forward integration under randomized experiments;
(ii) we establish asymptotic normality and provide valid inference under cross-fitted machine learning nuisance estimation, generalizing beyond traditional settings in causal inference;
(iii) we derive the semiparametric efficiency bound for our estimator and show that it attains this bound;
(iv) simulation studies and empirical analyses demonstrated the practical advantages of our approach.

\mypara{Outline}
The rest of our paper is organized in a conventional format.
After the introduction, we review related works in \cref{section:related_work} and provide our problem definition in \cref{section:problem_definition}.
Next, we introduce the dynamic regression-adjusted estimator and its estimation procedure in \cref{section:method} and the asymptotic properties of our results in \cref{section:theoretical_anslysis}.
We then provide our experimental results and discussion in \cref{section:experiments}, followed by a conclusion in \cref{section:conclusion}.
All of the proofs for our theoretical results are provided in \cref{appendix:proofs}.

% \mypara{Reproducibility}
% Code is publicly available at:
% \url{https://github.com/C-Naoki/dynamic-adjustment}.

%% file: src_camera/2_related_work.tex
% \input{components/table_capabilities}
% Here, we briefly describe investigations related to our work.
% \cref{table:capability} summarizes five relative advantages of our method.

\mypara{Longitudinal treatment effects}
Early seminal works in epidemiology and biostatistics established principled frameworks for estimating longitudinal treatment effects~\cite{g-computation,msms,snmms}.
Recently, many researchers in the machine learning community have been studying sequential decision making in adaptive experimental designs, such as multi-armed bandits~\cite{bandit}.
The adaptive nature of these approaches poses significant challenges for statistical inference, and it has been reported that they may introduce bias and make it harder to estimate causal effects~\cite{Shin2021-li,Hadad2021-nr}.
In this context, policy evaluation in contextual bandits and reinforcement learning~\cite{Dudik2011-pb,Jiang2016-hi,Kallus2020-ei,Hadad2021-nr}, as well as dynamic treatment regimes~\cite{dynamic-treatment,Robins2004-gi,Zhang2013-zg,Lewis2021-et,Bradic2024-ss}, primarily aims to estimate the value of adaptive decision rules.
% However, these works generally focus on the evaluation and optimization of dynamic policies; in other words, their applicability is restricted to settings where a well-defined policy exists.
% In contrast, our work addresses the estimation of arbitrary static longitudinal treatments.
Deep learning approaches have also been proposed for counterfactual prediction from temporal trajectories~\cite{r-msn,crn,g-net,causal-transformer,Frauen2025-id}, while our focus is on estimators with formal statistical inference and efficiency analysis under randomized experiments.
From a different perspective, surrogate index methods~\cite{Prentice1989-eh,surroage-index} infer long-term treatment effects from short-term experimental results under strong surrogacy assumptions.
Despite their respective strengths, the majority of these existing methods focus on estimating a single cumulative outcome, thereby obscuring the rich temporal dynamics of how treatment effects evolve at each specific time point.
While the practical importance of tracking how treatment effects evolve over time has been widely recognized in large-scale online experiments~\cite{Hohnhold2015-so}, formal efficiency theory for the full trajectory of marginal treatment effects in static randomized experiments remains limited.
% Although recent advances in panel data analysis and online experiments emphasize the importance of tracking dynamic treatment effects~\cite{hohnhold2015focusing,callaway2021difference}, existing variance reduction methods in randomized settings have yet to fully exploit dynamic covariate adjustments to efficiently estimate these time-varying trajectories.
% Some causal discovery studies have explored time-varying causal structures in nonstationary data~\cite{modeplait,Mameche2025-hx}, but they do not formulate the problem based on the potential outcome framework.

% \vspace{0.3em}
\mypara{Regression adjustment}
Randomized experiments have been widely used for the unbiased identification of the parameter of interest, but they often suffer from relatively low statistical power, as reported in online experiments~\cite{Lewis2015-ke}.
To tackle this challenge,
regression adjustment using auxiliary covariates to improve precision in treatment effect estimation has been extensively studied~\cite{Fisher1925-sa,Cochran1977-ex,Yang2001-gj,Rosenbaum2002-ry,Freedman2008-yf,Tsiatis2008-ki,Rosenblum2010-sv,Lin2013-xv,Deng2013-ig,Ghadiri2023-lr}.
Recent work extends covariate adjustment to high-dimensional settings and modern machine learning, enabling variance reduction with many auxiliary variables~\cite{Wager2016-ym,Bloniarz2016-bi,Poyarkov2016-mf,Guo2021-ug,Lei2021-ab,Masoero2023-lm,Chiang2025-bb}.
However, post-treatment covariates dynamically evolve in response to past treatments, and naive conditioning on them can distort the target estimand, even in randomized experiments~\cite{Freedman2008-qx,Montgomery2018-tn}.
Our research tackles this difficulty using recursive forward integration based on transition kernels to exploit post-treatment histories for variance reduction while preserving the marginal estimand under longitudinal randomized experiments.

% \vspace{0.3em}
\mypara{Semiparametric estimation}
Our approach is grounded in semiparametric efficiency theory, which addresses the challenge of estimating low-dimensional parameters of interest in the presence of possibly infinite-dimensional nuisance components~\cite{Klaassen1987-bq,Robinson1988-xc,Bickel1993-zx,Andrews1994-al,Newey1994-ni,Robins1995-ua}.
Building on this line of work, recent developments in double/debiased machine learning (DML) adapt semiparametric methods to accommodate flexible machine learning estimators for nuisance functions~\cite{Ichimura2022-qp,Byambadalai2024-fa,Byambadalai2025-zz,Byambadalai2025-xo,Ahrens2025-ow}.
Specifically, our work leverages Neyman-orthogonal moment conditions~\cite{Neyman1959-po,Chernozhukov2022-ar} combined with cross-fitting~\cite{Chernozhukov2018-jn}.
This strategy mitigates the impact of slow convergence rates and overfitting of complex nuisance estimators, yielding valid asymptotic distributions.

% To the best of our knowledge, this is the first work to propose a regression adjustment method for estimating longitudinal treatment effects and to show that it attains the semiparametric efficiency bound for our estimator under randomized experiments.

To the best of our knowledge, this is the first work to develop a regression-adjustment estimator for longitudinal treatment effects in randomized experiments that explicitly leverages covariate transition modeling using transition kernels and to establish that our algorithm attains the semiparametric efficiency bound for our estimator under this formulation.

%% file: src_camera/3_problem_definition.tex
First, we introduce key notations used in this paper. Please see \cref{appendix:notation} for details.
We consider randomized controlled trials (RCTs) with longitudinal treatments $\{\treat\ut\}_{\t=1}^{\T}$ under static regimes,
where $\T$ is the time horizon and
$\treat\ut\in\treatspace\ut\coloneqq\{1,\ldots,\abs{\treatspace\ut}\}$ is the treatment assignment at time $\t$.
Let $\outcome\ut\in\outcomespace\ut\subseteq\R$ denote the scalar-valued observed outcome of interest at time $\t$ and $\covariate\ut\in\covariatespace\ut\subseteq\R^{\d}$ denote the covariates at time $\t$.
For each time $\t$, we observe $\n$ random samples $\{\observe\uit\}_{i=1}^\n=\{(\covariate\uit,\treat\uit,\outcome\uit)\}_{i=1}^\n$
from a distribution on the product space $\observespace\ut\coloneqq\covariatespace\ut\times\treatspace\ut\times\outcomespace\ut$.
We denote $\histcovariate\uit=\{\covariate_{i,1},\ldots,\covariate\uit\}$ with $\histtreat\uit,\histoutcome\uit$ defined analogously,
and $\observe\ut\in\observespace\ut$ denotes the generic data point at time $\t$.
% We can use $\{\histobserve\ui\}_{i=1}^\n\coloneqq\{\histobserve\uiT\}_{i=1}^\n$ for all samples in our study.
The probability of treatment assignment $\realhisttreat\ut$ is denoted as $\pi\uwt=\Prob(\histtreat\uit=\wt)$ satisfying $\sum_{\wt\in\histtreatspace\ut}\pi\uwt=1$, while $\n\uwt$ indicates the number of observations in treatment history $\wt$ satisfying $\sum_{\wt\in\histtreatspace\ut}\n\uwt=\n$.
% The empirical analog of $\pi\uwt$ is given by $\est{\pi}\uwt\coloneqq\n\uwt/\n$.
Our work follows the potential outcome framework~\cite{Rubin1974-gx,Imbens2015-mr} in accordance with conventional settings and we let $\outcome\ut(\wt)$ and $\covariate\ut(\wtm)$ denote the potential outcome and covariate under longitudinal treatments $\wt$.

\begin{remark}[Feasible longitudinal treatments $\histtreat\ut$]
    Generally, the feasible longitudinal treatment set $\histtreatspace\ut$ is the support of $\histtreat\ut$.
    For conventional designs, the cardinality $|\histtreatspace\ut|$ is small (e.g., $|\histtreatspace\ut| = 2$ in static A/B testing).
\end{remark}
Here, we provide the assumptions used in this paper.
\begin{assumption}[SUTVA]
    \label{assume:sutva}
    No interference and consistency.
\end{assumption}
\begin{assumption}[Independent units]
    \label{assume:iid}
    For any $\t\in[\T]$,
    the random samples $\{\observe\uit\}_{i=1}^\n$ are i.i.d across units.
\end{assumption}
\begin{assumption}[Randomization]
    \label{assume:exchangeability}
    Treatment assignment is independent, i.e.,
    $\ind{\{\histoutcome\ut(\realhisttreat\ut), \histcovariate\ut(\realhisttreat\utm)\}}{\histtreat\ut},\ \forall\,\t\in[\T]$.
\end{assumption}
\begin{assumption}[Positivity]
    \label{assume:positivity}
    $0<\pi\uwt<1,\ \forall\,\realhisttreat\ut\in\histtreatspace\ut$.
\end{assumption}
\subsection{Parameters of Interest}
Our parameters of interest are the expected potential outcomes at time $\t$, as follows:
\begin{align}
    \outcomefunc(\t)=\E[\outcome\ut(\realhisttreat\ut)],
\end{align}
where $\wt\in\histtreatspace\ut$.
Based on the parameters,
we define the longitudinal average treatment effect (ATE) between treatments $\wt,\wt'$ as
\begin{align}
    \label{eq:ate}
    \text{ATE}(\t)=\outcomefunc(\t)-\mu_{\wt'}(\t).
\end{align}
Under \crefrange{assume:sutva}{assume:positivity},
the expected outcomes $\outcomefunc(\t)$ are identifiable without any covariates
while potential outcomes $\{\outcome(\wt)\}_{\wt\in\histtreatspace\ut}$ are unobserved variables.
Specifically,
a simple estimator for $\outcomefunc(\t)$ is given by
\begin{align}
    \label{eq:simple_estimator}
    \estoutcomefunc^{emp}(\t)\coloneqq\frac{1}{\n\uwt}\sum_{i:\histtreat\uit=\wt}\outcome\uit,
\end{align}
While this estimator is unbiased and consistent, we aim to enhance its precision using observed historical data.

%% file: src_camera/4_method.tex
% \subsection{Empirical Estimator}
% Under the RCT settings,
% we can calculate the average causal effect without any covariates,
% unlike observational studies.
% A simple estimator for the potential outcome $\outcomefunc(\t)$ is given by:
% \begin{align}
%     \label{eq:simple_estimator}
%     \estoutcomefunc^{emp}(\t)\coloneqq\frac{1}{\n\uwt}\sum_{i:\histtreat\uit=\wt}\outcome\uit,
% \end{align}
% for each treatment
% In this section, we present a dynamic regression-adjustment framework that leverages observed longitudinal histories through covariate-transition modeling. We provide a cross-fitted estimation procedure and summarize how to conduct inference using the estimator’s explicit influence function.
In this section, we present a new estimator that leverages observed historical data through covariate transition modeling.
We also provide cross-fitted estimation and statistical inference procedures.
\subsection{Regression-Adjusted Estimator}
We begin by introducing the historical data $\histhistory\ut(\wtm)=(\histcovariate\ut(\wtm),\histoutcome\utm(\wtm))\in\histhistoryspace\ut$.
Note that the historical data at time $\t=1$ only has pre-treatment covariates, i.e., $\history_1=\covariate_1$ and $\outcome_0=0$.
We adopt a regression adjustment framework to incorporate historical data $\histhistory\ut(\wtm)$ into an estimation procedure.
% Note that the historical data can contain the past outcome $\histoutcome\utm(\wtm)$.
For each time $\t\in[\T]$ and longitudinal treatment $\wt\in\histtreatspace\ut$, we define the mean regression function $\adjfuncfull$ as
\begin{align}
    \label{eq:adjust_func}
    \adjfuncfull=\Exp{\outcome\ut(\wt)\mid\histhistory\ut(\wtm)=\realhisthistory\ut}.
\end{align}
The conditional mean function can be estimated using various supervised learning methods, such as LASSO, random forests, boosted trees, or deep neural networks.
Under \cref{assume:sutva,assume:iid,assume:exchangeability,assume:positivity},
we rewrite $\outcomefunc(\t)$ using the mean regression function $\adjfuncfull$ as
\begin{align}
    \label{eq:adj_outcome}
    \outcomefunc(\t)=\int_{\histhistoryspace\ut}\adjfuncfull\Prob_{\histhistory\ut}^{\wtm}(d\realhisthistory\ut),
\end{align}
where $\Prob_{\histhistory\ut}^{\wtm}$ is the probability measure on $\histhistoryspace\ut$ under $\wtm$.
A natural estimator of $\outcomefunc(\t)$ is given by
\begin{align}
    \label{eq:adj_estimator}
    \begin{split}
        \estoutcomefunc^{adj}(\t)
        % &=\int_{\bar{\covariatespace\ut}}\estadjfuncfull\P_{\histhistory\ut}^{emp,\wt}(d\realhisthistory\ut)\\
        &=\frac{1}{\n\uwt}\sum_{i:\histtreat\uit=\wt}\left(\outcome\uit-\estadjfunc^{(\t)}(\histhistory\uit)\right)\\
        &+\frac{1}{\n_{\wtm}}\sum_{i:\histtreat\uitm=\wtm}\estadjfunc^{(\t)}(\histhistory\uit),
    \end{split}
\end{align}
where $\estadjfunc^{(\t)}$ is an estimator for $\adjfunc^{(\t)}$.
% \begin{align}
%     \P_{\histhistory\ut}^{emp,\wt}=\frac{1}{\n_{\wt}}\sum_{i:\histtreat\uitm=\wt}\delta_{\histhistory\uit}
% \end{align}
The estimator takes the form of the well-known augmented inverse-propensity weighting estimator.
However, the empirical support for $\histhistory\ut$ under the longitudinal treatments $\wtm$ can be sparse, which leads to a high-variance estimate of \cref{eq:adj_estimator}.
In particular,
when the units in the group $\{i:\histtreat\uitm=\realhisttreat\utm\}$ coincide with those in $\{i:\histtreat\uit=\realhisttreat\ut\}$,
we have $\estoutcomefunc^{emp}(\t)=\estoutcomefunc^{adj}(\t)$, which implies that no variance reduction is achieved.
\subsection{Dynamic Regression-Adjusted Estimator}
To mitigate the aforementioned problem,
we complement \cref{eq:adj_estimator} with a forward integration scheme over the dynamic trajectories in the historical data.
Let $\prob^{(\tau)}\uwtau(d\realhistory_{\tau+1}\mid\realhisthistory_\tau)$ be the transition kernel of $\history_{\tau+1}$ given $\realhisthistory\utau$ under $\wtau$ and let $\estprob^{(\tau)}\uwtau$ be an estimator for $\prob^{(\tau)}\uwtau$,
where $\wtau\coloneqq(\realtreat_1,\ldots,\realtreat\utau)$ is the length-$\tau$ prefix of $\wt$.
The transition kernel can be learned using models that support conditional sampling, such as vector autoregressions, Gaussian processes, or deep neural networks.
We decompose the probability measure using the transition kernels as follows:
\begin{align*}
    \Prob_{\histhistory\ut}^{\wtm}(d\realhisthistory\ut)=\Prob_{\history_1}(d\realhistory_1)\prod_{\tau=1}^{\t-1}\prob\uwtau^{(\tau)}(d\realhistory\utautau\mid\realhisthistory\utau).
\end{align*}
Substituting the above decomposition into \cref{eq:adj_outcome}, we obtain
\begin{align*}
    \outcomefunc(\t)=\int_{\histhistoryspace\ut}\adjfuncfull\Prob_{\history_1}(d\realhistory_1)\prod_{\tau=1}^{\t-1}\prob\uwtau^{(\tau)}(d\realhistory\utautau\mid\realhisthistory\utau).
\end{align*}
We define the iterated conditional expectation $\Gfunc\uwt^{(1)}(\realhistory_1)$ under treatment history $\wt$ as
\begin{align}
    \label{eq:G1}
    \Gfunc\uwt^{(1)}(\realhistory_1)=\int_{\historyspace_{2:\t}}\adjfuncfull\prod_{\tau=1}^{\t-1}\prob^{(\tau)}\uwtau(d\realhistory_{\tau+1}\mid\realhisthistory_\tau),
\end{align}
and, recursively for $\tau=\t - 1,\ldots,1$,
\begin{align}
    \label{eq:Gtau}
    \Gfunc\uwt^{(\tau)}(\realhisthistory_\tau)=\int_{\historyspace_{\tau+1}}\Gfunc\uwt^{(\tau+1)}(\realhisthistory\utautau)\prob^{(\tau)}\uwtau(d\realhistory\utautau\mid\realhisthistory_\tau),
\end{align}
and $\Gfunc\uwt^{(\t)}(\realhisthistory_\t)=\adjfuncfull$.
% Regarding this expectation,
% we have $\Exp{\Gfunc\uwt^{(1)}(\history_1)}=\outcomefunc(\t)$.
Intuitively, $\Gfunc\uwt^{(\tau)}(\realhisthistory_\tau)$ is the conditional expectation of the terminal regression $\adjfuncfull$ after propagating the remaining future history from time $\tau+1$ to $t$ according to the transition kernels under $\wt$.
Consequently, the target estimand satisfies $\outcomefunc(t)=\E[\Gfunc\uwt^{(1)}(\covariate_1)]$.
Then, the resulting estimator augments the residual average with a fully marginalized iterated conditional expectation:
\begin{align}
    \label{eq:tadj_estimator}
    \begin{split}
        % \estoutcomefunc^{dy\text{-}adj}(\t)
        % &=\frac{1}{\n\uwt}\sum_{i:\histtreat\uit=\wt}\left(\outcome\uit-\estadjfunc(\t,\histhistory\uit)\right)\\
        \estoutcomefunc^{dy\text{-}adj}(\t)
        &=\frac{1}{\n\uwt}\sum_{i:\histtreat\uit=\wt}
        \left(\outcome\uit-\estGfunc^{(\t)}\uwt(\histhistory\uit)\right)\\
        &+\frac1\n\sum_{i=1}^\n
        \estcorrect^{(\t)}\uwt(\histhistory\uit)
        +\frac1\n\sum_{i=1}^\n
        \estGfunc^{(1)}\uwt(\covariate\uione),
    \end{split}
\end{align}
where $\estGfunc^{(1)}\uwt$ is an estimator for $\Gfunc^{(1)}\uwt$ and $\estcorrect^{(\t)}\uwt$ is an estimator for the auxiliary correction term $\correct^{(\t)}\uwt$ defined as follows:
\begin{align}
    \label{eq:Ap}
    \begin{split}
        &\correct^{(\t)}\uwt(\histhistory\ut)
        =\sum_{\tau=1}^{\t-1}\frac{\1\{\histtreat_\tau=\realhisttreat_\tau\}}{\pi_{\realhisttreat_\tau}}\Bigg\{\Gfunc\uwt^{(\tau+1)}(\histhistory_{\tau+1})\\
        &\quad~-\int_{\historyspace_{\tau+1}}\Gfunc\uwt^{(\tau+1)}(\histhistory_{\tau},\realhistory_{\tau+1})\prob^{(\tau)}\uwtau(d\realhistory_{\tau+1}\mid\histhistory_\tau)\Bigg\}.
    \end{split}
\end{align}
The second term in \cref{eq:tadj_estimator} compensates for discrepancies between realized and one-step transitions via the auxiliary term $\correct^{(\t)}\uwt$, and
the third integrates $\estadjfunc$ along the transition under $\wt$.
The transition $\estprob^{(\tau)}\uwtau$ supplies a forward model for the trajectory of covariates, enabling a stable approximation to \cref{eq:adj_outcome}.
The conditional regression functions $\adjfunc$ and the transition kernels $\prob^{(\tau)}\uwtau$ are treated as nuisance functions.

\begin{remark}[Role of transition kernels $\prob^{(\tau)}\uwtau$]
    Modeling covariate transition dynamics is significant to utilize post-treatment covariates, as simple adjustment leads to the cancellation issue described in \cref{eq:adj_estimator}.
    The transition kernel enables us to calculate the expected trajectory and isolate innovations, i.e., the stochastic deviations of realized covariates from their expectations.
    By capturing these innovations in the auxiliary correction term $\correct^{(\t)}\uwt(\histhistory\ut)$,
    we extract new information resolved during the experiment that is orthogonal to pre-treatment covariates, thereby achieving variance reduction without bias.
\end{remark}

\subsection{Moment Condition Problem}
\label{section:moment_condition_problem}
We rewrite our estimation problem as a moment condition problem.
This formulation is crucial for establishing the asymptotic properties of our estimator when nuisance functions are estimated via machine learning (ML).
% The observed data up until time $\t$ is written by $\observe\ut$.
% Let $\observe\ut=\{(\histcovariate\uit,\histtreat\uit,\histoutcome\uit)\}_{i=1}^\n$ be the observed data at time $\t$.
% We abbreviate $\adjfunc(\t,\cdot)$ as $\adjfunc(\t)$ and
We define $\stackadjfunc\coloneqq(\adjfunc^{(\t)}(\cdot))_{\wt\in\histtreatspace\ut}$ and $\prob\ut\coloneqq(\prob\uwt)_{\wt\in\histtreatspace\ut}$, where $\prob\uwt\coloneqq\{\prob^{(\tau)}\uwtau\}_{\tau=1}^{\t-1}$.
In addition, let $\estimator\ut=(\outcomefunc(\t))_{\wt\in\histtreatspace\ut}$ be the target estimand.
We define the moment functions as
\begin{align*}
    \momentfunc\ut^\pi(\histobserve\ut;\estimator\ut,\stackadjfunc,\prob\ut)
    =\left(\momentfunc\uwt^\pi(\histobserve\ut;\estimator\ut,\stackadjfunc,p\ut)\right)_{\wt\in\histtreatspace\ut}.
\end{align*}
In this equation, for each $\wt\in\histtreat\ut$,
\begin{align}
    \label{eq:moment_func}
    \begin{split}
        \momentfunc\uwt^\pi
        \coloneqq&\frac{\1\{\histtreat\ut=\wt\}\cdot(\outcome\ut-\Gfunc^{(\t)}\uwt(\histhistory\ut))}{\pi\uwt}\\
        &+\correct^{(\t)}\uwt(\histhistory\ut)
        +\Gfunc^{(1)}\uwt(\covariate_1)-\outcomefunc(\t),
    \end{split}
\end{align}
where $\1\{\cdot\}$ represents the indicator function.
The following lemma states what moment conditions are implied by our setup with a randomized experiment.
\begin{restatable}[Moment Conditions]{lemma}{MomentCond}
    \label{theorem:moment_condition}
    Under \cref{assume:sutva,assume:iid,assume:exchangeability,assume:positivity},
    we have the following moment conditions at any time $\t$:
    \begin{align}
        \label{eq:moment_condition}
        \mathbb{E}[\momentfunc\ut^\pi(\histobserve\ut;\estimator\ut,\stackadjfunc,p\ut)]=0,
    \end{align}
    where $\momentfunc\uwt^\pi(\histobserve\ut;\estimator\ut,\stackadjfunc,p\ut)$ is given in \cref{eq:moment_func}.
\end{restatable}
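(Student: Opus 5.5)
We show that each of the three pieces of $\momentfunc\uwt^\pi$ has the right expectation: the first term and $\correct^{(\t)}\uwt$ have mean zero, and $\Gfunc^{(1)}\uwt(\covariate_1)$ has mean $\outcomefunc(\t)$. Since the moment function is a vector indexed by $\wt\in\histtreatspace\ut$, it suffices to fix one $\wt$ and show $\E[\momentfunc\uwt^\pi]=0$. Throughout, we use \cref{assume:sutva} (consistency) to replace observed variables on the event $\{\histtreat\utau=\wtau\}$ by potential ones:
\begin{align*}
\histhistory_{\tau+1}=\histhistory_{\tau+1}(\wtau),\qquad \outcome\ut=\outcome\ut(\wt).
\end{align*}
We use \cref{assume:exchangeability} to drop the conditioning on the treatment indicator, and \cref{assume:positivity} so that division by $\pi$ is well defined.

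\textbf{Step 1 (residual term).} Take $\Gfunc^{(\t)}\uwt=\adjfunc^{(\t)}$ and condition on $\histhistory\ut$. Consistency and randomization give
\begin{align*}
\E\left[\1\{\histtreat\ut=\wt\}\left(\outcome\ut-\adjfunc^{(\t)}(\histhistory\ut)\right)\right]
=\pi\uwt\,\E\left[\outcome\ut(\wt)-\adjfunc^{(\t)}(\histhistory\ut(\wtm))\right].
\end{align*}
The right-hand side is zero by the tower property and the definition \cref{eq:adjust_func}. Dividing by $\pi\uwt$ shows the first term has mean zero.

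\textbf{Step 2 (correction term).} Consider the $\tau$-th summand of $\correct^{(\t)}\uwt$ and condition on $(\histtreat\utau,\histhistory\utau)$. On $\{\histtreat\utau=\wtau\}$, the conditional law of $\history_{\tau+1}$ given $\histhistory\utau$ is the law of $\history_{\tau+1}(\wtau)$ given $\histhistory\utau(\wtaum)$, by consistency and randomization. This law is by definition the kernel $\prob^{(\tau)}\uwtau(\cdot\mid\histhistory\utau)$. Hence
\begin{align*}
\E\left[\Gfunc^{(\tau+1)}\uwt(\histhistory_{\tau+1})\mid \histtreat\utau=\wtau,\histhistory\utau\right]
=\int_{\historyspace_{\tau+1}}\Gfunc^{(\tau+1)}\uwt(\histhistory_\tau,\realhistory_{\tau+1})\,\prob^{(\tau)}\uwtau(d\realhistory_{\tau+1}\mid\histhistory_\tau).
\end{align*}
So each braced difference has conditional mean zero. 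The weight $\1\{\histtreat\utau=\wtau\}/\pi\uwtau$ is measurable with respect to the conditioning variables, so each summand has mean zero, and so does $\correct^{(\t)}\uwt$.

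\textbf{Step 3 (plug-in term).} Show $\E[\Gfunc^{(1)}\uwt(\covariate_1)]=\outcomefunc(\t)$. Since $\covariate_1$ is pre-treatment, its law is $\Prob_{\history_1}$. Unrolling the recursion \cref{eq:Gtau} gives
\begin{align*}
\E\left[\Gfunc^{(1)}\uwt(\covariate_1)\right]=\int\adjfunc^{(\t)}\,d\Prob_{\history_1}\prod_{\tau=1}^{\t-1}\prob^{(\tau)}\uwtau .
\end{align*}
This is the kernel-decomposed form of \cref{eq:adj_outcome}, which equals $\E[\outcome\ut(\wt)]=\outcomefunc(\t)$ by the tower property.

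Summing Steps 1–3 gives $\E[\momentfunc\uwt^\pi]=0$.

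The main obstacle is Step 2. We must state carefully that the observed conditional law of $\history_{\tau+1}$, restricted to the arm $\{\histtreat\utau=\wtau\}$, coincides with the potential-outcome kernel $\prob^{(\tau)}\uwtau$. This needs the joint independence in \cref{assume:exchangeability} for the whole history, not only marginal independence. We also need to check that the treatments after time $\tau$ do not enter: they do not, because $\history_{\tau+1}$ depends only on $\wtau$. Integrability of $\Gfunc^{(\tau)}\uwt$ is taken as implicit.
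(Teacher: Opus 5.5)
Your proposal is correct and follows essentially the same route as the paper's proof. Both use the same three-term split: the residual term is handled by consistency, randomization and the definition of $m^{(t)}_{\bar w_t}$; each summand of the correction term has conditional mean zero given $(\bar H_\tau,\bar W_\tau)$ because the kernel integral equals that conditional expectation; and the plug-in term uses $\mathbb{E}[\Gamma^{(1)}_{\bar w_t}(X_1)]=\mu_{\bar w_t}(t)$. You also note explicitly that Step 2 needs $\bar H_{\tau+1}(\bar w_\tau)$ to be jointly independent of $\bar W_\tau$, which the paper leaves implicit.
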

The key to valid inference with ML estimators lies in the robustness of this moment condition to local perturbations in the nuisance functions.
\begin{restatable}[Neyman orthogonality]{lemma}{Neyman}
    \label{theorem:neyman}
    Let $\nuisance\ut=\{\stackadjfunc,\stackprob\}$ be the nuisance functions and let $\nuisance$ be any admissible perturbation in the same function class.
    Then, for any $\t\in\!\!\;[\T]$,
    \begin{align*}
        \frac{\partial}{\partial r}\Exp{\momentfunc\ut^\pi(\histobserve\ut;\estimator\ut,\nuisance\ut+r(\nuisance-\nuisance\ut))}\bigg|_{r=0}=0,
    \end{align*}
    where $r\in\R$ lies in a neighborhood containing $0$.
\end{restatable}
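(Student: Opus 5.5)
The plan is to compute the Gateaux derivative of $r\mapsto\Exp{\momentfunc\uwt^\pi(\histobserve\ut;\estimator\ut,\nuisance\ut+r(\nuisance-\nuisance\ut))}$ in a joint direction $(\delta m,\delta p)$, where $\delta m=m_{\wt}-\adjfunc^{(\t)}$ and $\delta p^{(\tau)}=p^{(\tau)}_{\wtau}-\prob^{(\tau)}\uwtau$. Each $\delta p^{(\tau)}$ is a signed kernel of total mass zero. I would do this for a fixed $\wt$, since the stacked moment is handled componentwise. Then I would show that the resulting sum telescopes to zero.

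\textbf{Step 1: perturbation of the recursion.} Define $\delta\Gfunc^{(\tau)}\coloneqq\partial_r\Gfunc^{(\tau)}\uwt|_{r=0}$. Differentiating \cref{eq:Gtau} by the product rule gives
\begin{align*}
\delta\Gfunc^{(\t)}=\delta m,\qquad
\delta\Gfunc^{(\tau)}(\realhisthistory_\tau)=\int\delta\Gfunc^{(\tau+1)}(\realhisthistory_\tau,\realhistory_{\tau+1})\prob^{(\tau)}\uwtau(d\realhistory_{\tau+1}\mid\realhisthistory_\tau)+\int\Gfunc^{(\tau+1)}\uwt(\realhisthistory_\tau,\realhistory_{\tau+1})\,\delta p^{(\tau)}(d\realhistory_{\tau+1}\mid\realhisthistory_\tau).
\end{align*}
Differentiating the $\tau$-th summand of $\correct^{(\t)}\uwt$ in \cref{eq:Ap}, which involves both $\Gfunc^{(\tau+1)}$ and $\prob^{(\tau)}$, produces exactly
\begin{align*}
\frac{\1\{\histtreat_\tau=\realhisttreat_\tau\}}{\pi_{\realhisttreat_\tau}}\Big\{\delta\Gfunc^{(\tau+1)}(\histhistory_{\tau+1})-\delta\Gfunc^{(\tau)}(\histhistory_\tau)\Big\}.
\end{align*}
The key algebraic point is that the perturbation of the kernel inside the correction is exactly the second term in the recursion above. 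Since $\pi$ is known and $\outcomefunc(\t)$ is fixed, the full derivative is
\begin{align*}
-\Exp{\tfrac{\1\{\histtreat\ut=\wt\}}{\pi\uwt}\delta\Gfunc^{(\t)}(\histhistory\ut)}+\sum_{\tau=1}^{\t-1}\Exp{\tfrac{\1\{\histtreat_\tau=\realhisttreat_\tau\}}{\pi_{\realhisttreat_\tau}}\big(\delta\Gfunc^{(\tau+1)}(\histhistory_{\tau+1})-\delta\Gfunc^{(\tau)}(\histhistory_\tau)\big)}+\Exp{\delta\Gfunc^{(1)}(\covariate_1)}.
\end{align*}

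\textbf{Step 2: identification of weighted expectations.} By \cref{assume:sutva,assume:exchangeability,assume:positivity}, for any integrable $f$,
\begin{align*}
\Exp{\1\{\histtreat_\tau=\realhisttreat_\tau\}f(\histhistory_{s})/\pi_{\realhisttreat_\tau}}=\Exp{f(\histhistory_s(\realhisttreat_{s-1}))}\quad\text{for } s\le\tau+1.
\end{align*}
This holds because $\histhistory_s$ depends only on treatments up to $s-1$, so consistency replaces observed histories by potential ones and randomization removes the indicator. Applying it, the $\tau$-th summand becomes $\Exp{\delta\Gfunc^{(\tau+1)}(\histhistory_{\tau+1}(\wtau))}-\Exp{\delta\Gfunc^{(\tau)}(\histhistory_\tau(\realhisttreat_{\tau-1}))}$. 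The first term becomes $-\Exp{\delta m(\histhistory_\t(\wtm))}$. The sum then telescopes to $\Exp{\delta\Gfunc^{(\t)}(\histhistory_\t(\wtm))}-\Exp{\delta\Gfunc^{(1)}(\covariate_1)}$. Adding the first and last terms of the derivative gives zero.

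Strictly speaking, Step 2 only needs the marginal laws of the potential histories; the conditional-kernel interpretation of $\prob^{(\tau)}\uwtau$ is used in \cref{theorem:moment_condition}, not here. Note also that the argument never evaluates $\Gfunc$ at the true values in an essential way. This is why orthogonality holds in every direction, and in fact the moment is exactly linear-invariant in $\delta m$.

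\textbf{Main obstacle.} I expect the technical burden to lie in justifying the interchange of $\partial_r$ with the expectations and with the nested integrals against $\prob+r\,\delta p$. I would first impose the admissibility conditions already implicit in the statement: uniformly bounded or square-integrable $\Gfunc^{(\tau)}$ along the path, and $\delta p^{(\tau)}$ with finite total variation and densities dominated by $\prob^{(\tau)}$. Under these conditions, dominated convergence applies, because the map $r\mapsto\Gfunc^{(\tau)}_r$ is a polynomial in $r$ of degree at most $\t-\tau+1$ with integrable coefficients. A secondary care point is checking the index bookkeeping in Step 2, namely that $\histhistory_\tau$ is unaffected by $\treat_\tau$. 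This is what allows the weight $\1\{\histtreat_\tau=\realhisttreat_\tau\}/\pi_{\realhisttreat_\tau}$ to reduce the law to that under $\realhisttreat_{\tau-1}$.
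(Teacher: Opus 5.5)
Your proof is correct and uses the same two ingredients as the paper. These are the inverse-probability identity $\mathbb{E}[\mathbf{1}\{\bar W_\tau=\bar w_\tau\}f(\bar H_s)/\pi_{\bar w_\tau}]=\mathbb{E}[f(\bar H_s(\bar w_{s-1}))]$, which follows from consistency and randomization, and the exact telescoping of the correction term once its inner integral is recognized as $\Gamma^{(\tau)}$.

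The only difference is the order of operations. The paper telescopes $\Phi(r)=\mathbb{E}[\psi^\pi_{\bar w_t}(\bar Z_t;\theta_t,m_r,p_r)]$ itself and obtains $\Phi(r)\equiv 0$ for every $r$. This makes the derivative claim immediate, removes your interchange-of-limits obstacle entirely, and gives exact invariance in both $m$ and $p$, not only in $\delta m$. You instead telescope the Gateaux derivative $\delta\Gamma^{(\tau)}$, so you must separately justify differentiating under the integrals; your polynomial-in-$r$ observation does handle this.
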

Neyman orthogonality implies that the moment condition is first-order robustness of the moment condition to errors in the nuisance estimates.
Leveraging this property, coupled with cross-fitting, allows us to obtain asymptotic normality of the dynamic regression-adjusted estimator under mild conditions,
despite using ML models for the nuisance functions.
\begin{restatable}[Sample moment condition]{lemma}{SampleMomentCond}
\label{theorem:sample_moment_condition}
For any $\t\in[\T]$,
the proposed dynamic regression-adjusted estimator $\estestimator\ut\coloneqq(\estoutcomefunc^{dy\text{-}adj}(\t))_{\wt\in\histtreatspace\ut}$,
where $\estoutcomefunc^{dy\text{-}adj}(\t)$ is defined in \cref{eq:tadj_estimator}, is uniquely obtained as the solution to the following sample moment condition:
\begin{align}
    \label{eq:sample_moment_condition}
    \frac 1\n\sum_{i=1}^\n\momentfunc^{\est\pi}\ut(\histobserve\uit;\estestimator\ut,\eststackadjfunc,\eststackprob)=0,
\end{align}
where $\eststackadjfunc$ and $\eststackprob$ denote the set of nuisance functions estimated via cross-fitted ML models produced by \cref{alg:overview}, and $\momentfunc^{\est\pi}\ut$ denotes the feasible score obtained from $\momentfunc^\pi\uwt$ in \cref{eq:moment_func} by replacing $\pi\uwt$ with its empirical analogs $\est\pi\uwt\coloneqq n\uwt/n$.
\end{restatable}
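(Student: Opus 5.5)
The plan is to write out the sample moment condition componentwise and observe that it is linear in the target parameter with a nonzero coefficient, so it can be solved explicitly. Fix $\t$ and $\wt\in\histtreatspace\ut$. The $\wt$-th component of \cref{eq:sample_moment_condition} is obtained from \cref{eq:moment_func} by replacing $\pi\uwt$ with $\est\pi\uwt=\n\uwt/\n$, and $(\Gfunc,\correct)$ with their cross-fitted plug-ins $(\estGfunc,\estcorrect)$. These plug-ins are built from $(\eststackadjfunc,\eststackprob)$ by \cref{alg:overview}. The component therefore reads
\begin{align*}
\frac1\n\sum_{i=1}^\n\left\{\frac{\1\{\histtreat\uit=\wt\}\,(\outcome\uit-\estGfunc^{(\t)}\uwt(\histhistory\uit))}{\n\uwt/\n}+\estcorrect^{(\t)}\uwt(\histhistory\uit)+\estGfunc^{(1)}\uwt(\covariate\uione)-\theta\right\}=0,
\end{align*}
where $\theta$ denotes the $\wt$-th coordinate of $\estestimator\ut$.

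The components are decoupled: the $\wt$-th moment function depends on $\estimator\ut$ only through its own coordinate $\outcomefunc(\t)$, which enters additively with coefficient $-1$. The whole system is thus a collection of $|\histtreatspace\ut|$ scalar linear equations $\frac1\n\sum_i a_i-\theta=0$, where the $a_i$ do not depend on $\theta$. Each therefore has the unique solution $\theta=\frac1\n\sum_i a_i$.

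It remains to show that this solution coincides with \cref{eq:tadj_estimator}:

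1. Simplify the first term, $\frac1\n\sum_i \frac{\n}{\n\uwt}\1\{\histtreat\uit=\wt\}(\cdots)$, to $\frac1{\n\uwt}\sum_{i:\histtreat\uit=\wt}(\outcome\uit-\estGfunc^{(\t)}\uwt(\histhistory\uit))$.
2. The remaining two averages are literally the second and third terms of \cref{eq:tadj_estimator}.
3. Inside $\estcorrect^{(\t)}\uwt$, the propensities $\pi_{\realhisttreat_\tau}$ should also be replaced by $\n_{\realhisttreat_\tau}/\n$, consistent with the definition of $\momentfunc^{\est\pi}\ut$ as the feasible score. This makes the correction term in the score identical to the estimator's correction term.

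Two side conditions need checking. First, well-definedness requires $\n\uwt>0$ (and $\n_{\realhisttreat_\tau}>0$). This holds whenever the estimator in \cref{eq:tadj_estimator} is defined, and with probability approaching one by \cref{assume:positivity,assume:iid}. Second, under cross-fitting, the nuisance evaluated at unit $i$ is the one trained on the folds excluding $i$; the same convention is used in \cref{eq:tadj_estimator} via \cref{alg:overview}, so the identity holds term by term.

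The main obstacle is bookkeeping rather than analysis. I must confirm that the estimator produced by \cref{alg:overview} uses exactly the same fold-specific nuisance evaluations and the same empirical propensities in both the residual term and the correction term $\estcorrect$ as the feasible score does. Once that matching is verified, uniqueness follows immediately from linearity with unit coefficient.
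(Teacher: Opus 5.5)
Your proposal is correct and follows the paper's proof: write out the $\bar w_t$-component of the feasible score, reduce the $1/\widehat{\pi}_{\bar w_t}$-weighted average to the within-group average $\frac{1}{n_{\bar w_t}}\sum_{i:\bar W_{i,t}=\bar w_t}(\cdots)$, and note that the equation is linear in that coordinate with coefficient $-1$, so its unique solution is \cref{eq:tadj_estimator}. Your extra bookkeeping (requiring $n_{\bar w_t}>0$, and using the same fold-specific nuisances and the same $\widehat{A}$ with empirical prefix normalizers on both sides) matches the paper, which assumes $n_{\bar w_t}>0$ and uses the same $\widehat{A}$ in the score and in the estimator.
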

This lemma shows that the solution to \cref{eq:sample_moment_condition} coincides with the vector whose components are the dynamic regression-adjusted estimators $\estoutcomefunc^{dy\text{-}adj}(\t)$ defined in \cref{eq:tadj_estimator}.
\cref{theorem:neyman,theorem:sample_moment_condition} ensure that the estimator $\estoutcomefunc^{dy\text{-}adj}(\t)$ is insensitive to the estimation errors in the nuisance functions.

\input{components/alg_regadj}
\subsection{Estimation Procedure}
Here, we explain our algorithm for computing the dynamic regression-adjusted estimator $\estoutcomefunc^{dy\text{-}adj}(\t)$ defined in \cref{eq:tadj_estimator}.
\cref{alg:overview} shows an overview of our estimation procedure.
To ensure the validity of our asymptotic results and avoid overfitting biases, we employ an $\fold$-fold cross-fitting strategy.
This procedure decouples nuisance function estimation from moment condition evaluation.
The estimation proceeds in two main stages:

% \textit{Nuisance Function Estimation}.
\mypara{Step 1. Nuisance estimation}
We randomly partition the observation indices into $\fold$ disjoint folds.
For each fold $\ell$,
we utilize the units in the remaining $\fold-1$ folds as the training set.
On this training set, we fit the supervised learning model $\regmodel$ to estimate the conditional mean function  $\estadjfunc^{(\t)}$ and the transition model $\transmodel$ to learn the transition kernel $\estprob\uwtau^{(\tau)}$.
These trained models are then applied to the units in fold $\ell$ to generate out-of-sample predictions and transition estimates.

\mypara{Step 2. Recursive integration}
We calculate the integral over the high-dimensional historical covariate space defined in \cref{eq:Gtau}.
However, since analytical integration is often intractable,
we approximate it using Monte Carlo integration. Specifically, for each unit in fold $\ell$,
we draw $S$ samples from the estimated transition kernel $\prob\uwtau^{(\tau)}$ to recursively compute $\estGfunc^{(\tau)}\uwt$ backwards from $\tau=\t-1$ down to $1$.

Lastly, these components are combined to compute the auxiliary correction term $\estcorrect\uwt^{(\t)}$ and the target estimator $\outcomefunc^{dy\text{-}adj}(\t)$.

\subsection{Statistical Inference}
Our estimator is easy to use for statistical inference, such as standard error estimation and the construction of confidence intervals, because it is based on Neyman-orthogonal moment conditions in \cref{theorem:neyman} and has an explicit influence function in \cref{eq:moment_func}.
In this section, we briefly explain two practical inference methods for the target parameter $\estimator\ut=(\outcomefunc(\t))_{\wt\in\histtreatspace\ut}$ and for contrasts such as longitudinal ATEs.
The asymptotic validity of these procedures is established in the next section.

\mypara{Analytical variance estimation}
The asymptotic normality result in \cref{theorem:asymptotic} allows us to estimate standard errors using the sample variance of the estimated influence functions.
Since the moment function $\momentfunc\ut^\pi$ derived in \cref{eq:moment_func} coincides with the efficient influence function (\cref{theorem:semipara}), the asymptotic covariance matrix $\Sigma\ut$ is consistently estimated by the empirical second moment of the scores:
\begin{align*}
    \est{\Sigma}\ut=\frac{1}\n\sum_{i=1}^\n
    \momentfunc\ut^{\est\pi}(\histobserve\uit;\estestimator\ut,\eststackadjfunc,\eststackprob)
    \momentfunc\ut^{\est\pi}(\histobserve\uit;\estestimator\ut,\eststackadjfunc,\eststackprob)^\top
\end{align*}
Using this consistent variance estimator,
a $(1-\alpha)$ Wald-type confidence interval for a linear contrast $c^\top\estimator\ut$ can be constructed as
\begin{align*}
    \left[c^\top\estestimator\ut\pm z_{1-\alpha/2}\sqrt{\frac{c^\top\est{\Sigma}\ut c}{\n}}\;\right],
\end{align*}
where $z_{1-\alpha/2}$ denotes the $(1-\alpha/2)$-quantile of the standard normal distribution.

\mypara{Multiplier bootstrap}
While the analytical approach yields closed-form standard errors,
it relies on asymptotic approximations that may be inaccurate in finite-sample settings.
The multiplier bootstrap~\cite{multiplier-bootstrap} provides an alternative approximation that can improve finite-sample accuracy relative to the normal approximation.
Importantly, in our semiparametric setting, this method avoids the computationally expensive step of refitting nuisance models for each bootstrap iteration~\cite{Chernozhukov2018-jn}.
Specifically, we first draw multipliers $\{\xi^{(b)}_i\}_{i=1}^\n$ for $b\in[B]$ independently of data, from a distribution with mean zero and unit variance (e.g., standard normal or Rademacher).
Then, for each $\t\in[\T]$, we compute
\begin{align*}
    G\ut^{(b)}\coloneqq\frac{1}{\sqrt\n}\sum_{i=1}^\n\xi^{(b)}_i\momentfunc\ut^{\est\pi}(\histobserve\uit;\estestimator\ut,\eststackadjfunc,\eststackprob).
\end{align*}
Conditional on the observed data, the empirical distribution of $\{G\ut^{(b)}\}_{b=1}^B$ serves as an approximation to the law of $\sqrt\n(\estestimator\ut-\estimator\ut)$.
A two-sided $(1-\alpha)$ confidence interval for a scalar contrast is then given by
% For a scalar contrast $c^\top\estimator\ut$,
% form bootstrap draws.
% Let $q_{\alpha/2}$ and $q_{1-\alpha/2}$ be the corresponding empirical quantiles.
\begin{align*}
    \left[c^\top\estestimator\ut-\frac{\est{q}_{1-\alpha/2}}{\sqrt\n},~~
    c^\top\estestimator\ut-\frac{\est{q}_{\alpha/2}}{\sqrt\n}\right],
\end{align*}
where $\est{q}_{\alpha}$ is the $\alpha$-th empirical quantile of $\{c^\top G_{t}^{(b)}\}_{b=1}^{B}$.
This bootstrap is computationally attractive in our setting because it operates only on the already computed influence function evaluations and therefore avoids repeating the training of the nuisance learners.

%% file: components/alg_regadj.tex
\begin{figure}[t]
\begin{algorithm}[H]
    % \small
    \caption{Dynamic regression-adjusted estimator}
    \label{alg:overview}
    \begin{algorithmic}[1]
        \REQUIRE (a) Observed data $\histobserve\uiT=\{(\histcovariate\uiT,\histtreat\uiT,\histoutcome\uiT)\}_{i=1}^\n$ \\
        \hspace{1.25em} (b) Number of folds $\fold$ \\
        \hspace{1.28em} (c) Learning models $\regmodel,\transmodel$ \\
        % \hspace{1.25em} (d) Transition learning model $\transmodel$ \\
        \hspace{1.25em} (d) Monte Carlo samples $S$
        % \ENSURE Adjusted Estimator $\{\estoutcomefunc^{t\text{-}adj}(\t)\}_{\t\in[\T],\wt\in\histtreatspace\ut}$
        \ENSURE Adjusted Estimator $\{\estestimator_1,\ldots,\estestimator_\T\}$
        % \STATE /* Step 1. Preprocessing */
        \STATE Create $\histhistory\uit\leftarrow(\histcovariate\uit,\histoutcome\uitm)$ for all $(\i,\t)\in[\n]\times[\T]$
        \STATE Randomly split the units into $\fold$ folds
        \STATE /* Step 1. \textsc{NuisanceEstimation} */
        % \FOR{$\ell=1,\ldots,\fold$}
            \FOR{$\ell\in[\fold]$, $\t\in[\T]$, $\wt\in\histtreatspace\ut$}
            % \FOR{$\t=1,\ldots,\T$}
            %     \FOR{$\wt\in\histtreatspace\ut$}
                    \STATE Train $\regmodel$ on data with $\wt$, excluding fold $\ell$
                    \STATE Obtain out-of-fold predictions $\estadjfunc^{(t)}(\histhistory\uit)$
                % \ENDFOR
            \ENDFOR
            \FOR{$\ell\in[\fold]$, $\tau\in[\T-1]$, $\wtau\in\histtreatspace\utau$}
            % \FOR{$\tau=1,\ldots,\T-1$}
                % \FOR{$\wtau\in\histtreatspace\utau$}
                    \STATE Train $\transmodel$ on data with $\wtau$, excluding fold $\ell$
                    \STATE Obtain out-of-fold samplers $\estprob\uwtau^{(\tau)}(\histhistory\uitau)$
                \ENDFOR
            % \ENDFOR
        % \ENDFOR
        \STATE /* Step 2. \textsc{RecursiveIntegration} */
        \FOR{$\t\in[\T]$, $\wt\in\histtreatspace\ut$}
        % \FOR{$\t=1,\ldots,\T$}
        %     \FOR{$\wt\in\histtreatspace\ut$}
                \STATE Initialize $\estGfunc\uwt^{(\t)}(\histhistory\uit)\leftarrow\estadjfunc^{(\t)}(\histhistory\uit)$
                \FOR{$\tau=\t-1,\ldots,1$}
                    \STATE Compute $\estGfunc\uwt^{(\tau)}(\histhistory_{\i,\tau})$ according to \cref{eq:Gtau} using Monte Carlo sampling with $S$ draws
                \ENDFOR
                \STATE Compute $\estcorrect^{(\t)}\uwt(\histhistory\uit)$ according to \cref{eq:Ap}
                \STATE Compute $\estoutcomefunc^{dy\text{-}adj}(\t)$ according to \cref{eq:tadj_estimator}
            \ENDFOR
        % \ENDFOR
        \RETURN $\{\estestimator_1,\ldots,\estestimator_\T\}$
    \end{algorithmic}
\end{algorithm}
\vspace{-1.0em}
\end{figure}

%% file: src_camera/5_theoretical_analysis.tex
In this section, we derive the asymptotic distribution of the proposed estimator, which enables statistical inference and the construction of confidence intervals.
Additionally, we establish the semiparametric efficiency bound for our dynamic regression-adjusted estimator and
demonstrate that our estimator achieves this bound
under the specified assumptions.
% \subsection{Assumptions}
We begin by introducing additional assumptions to formalize our results.

\begin{assumption}[Finite moments of the moment function]
    \label{assume:finite_IF}
    \begin{align*}
        \exists\,\delta>0~~\mathrm{s.t.}~\sup_{t\in[\T]}\Exp{\norm{\momentfunc^{\pi}\ut(\histobserve\ut;\estimator\ut,\stackadjfunc,p\ut)}_2^{2+\delta}}<\infty.
    \end{align*}
\end{assumption}

\begin{assumption}[Rate of convergence of cross-fitted nuisance functions]
    \label{assume:rate_convergence}
    For any $\t\in[\T]$,
    the cross-fitted estimators of nuisance functions obtained by \cref{alg:overview} satisfy
    \begin{align*}
        % \norm{\eststackadjfunc-\stackadjfunc}_{P,2}=o_p(1),\quad\norm{\estprob\ut-\prob\ut}_{P,2}=o_p(1),\\
        \norm{\eststackadjfunc-\stackadjfunc}_{P,2}+\norm{\estprob\ut-\prob\ut}_{P,2}=o_p(n^{-1/4}),
    \end{align*}
    where $\norm{\cdot}_{P,2}$ denotes the $L_2$-norm with respect to the true distribution $P$ of the relevant arguments.
    % , and $\norm{\cdot}_{\mathrm{TV}}$ denotes the expected total variation norm of the estimated conditional distribution with respect to the true marginal distribution of the conditioning historical covariates.
\end{assumption}

% \begin{assumption}[Lipschitz stability]
%     \label{assume:lipschitz}
%     For any $\t\in[\T]$,
%     the construction of the forward operators $\Gfunc\ut$ and the auxiliary correction term $\correct\ut$ is Lipschitz continuous with respect to the nuisance functions $\nuisance\ut$ in the $L_2$-norm.
% \end{assumption}

\begin{assumption}[Uniform stability]
    \label{assume:uniform_stable}
    For any $\t\in[\T]$,
    the construction of the forward operators $\Gfunc^{(\tau)}\uwt$ and the auxiliary correction term $\correct^{(\t)}\uwt$ is uniformly continuous with respect to the nuisance functions $\nuisance\ut$ in the $L_2$-norm, admitting a modulus of continuity $\omega$.
\end{assumption}

% \begin{assumption}[Smoothness conditions]
%     \label{assume:smoothness}
%     For any $\t\in[\T]$, the map $\nuisance\ut\mapsto\momentfunc\ut(\histobserve\ut;\estimator\ut,\nuisance\ut)$ is Gateaux-differentiable in a neighbourhood of the true nuisance functions $\nuisance\ut$, and the directional derivative $\partial_\nuisance\momentfunc\ut(\histobserve\ut;\estimator\ut,\nuisance\ut)[h]$ is locally Lipschitz continuous with respect to $\nuisance\ut$ in the $L_2$-norm.
% \end{assumption}

\cref{assume:finite_IF} rules out extremely heavy-tailed outcomes and allows us to apply standard limit theorems.
\cref{assume:rate_convergence} requires that the cross-fitted nuisance functions become sufficiently accurate as the sample size grows.
Note that transition kernels are viewed as maps from the conditioning history to the vector space of finite signed measures, equipped with the variation norm.
\cref{assume:uniform_stable} states that small estimation errors in the nuisance functions do not get amplified when constructing $\estGfunc^{(\tau)}\uwt$ and $\estcorrect^{(\t)}\uwt$.
% \cref{assume:smoothness} ensures that small perturbations in the nuisance functions translate into smooth, approximately linear changes in the moment function.

We now establish the weak convergence of our proposed estimator in the following theorem, which serves as the theoretical foundation for statistical inference.
\begin{restatable}[Asymptotic normality]{theorem}{Asymptotic}
\label{theorem:asymptotic}
Under \crefrange{assume:sutva}{assume:uniform_stable},
for any $\t\in\!\!\;[\T]$,
the dynamic regression-adjusted estimator obtained by \cref{alg:overview} satisfies
\begin{align*}
    \sqrt{\n}(\estestimator\ut-\estimator\ut)\rightsquigarrow\mathcal{N}(0,\Sigma\ut),
\end{align*}
where $\Sigma\ut=\Var{\momentfunc\ut^{\pi}(\histobserve\ut;\estimator\ut,\stackadjfunc,p\ut)}$.
\end{restatable}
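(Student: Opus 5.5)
The plan follows the standard DML argument of \citet{Chernozhukov2018-jn}, adapted to the vector-valued score $\momentfunc\ut^\pi$ and to the estimated propensities $\est\pi\uwt$. By \cref{theorem:sample_moment_condition}, the estimator solves $\frac1\n\sum_i\momentfunc\ut^{\est\pi}(\histobserve\uit;\estestimator\ut,\eststackadjfunc,\eststackprob)=0$. Since the score is linear in $\estimator\ut$ with Jacobian $-I$, this gives the exact identity
\begin{align*}
\sqrt\n(\estestimator\ut-\estimator\ut)=\frac{1}{\sqrt\n}\sum_{i=1}^\n\momentfunc\ut^{\est\pi}(\histobserve\uit;\estimator\ut,\eststackadjfunc,\eststackprob).
\end{align*}
It therefore suffices to show that the right-hand side equals $\frac{1}{\sqrt\n}\sum_i\momentfunc\ut^{\pi}(\histobserve\uit;\estimator\ut,\stackadjfunc,\stackprob)+o_p(1)$. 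Once this holds, the multivariate Lindeberg--Feller (or Lyapunov) CLT applies, using \cref{assume:iid}, the mean-zero property from \cref{theorem:moment_condition}, and the $2+\delta$ moment from \cref{assume:finite_IF}. Together with Slutsky, this yields $\mathcal N(0,\Sigma\ut)$. The finite-dimensional vector case follows by Cram\'er--Wold.

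Step 1 replaces $\est\pi\uwt$ by $\pi\uwt$ and $\est\pi_{\realhisttreat_\tau}$ by $\pi_{\realhisttreat_\tau}$. Because $\est\pi-\pi=O_p(\n^{-1/2})$, positivity (\cref{assume:positivity}) keeps the reciprocals bounded. The difference is then $(\pi/\est\pi-1)$ multiplied by $\frac1{\sqrt\n}\sum_i$ of the weighted residual terms, each with estimated nuisances. These sums are $O_p(1)$, since their means are $o(1)$ by the argument of Step 2 and they have bounded variance. Strictly, $\est\pi$ also contributes a first-order term, but the weighted residuals have mean zero at the truth, so this term vanishes. Hence this step is $O_p(\n^{-1/2})$.

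Step 2 uses cross-fitting. Fix a fold $\ell$ of size $\asymp\n/\fold$, and condition on the training folds so that $\est\nuisance_\ell$ is fixed. Decompose
\begin{align*}
\frac{1}{\sqrt{\n}}\sum_{i\in I_\ell}\{\momentfunc(\cdot;\est\nuisance_\ell)-\momentfunc(\cdot;\nuisance\ut)\}
\end{align*}
into two parts. The first is the centered empirical process term $\G_{\n,\ell}[\momentfunc(\est\nuisance_\ell)-\momentfunc(\nuisance\ut)]$. Its conditional second moment is bounded by $\E\|\momentfunc(\est\nuisance_\ell)-\momentfunc(\nuisance\ut)\|^2$. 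This in turn is bounded by $\omega(\|\est\nuisance_\ell-\nuisance\ut\|_{P,2})^2$ via \cref{assume:uniform_stable}, plus the directly controlled contributions of $\stackadjfunc$ and $\prob\ut$. By \cref{assume:rate_convergence} this is $o_p(1)$, so Chebyshev conditional on the training data makes the term $o_p(1)$. The second part is the bias term $\sqrt\n\,\{\E[\momentfunc(\est\nuisance_\ell)]-\E[\momentfunc(\nuisance\ut)]\}$, handled by a second-order expansion along $r\mapsto\nuisance\ut+r(\est\nuisance_\ell-\nuisance\ut)$. The first derivative vanishes by \cref{theorem:neyman}, so the bias is bounded by the second derivative.

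The main obstacle is showing that this remainder is $O(\|\est m-m\|_{P,2}\|\est p-p\|_{P,2}+\|\est p-p\|_{P,2}^2)$. Once shown, it is $o_p(\n^{-1/2})$ under the $o_p(\n^{-1/4})$ rate. To get it, I would telescope the recursion \cref{eq:Gtau}. Write $\est\Gfunc^{(1)}-\Gfunc^{(1)}$ as a sum over $\tau$ of terms of the form $\int(\est\Gfunc^{(\tau+1)}-\Gfunc^{(\tau+1)})(\est p^{(\tau)}-p^{(\tau)})$, plus first-order terms. The first-order terms cancel exactly against the expectations of the correction $\correct^{(\t)}\uwt$, using randomization (\cref{assume:exchangeability}): conditional on $\histtreat_\tau=\realhisttreat_\tau$, the law of $\history_{\tau+1}$ given $\histhistory_\tau$ is $p^{(\tau)}\uwtau$. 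The cross products are bounded by Cauchy--Schwarz, using the variation norm on kernels and boundedness of $\Gfunc$ errors through $\omega$. Summing over the finitely many folds and the finitely many $\wt$ gives the required $o_p(1)$ remainder and completes the proof.
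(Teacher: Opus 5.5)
Your skeleton matches the paper's proof. It has the exact identity $\sqrt n(\hat\theta_t-\theta_t)=n^{-1/2}\sum_i\psi^{\hat\pi}_t(\cdot;\theta_t,\hat m_t,\hat p_t)$ from linearity in $\theta_t$, fold-wise conditional Chebyshev for the centered empirical-process remainder, and the CLT with Slutsky. The gap is in the bias term. You propose a second-order expansion in which cross terms such as $\int(\hat\Gamma^{(\tau+1)}-\Gamma^{(\tau+1)})\,d(\hat p^{(\tau)}-p^{(\tau)})$ are bounded by Cauchy--Schwarz by $\|\hat m-m\|_{P,2}\|\hat p-p\|_{P,2}+\|\hat p-p\|_{P,2}^2$. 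Under the paper's assumptions this step does not close. Kernel errors are measured through total variation, whose dual is the sup norm. So at each $\bar h_\tau$ the inner integral can only be bounded by $2\sup_h|\hat\Gamma^{(\tau+1)}-\Gamma^{(\tau+1)}|(\bar h_\tau,h)\cdot d_{\mathrm{TV}}(\hat p^{(\tau)}(\cdot\mid\bar h_\tau),p^{(\tau)}(\cdot\mid\bar h_\tau))$. Nothing in \cref{assume:rate_convergence,assume:uniform_stable} controls that supremum, since both give only $L_2(P)$ control. Even granting boundedness of $\hat\Gamma-\Gamma$, you are left with $O(\|\hat p-p\|_{P,2})=o_p(n^{-1/4})$, which does not survive the factor $\sqrt n$. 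The ``second derivative'' phrasing also presupposes twice-differentiability, which is never assumed.

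The missing idea is that there is nothing to bound. With the true $\pi$'s (and exact integration), the score has mean zero at \emph{every} nuisance value, not merely to second order. $\hat A^{(t)}_{\bar w_t}$ uses the same $\hat p^{(\tau)}$ as the recursion, so $\int\hat\Gamma^{(\tau+1)}(\bar h_\tau,h)\,\hat p^{(\tau)}(dh\mid\bar h_\tau)=\hat\Gamma^{(\tau)}(\bar h_\tau)$ identically. Randomization then turns each inverse-probability-weighted term into an expectation under the counterfactual law along $\bar w$. Hence $\mathbb{E}[\hat A^{(t)}_{\bar w_t}]=\mathbb{E}^{\bar w}[\hat\Gamma^{(t)}(\bar H_t)]-\mathbb{E}[\hat\Gamma^{(1)}(H_1)]$. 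This cancels the residual term and the $\hat\Gamma^{(1)}$ term completely, cross terms included; do not linearize $\hat\Gamma^{(\tau+1)}$ inside $\hat A$. This is exactly the telescoping in the paper's proof of \cref{theorem:neyman}, which gives $\Phi(r)\equiv 0$ and not just $\Phi'(0)=0$. The paper's Step~3 applies it fold by fold.

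The only surviving bias then comes from $\hat\pi$. It has the form $\sqrt n(\hat\pi_{\bar w_t}^{-1}-\pi_{\bar w_t}^{-1})\,P[\mathbf 1\{\bar W_t=\bar w_t\}(\Gamma^{(t)}-\hat\Gamma^{(t)})]$ plus prefix analogues, i.e.\ $O_p(1)\cdot o_p(1)$. So the paper's argument needs only $L_2$-consistency of the nuisances. Your product-rate route is the template you would need with estimated propensities, but then you would require a kernel metric compatible with $L_2$ errors (e.g.\ $\chi^2$) or sup-norm rates.

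There is also a smaller slip in your Step~1. The sums $n^{-1/2}\sum_i\mathbf 1\{\bar W_{i,t}=\bar w_t\}\pi_{\bar w_t}^{-1}(Y_{i,t}-\hat\Gamma^{(t)}(\bar H_{i,t}))$ are not $O_p(1)$ in general. Their mean $\sqrt n\,\mathbb{E}^{\bar w}[\Gamma^{(t)}-\hat\Gamma^{(t)}]$ is first order in the nuisance error; your Step~2 bound concerns the whole score, not its pieces. Multiplied by $O_p(n^{-1/2})$ the term is still $o_p(1)$, which is all you need, but it is not $O_p(n^{-1/2})$. The paper splits it into a centered part, $O_p(n^{-1/2})\cdot O_p(1)$, and a mean part, $O_p(1)\cdot o_p(1)$.
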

\cref{theorem:asymptotic} and \crefrange{theorem:moment_condition}{theorem:sample_moment_condition} show that the moment function $\momentfunc^{\pi}\ut(\histobserve\ut)$ is the influence function.
% We next show our estimator can achieve the semiparametric efficiency bound in the following theorem.
The next theorem reveals that $\momentfunc^{\pi}\ut(\histobserve\ut)$ is also the efficient influence function.
\begin{restatable}[Semiparametric efficiency bound]{theorem}{SemiparaBound}
\label{theorem:semipara}
Under \crefrange{assume:sutva}{assume:finite_IF},
for any $\t\in\![\T]$,
the semiparametric efficiency bound for $\estimator\ut$ is $\Sigma\ut=\Var{\momentfunc\ut^{\pi}(\histobserve\ut;\estimator\ut,\stackadjfunc,p\ut)}$.
Moreover, if \crefrange{assume:rate_convergence}{assume:uniform_stable} also hold,
the dynamic regression-adjusted estimator $\est\estimator\ut$ attains the semiparametric efficiency bound.
\end{restatable}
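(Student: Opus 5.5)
We plan to follow the standard semiparametric route: compute the pathwise derivative of $\theta_t$ along regular parametric submodels of the observed-data law, exhibit $\psi^\pi_t$ as a gradient lying in the tangent space, and conclude that its variance is the bound. Efficiency of $\hat\theta_t$ will then follow from \cref{theorem:asymptotic}.

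\textbf{Step 1: factorize the likelihood.} Fix $t$ and $\bar w_t$. Under \cref{assume:sutva,assume:exchangeability}, the observed-data density factorizes as
\[
p(\bar Z_t)=p(H_1)\,\prod_{\tau=1}^{t}\pi(W_\tau\mid \bar W_{\tau-1})\prod_{\tau=1}^{t-1}p^{(\tau)}_{\bar W_\tau}(H_{\tau+1}\mid \bar H_\tau)\; p(Y_t\mid \bar H_t,\bar W_t).
\]
Under randomization, the treatment factor does not depend on $\bar H$, so $\pi$ is a known or design-ancillary component. The tangent space is therefore the orthogonal sum of scores of $p(H_1)$, of each transition kernel, and of the conditional law of $Y_t$. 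The treatment-score piece is either excluded, if $\pi$ is known, or orthogonal to the rest, because it depends only on $\bar W_t$.

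\textbf{Step 2: compute the pathwise derivative.} Consider a submodel $P_\epsilon$ with score $S=S_1+\sum_{\tau}S_{\tau+1}+S_Y$. Write
\[
\mu_{\bar w_t}(t)=\int m\, dP_{H_1}\prod_\tau p^{(\tau)}_{\bar w_\tau}.
\]
Differentiating at $\epsilon=0$ gives the following pieces:
\begin{itemize}
\item the $H_1$ perturbation gives $\mathbb E[(\Gamma^{(1)}(X_1)-\mu)S_1]$;
\item the perturbation of kernel $\tau$ gives $\mathbb E\!\left[\mathbb 1\{\bar W_\tau=\bar w_\tau\}\pi_{\bar w_\tau}^{-1}\{\Gamma^{(\tau+1)}-\int\Gamma^{(\tau+1)}dp^{(\tau)}\}S_{\tau+1}\right]$;
\item the outcome perturbation gives $\mathbb E\!\left[\mathbb 1\{\bar W_t=\bar w_t\}\pi_{\bar w_t}^{-1}(Y_t-m)S_Y\right]$.
\end{itemize}
The inverse-probability weights arise from converting integrals under the $\bar w$-interventional law into expectations under $P$, using positivity (\cref{assume:positivity}). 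We will also check that each summand has zero conditional mean given its past. Summing the pieces reproduces $\psi^\pi_{\bar w_t}$ in \cref{eq:moment_func}, which confirms \cref{theorem:moment_condition}. Each summand also lies in the corresponding tangent subspace, being a conditionally centered function of the relevant variables. Hence $\psi^\pi_t$ is the canonical gradient, i.e.\ the efficient influence function. Stacking over $\bar w_t$ gives the vector version, and the bound equals $\Sigma_t=\mathrm{Var}(\psi^\pi_t)$, which is finite by \cref{assume:finite_IF}.

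\textbf{Step 3: attainment.} By \cref{theorem:asymptotic}, under \crefrange{assume:rate_convergence}{assume:uniform_stable} the estimator has the expansion $\sqrt n(\hat\theta_t-\theta_t)=n^{-1/2}\sum_i\psi^\pi_t(\bar Z_{i,t})+o_p(1)$, so $\hat\theta_t$ is asymptotically linear with influence function equal to the efficient one. An asymptotically linear estimator whose influence function is the canonical gradient is regular and attains the bound. The replacement of $\pi$ by $\hat\pi$ only changes the influence function by the treatment-score projection, which is zero here because each summand is conditionally centered.

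\textbf{Main obstacle.} The delicate point is Step 2: correctly handling the kernel derivative under the intervention $\bar w$ and verifying that the resulting gradient lies in, rather than merely projects onto, the tangent space. I would first verify this for $t=2$ explicitly and then argue by backward induction in $\tau$, using the recursion \cref{eq:Gtau}.
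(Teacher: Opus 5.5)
Your proposal is correct. Its overall architecture matches the paper's: factorize the likelihood, exhibit a gradient, show it lies in the tangent space, and obtain attainment from \cref{theorem:asymptotic}. The two central steps, however, are carried out differently.

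\textbf{Factorization and tangent-space membership.} The paper factorizes more finely, into the law of $X_\tau$ given $\bar Z_{\tau-1}$ and the law of $Y_\tau$ given $(\bar H_\tau,\bar W_\tau)$. Because $H_{\tau+1}=(X_{\tau+1},Y_\tau)$ straddles two of these factors, it must split each innovation $\Gamma^{(\tau+1)}_{\bar w_t}(\bar H_{\tau+1})-\int\Gamma^{(\tau+1)}_{\bar w_t}\,dp^{(\tau)}_{\bar w_\tau}$ through $C_\tau=\mathbb{E}[\Gamma^{(\tau+1)}_{\bar w_t}(\bar H_{\tau+1})\mid\bar Z_\tau]$. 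This gives an outcome score at $\tau$ plus a covariate score at $\tau+1$. Your kernel-level factorization spans the same tangent space: that very split shows any centered function of $(\bar H_{\tau+1},\bar W_\tau)$ is a sum of the two score types. With your factorization, membership of each correction summand is immediate.

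\textbf{The gradient identity.} The paper never differentiates the g-formula. Instead it differentiates $\alpha\mapsto\mathbb{E}_\alpha[\psi^\pi_t(\bar Z_t;\theta_t(\alpha),\eta_t(\alpha))]\equiv 0$ and uses $\partial_\theta\psi^\pi_t=-I$. It then removes the nuisance-derivative term with \cref{theorem:neyman} in the direction $\eta_t'(0)$. That route is short and reuses earlier lemmas. It does, however, tacitly need $\alpha\mapsto\eta_t(\alpha)$ to be differentiable and a chain rule through the score, which the paper waves through as ``standard semiparametric calculus.''

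Your product-rule differentiation of $\int m\,dP_{H_1}\prod_\tau p^{(\tau)}$ converts each interventional integral to an inverse-probability-weighted expectation under $P$. It is more explicit and self-contained, and shows which likelihood factor generates which term of $\psi^\pi_{\bar w_t}$. The cost is that you must check that each piece of $\psi^\pi_{\bar w_t}$ is orthogonal to the scores of the other factors. Your planned sequential conditional-centering check delivers exactly this.

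Your remark that $\psi^\pi_t$ is orthogonal to treatment scores also shows something the paper does not address: the bound is the same whether or not $\pi$ is treated as known.
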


\begin{restatable}[Variance reduction]{corollary}{VarReduction}
\label{theorem:variance_reduction}
Under \crefrange{assume:sutva}{assume:uniform_stable},
for any $\t\in\!\!\;[\T]$ and $\wt\in\histtreatspace\ut$,
we have
% \begin{align*}
%     \Var{\estoutcomefunc^{emp}(\t)\vphantom{\estoutcomefunc^{adj}(\t)}}\geq\Var{\estoutcomefunc^{adj}(\t)}\geq\Var{\estoutcomefunc^{dy\text{-}adj}(\t)},
% \end{align*}
\begin{align*}
    \Var{\estoutcomefunc^{emp}(\t)\vphantom{\estoutcomefunc^{adj}(\t)}}
    % \geq\Var{\estoutcomefunc^{adj}(\t)}
    \geq\Var{\widetilde{\mu}\uwt^{dy\text{-}adj}(\t)},
\end{align*}
where $\widetilde{\mu}\uwt^{dy\text{-}adj}(\t)$ is the dynamic regression-adjusted estimator that incorporates known adjustment terms.
% \begin{align*}
%     \Gfunc^{(\tau)}\uwt(\histhistory\utau)-\outcomefunc(\t)=0~~a.s.,
% \end{align*}
% where the first inequality holds with equality if and only if
% \begin{align*}
%     \left(\frac{\1\ut}{\pi\uwt}-\frac{\1\utm}{\pi\uwtm}\right)
%     \left\{\Gfunc^{(\t)}\uwt(\histhistory\ut)-\outcomefunc(\t)\right\}=0~~a.s.,
% \end{align*}
% and the second holds if and only if
% \begin{align*}
%     &\sum_{\tau=1}^{\t-1}\left(\frac{\1\utau}{\pi\uwtau}-\frac{\1\utaum}{\pi\uwtaum}\right)\left\{\Gfunc^{(\tau)}\uwt(\histhistory\utau)-\outcomefunc(\t)\right\}=0~~a.s.,
% \end{align*}
% for all $\tau\in[\t]$,
% where $\1\ut=\1\{\histtreat\ut=\realhisttreat\ut\}$ with $\1_0/\pi_{\realhisttreat_0}=1$.
\end{restatable}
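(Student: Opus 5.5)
The plan is to compare the two estimators through their influence functions. By \cref{theorem:asymptotic}, both are asymptotically linear, and their asymptotic variances are the variances of their scores. With known adjustment terms $\Gfunc^{(\tau)}\uwt$ and $\correct^{(\t)}\uwt$, the oracle estimator $\widetilde{\mu}\uwt^{dy\text{-}adj}(\t)$ has influence function $\momentfunc^\pi\uwt$ from \cref{eq:moment_func}. The empirical estimator \cref{eq:simple_estimator} has influence function
\[
\phi^{emp}\uwt=\frac{\1\{\histtreat\ut=\wt\}}{\pi\uwt}\bigl(\outcome\ut-\outcomefunc(\t)\bigr).
\]
It therefore suffices to show $\Var{\phi^{emp}\uwt}\ge\Var{\momentfunc^\pi\uwt}$.

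\textbf{Step 1: write the empirical score as the oracle score plus a remainder.} Set $\1_\tau=\1\{\histtreat_\tau=\realhisttreat_\tau\}$ and $\Gfunc^{(0)}\equiv\outcomefunc(\t)$. Then $\phi^{emp}\uwt-\momentfunc^\pi\uwt$ is a telescoping sum. Writing $\1_0/\pi_0=1$,
\[
\phi^{emp}\uwt-\momentfunc^\pi\uwt=\sum_{\tau=1}^{\t}\Bigl(\frac{\1_\tau}{\pi\uwtau}-\frac{\1_{\tau-1}}{\pi_{\realhisttreat_{\tau-1}}}\Bigr)\bigl(\Gfunc^{(\tau)}\uwt(\histhistory_\tau)-\outcomefunc(\t)\bigr)\eqqcolon R.
\]
To verify this, expand $\correct^{(\t)}\uwt$. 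Under the randomization \cref{assume:exchangeability}, on $\{\1_\tau=1\}$ the integral $\int\Gfunc^{(\tau+1)}\,\prob^{(\tau)}$ equals $\Gfunc^{(\tau)}\uwt(\histhistory_\tau)$ by \cref{eq:Gtau}. Regrouping by $\Gfunc^{(\tau)}$ then gives the display.

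\textbf{Step 2: show the remainder is uncorrelated with the oracle score.} Once $\Exp{\momentfunc^\pi\uwt R}=0$ is established, we get $\Var{\phi^{emp}\uwt}=\Var{\momentfunc^\pi\uwt}+\Var{R}\ge\Var{\momentfunc^\pi\uwt}$. To prove the orthogonality, write $\momentfunc^\pi\uwt$ as a martingale-difference sum along the filtration $\mathcal F_\tau$ generated by $(\histhistory_\tau,\histtreat_\tau)$. Its pieces are:
\begin{itemize}
\item $\Gfunc^{(1)}(\covariate_1)-\outcomefunc(\t)$;
\item $\frac{\1_\tau}{\pi\uwtau}\bigl(\Gfunc^{(\tau+1)}-\E[\Gfunc^{(\tau+1)}\mid\histhistory_\tau,\histtreat_\tau]\bigr)$ for each $\tau$;
\item the final residual $\frac{\1_\t}{\pi\uwt}(\outcome\ut-\Gfunc^{(\t)})$.
\end{itemize}
Each increment of $R$ is, for each $\tau$, the product of the treatment-assignment fluctuation $\frac{\1_\tau}{\pi\uwtau}-\frac{\1_{\tau-1}}{\pi_{\realhisttreat_{\tau-1}}}$ with a function of the history. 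The assignment fluctuation has mean zero given $\mathcal F_{\tau-1}$ and $\histhistory_\tau$, because treatment is independent of potential histories and $\Exp{\1_\tau\mid\1_{\tau-1}=1}=\pi\uwtau/\pi_{\realhisttreat_{\tau-1}}$.

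Cross terms between pieces indexed by different times vanish by iterated conditioning. Terms at the same time need a direct computation. The covariate innovation is multiplied by $\1_\tau/\pi\uwtau$, while the $R$ term is driven by the treatment innovation. Given the history, one is mean-zero in $\histhistory_{\tau+1}$ and the other in $\treat_\tau$. With careful conditioning, these products integrate to zero.

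\textbf{Step 3: pass from asymptotic to finite-sample variances.} Combining Steps 1–2 gives the inequality for the asymptotic (i.e., $n$-scaled) variances. For the oracle estimator with known $\pi$ and known adjustments, the estimators are exact sample means of their scores. The inequality therefore transfers directly to $\Var{\cdot}$ of the estimators, up to the $\est\pi$ replacement, which is first-order negligible.

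\textbf{Main obstacle.} The hard part is Step 2, the same-time cross terms. Here the indicator products $\1_\tau\1_{\tau-1}$ and the $\pi$ normalizations must cancel exactly. My first approach would be to condition on $(\histhistory_\tau,\histtreat_{\tau-1})$ and use independence of $\treat_\tau$ from potential histories, writing $\Exp{\1_\tau\mid\cdot}=\1_{\tau-1}\pi\uwtau/\pi_{\realhisttreat_{\tau-1}}$. If exact orthogonality fails, I would fall back on a projection argument. The oracle score is the efficient influence function by \cref{theorem:semipara}, and the empirical score is another valid influence function for the same estimand in the same model. The efficient influence function is the projection of any influence function onto the tangent space, so the difference $R$ is orthogonal to it. That yields the inequality directly.
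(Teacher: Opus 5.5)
Your proof is correct, but its main line differs from the paper's. The paper argues abstractly. By \cref{theorem:semipara}, $\psi^\pi_t$ is the efficient influence function, the empirical estimator is regular with influence function $\phi^{emp}$, and any regular estimator's asymptotic covariance dominates the bound in the Loewner order. That is exactly your fallback. You instead prove the Pythagorean identity $\operatorname{Var}(\phi^{emp})=\operatorname{Var}(\psi^\pi)+\operatorname{Var}(R)$ by hand. You use only the conditional-mean identities $E[Y_t\mid\bar H_t,\bar W_t=\bar w_t]=\Gamma^{(t)}(\bar H_t)$ and $E[\Gamma^{(\tau+1)}\mid\bar H_\tau,\bar W_\tau=\bar w_\tau]=\Gamma^{(\tau)}(\bar H_\tau)$, plus the design fact $P(W_\tau=w_\tau\mid\bar H_\tau,\bar W_{\tau-1}=\bar w_{\tau-1})=\pi_{\bar w_\tau}/\pi_{\bar w_{\tau-1}}$.

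Your route is self-contained: it needs neither the tangent-space characterization nor regularity of the difference-in-means estimator. It is also more informative. It gives the gain $\operatorname{Var}(R)$ explicitly and shows equality holds iff $R=0$ a.s. The same argument kills cross-covariances between different paths $\bar w_t\neq\bar w_t'$, so you also get the matrix identity $\Sigma_t^{emp}=\Sigma_t+\operatorname{Var}(R)$ for the stacked remainder. In exchange, the paper's route gives the stronger conclusion that the dynamic estimator beats every regular estimator, not just the empirical one.

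The step you flag as the main obstacle is not actually one. Refine the filtration to
$\sigma(\bar H_\tau,\bar W_{\tau-1})\subset\sigma(\bar H_\tau,\bar W_\tau)\subset\sigma(\bar H_{\tau+1},\bar W_\tau)\subset\cdots\subset\sigma(\bar H_t,\bar W_t,Y_t)$.
Each summand of $R$ is then a martingale difference at a treatment step: it is $\sigma(\bar H_s,\bar W_s)$-measurable with mean zero given $\sigma(\bar H_s,\bar W_{s-1})$. Each summand of $\psi^\pi$ is a martingale difference at a history step or at the final outcome step. No two summands share a step, so every cross term, including your ``same-time'' ones, vanishes by one application of iterated expectations. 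Your proposed conditioning on $(\bar H_\tau,\bar W_{\tau-1})$ is exactly this.

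Two minor corrections:
\begin{enumerate}
\item In Step 1, the identity $\int\Gamma^{(\tau+1)}\,dp^{(\tau)}=\Gamma^{(\tau)}(\bar H_\tau)$ is just \cref{eq:Gtau} and needs no randomization.
\item In Step 3, the empirical estimator is a ratio (H\'ajek) estimator, not an exact sample mean of $\phi^{emp}$, and it is linearized by the delta method rather than by \cref{theorem:asymptotic}. So what you establish is the inequality for the leading $1/n$ variance term, which is also all the paper's proof delivers.
\end{enumerate}
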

\cref{theorem:semipara,theorem:variance_reduction} demonstrate that the dynamic regression-adjusted estimator is asymptotically normal and semiparametrically efficient, reaching the efficiency bound.

%% file: src_camera/6_experiments.tex
% We conducted experiments on both synthetic and real-world datasets to demonstrate the effectiveness of the dynamic regression-adjusted estimator.
We conducted simulation study to validate our theoretical results and 
demonstrated the effectiveness of our estimator using real-world A/B test data collected from a large-scale streaming platform in Japan.
% \\[0.25em]
% \textbf{Q1. Statistical efficiency}: Does our estimator achieve a lower estimation error and enable valid statistical inference? \\
% \textbf{Q2. Visualization}: Can our learned transition kernel accurately capture the underlying complex covariate transitions? \\
% \textbf{Q3. Practicality}: How well does our algorithm estimate the treatment effects on real-world datasets?
% \subsection{Experimental Setup}
\input{components/fig_inferene}
\subsection{Simulation Study}
\mypara{Setup}
We explain our synthetic data generating process.
We generate the outcome $\outcome\uit$ and covariate $\covariate\uit$ for the $\i$-th unit at time point $\t$ according to the following process:
\begin{align}
    % \small
    \label{eq:generating}
    \begin{split}
    \outcome\uit
    &=g(\covariate\uit,\outcome\uitm)
    +\tau(\t)\treat\uit+\varepsilon\uit,\\
    d\covariate\uit&=
    % \covariate\uit+
    Q\cdot f(Q^\top\covariate\uit,\outcome\uitm,\treat\uit)d\t
    % +\xi\uit,
    +\sigma_\covariate dB_t,
    \end{split}
\end{align}
where the initial states $\covariate\uione\sim\mathcal{N}(F,\sigma_0^2I_\d)$ and $\outcome_{i,0}=0$, the error terms $\varepsilon\uit\sim\mathcal{N}(0,\sigma_\outcome^2)$, $B_t$ is a standard Wiener process representing system noise, and $Q$ is a random orthogonal matrix sampled uniformly from the Haar measure.
Here, let $g(\cdot)$ be a nonlinear outcome function depending only on a subset of covariates, and let $\tau(\t)$ be a time-varying treatment effect.
The covariate transition
$f(\cdot)$ is constructed based on the rotated forced Lorenz-96 model~\cite{lorenz96}, which is commonly used as a benchmark.
This function includes external time-dependent forcing $F\ut$ coupled with both historical outcomes and treatment assignments, which together shape the evolution of the covariates over time.
In addition, we drew a subject-level treatment indicator $\treat_{i}\sim\text{Bernoulli}(0.5)$ independently across $i$ and set $\treat\uit=\treat_i$ for all $\t\in\![\T]$,
which follows a standard A/B testing setup.
Detailed descriptions of the data generating process are provided in \cref{appendix:experiments_settings}.
This design incorporates nonlinear dependencies,
irrelevant covariates, influence from past treatments,
and time-varying treatment effects, so the synthetic data generated by this process reflects the complexities of real-world scenarios.

\vspace{0.13em}
We used synthetic data of size $\n\in\{500,1000,2000\}$ and length $\T=7$ with $10$-dimensional covariates and estimated the longitudinal ATE defined by \cref{eq:ate} with $1000$ simulations.
We approximated the ground-truth values using Monte Carlo sampling with $10^6$ draws.
% We used four regression-adjusted estimators.
We used an OLS model (linear) and an RF model (nonlinear) as supervised learning models $\regmodel$,
and a VAR model (linear) and a Gaussian MLP model (nonlinear) as transition learning models $\transmodel$.
For instance, the estimator combining a RF model and Gaussian MLP model is referred to as RF-MLP adjustment.
All adjusted estimators used $5$-fold cross-fitting.
Our source code for the simulation study is publicly available at: \url{https://github.com/C-Naoki/dynamic-adjustment}.

\vspace{0.15em}
\input{components/fig_rmse}
\input{components/fig_recomtest}
\mypara{Result}
% We demonstrated how accurately our framework can estimate treatment effects.
\cref{fig:inference} shows the statistical properties (RMSE, average $95\%$ confidence interval (CI) length, and its coverage probability) of different estimators using synthetic data with a sample size of $\n=1000$,
based on $1000$ simulation runs.
Additional results for other sample sizes are found in \cref{appendix:additional_results}.
The pointwise confidence intervals are calculated using sample estimates of the asymptotic variance.
All adjusted estimators exhibit lower RMSE and shorter average 95\% confidence interval lengths compared to the empirical estimator while maintaining the nominal coverage probability.
In particular, RF-MLP adjustment achieves the best performance,
% among all estimators,
demonstrating the benefits of incorporating machine learning techniques.
OLS-MLP adjustment yields performance comparable to RF-MLP adjustment because the lagged data itself contains a certain amount of information, allowing even a simple linear baseline to capture the underlying trend to a large extent.
In contrast, RF-VAR adjustment offers only a subtle improvement because a VAR model fails to capture the complex nonlinear dynamics, resulting in sample trajectories that deviate significantly from the true data distribution.
\cref{fig:rmse} shows RMSE reduction in \% relative to the empirical estimator for different sample sizes $\n\in\{500,1000,2000\}$.
Across all sample sizes, dynamic regression adjustment consistently yields substantial precision gains with performance improving as sample size $\n$ increases.
These results validate the theoretical discussion presented in \cref{theorem:variance_reduction}, and emphasize the value of flexible dynamic regression adjustment in improving finite-sample efficiency for longitudinal treatment effect estimation.
% , aligning with our semiparametric theory that forward integration combined with the auxiliary correction term $\correct\uwt^{(\t)}$ exploits post-treatment histories for variance reduction without altering the marginal estimand.

\subsection{Real-world Datasets}
% \vspace{0.27em}
\mypara{Setup}
In this section, the proposed dynamic regression adjustment framework is applied to evaluate the results of a longitudinal A/B test at a streaming platform in Japan, using non-public proprietary data provided by the company.
The A/B test was conducted to investigate how different types of up-next content recommendations influenced user viewing time over a 10-day period.
The logged data on the platform is organized as panel data, with each observation showing the daily viewing behavior during the experiment.
Users were randomly assigned to one of two groups.
The control group received recommendations based on user interaction data, such as past clicks, while the treatment group received recommendations based on content-level similarity derived from item features.
The outcome variable is the daily viewing time for a specific series, and
the covariate variables are the daily viewing times for the five series, including lagged outcome variables.
% The experiments included a total of 6,802 users selected based on having sufficient historical data to ensure reliable analysis.
%To ensure reliable analysis,
%we selected a total of 6,802 users having at least 30 data points during the experimental period. 
To ensure reliable analysis, we selected a total of 6,802 users with sufficient observation coverage during the experimental period.
For regression adjustment, we used an RF model and a zero-inflated log-normal (ZILN) MLP model based on the domain knowledge that users did not watch the target series daily, leading to a large number of zero observations.
We used 5-fold cross-fitting as well as the simulation studies.

% \vspace{0.27em}
\mypara{Result}
\cref{fig:recomtest} (a)
visualizes the estimated trajectories of the longitudinal treatment effects on daily viewing time over the 10-day experiment.
The left panel displays the results from the empirical estimator, 
while the right panel presents those from our proposed dynamic regression-adjusted estimator.
The shaded regions represent the $95\%$ pointwise confidence intervals computed using multiplier bootstrap~\cite{multiplier-bootstrap} with $1000$ repetitions.
% As illustrated in the figure,
We can see that the treatment effect initially showed slightly positive values but subsequently decreased.
This pattern may reflect that users initially engaged with the new recommendations but did not sustain their viewing time, possibly due to a mismatch with their longitudinal preferences.
\cref{fig:recomtest} (b) quantifies the corresponding precision gains in terms of the standard errors (SE) and the reduction rates over time.
% , where reduction is defined as $(\mathrm{se}^{emp}-\mathrm{se}^{adj}) / \mathrm{se}^{emp}$.
Specifically, the regression adjustment reduces the standard errors by approximately $0.4\%$ to $20.4\%$ across the experimental period.
% , with an average reduction of $9.5\%$.
% This variance reduction enables more precise inference, revealing that the treatment effect initially showed slightly positive values but subsequently decreased.
These empirical findings demonstrate the practical utility of incorporating covariate transition dynamics to enhance statistical power in large-scale online experiments.
\vspace{0.2em}

%% file: components/fig_inferene.tex
\begin{figure}[t]
    \centering
    \includegraphics[width=\linewidth]{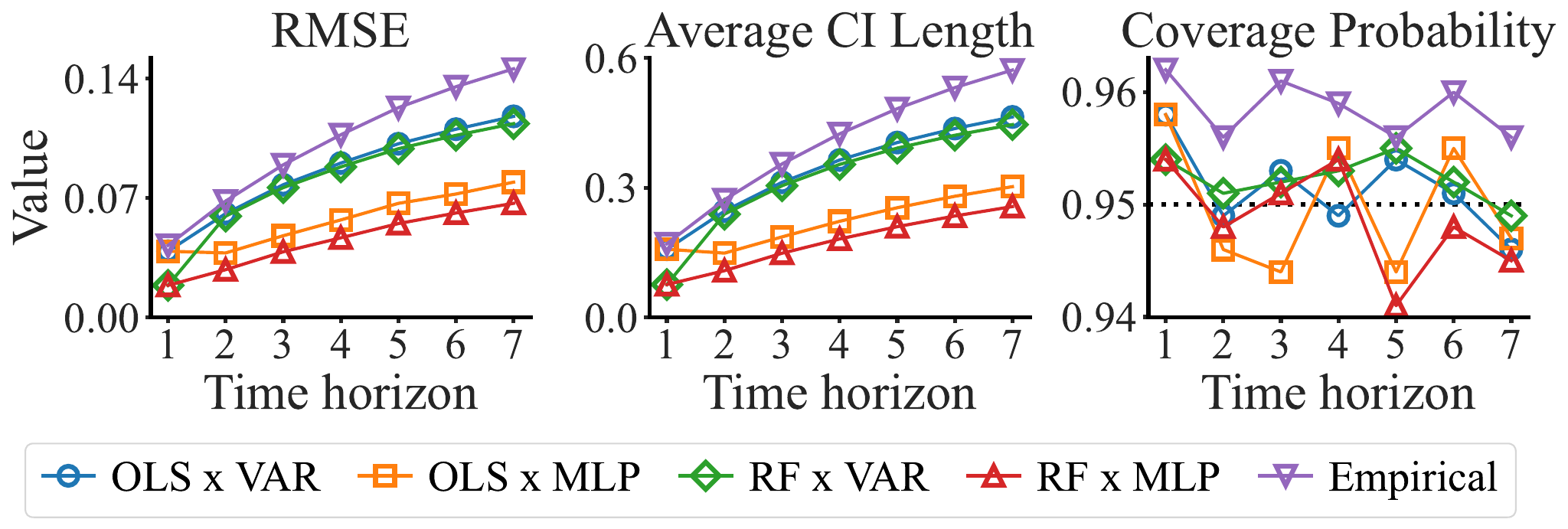}
    \caption{\textbf{Statistical properties of different estimators} on synthetic data with a sample size of $\n=1000$ over $1000$ simulations. 
    Any adjustment achieves smaller RMSE and average $95\%$ CI length while maintaining the nominal coverage probability.
    \cref{appendix:additional_results} provides additional results for other sample sizes.
    }
    \label{fig:inference}
    % \vspace{-0.98em}
\end{figure}

%% file: components/fig_rmse.tex
\begin{figure}[t]
    \centering
    \includegraphics[width=\linewidth]{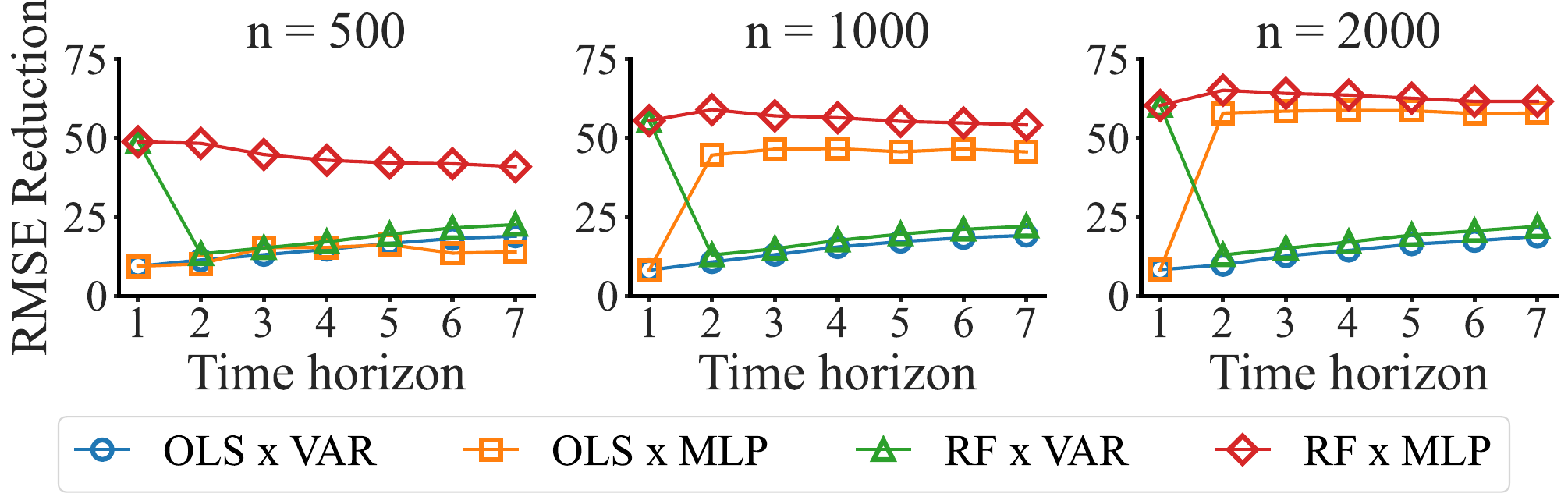} \\[0.4em]
    \caption{
    \textbf{Root mean squared error (RMSE) reduction in \%} of adjusted estimators compared to empirical ones with different sample sizes $\n\in\{500,1000,2000\}$.
    RF-MLP adjustment consistently achieves variance reduction across time horizons, with performance improving as sample size increases.
    }
    \vspace{-0.4em}
    \label{fig:rmse}
\end{figure}

%% file: components/fig_recomtest.tex
\begin{figure*}[t]
    \begin{minipage}[b]{0.49\linewidth}
        \centering
        \includegraphics[width=\linewidth]{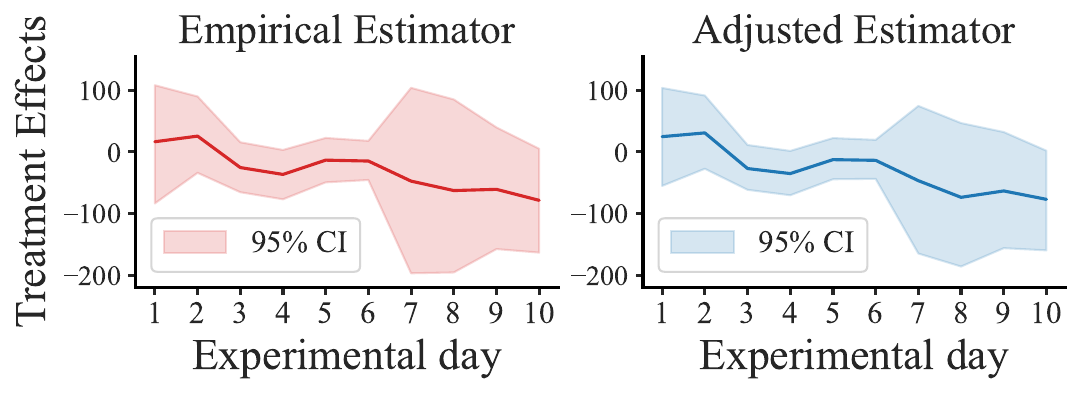} \\
        \small{(a) Estimated longitudinal treatment effects}
        \label{fig:recomtest_a}
    \end{minipage}
    \begin{minipage}[b]{0.49\linewidth}
        \centering
        \includegraphics[width=\linewidth]{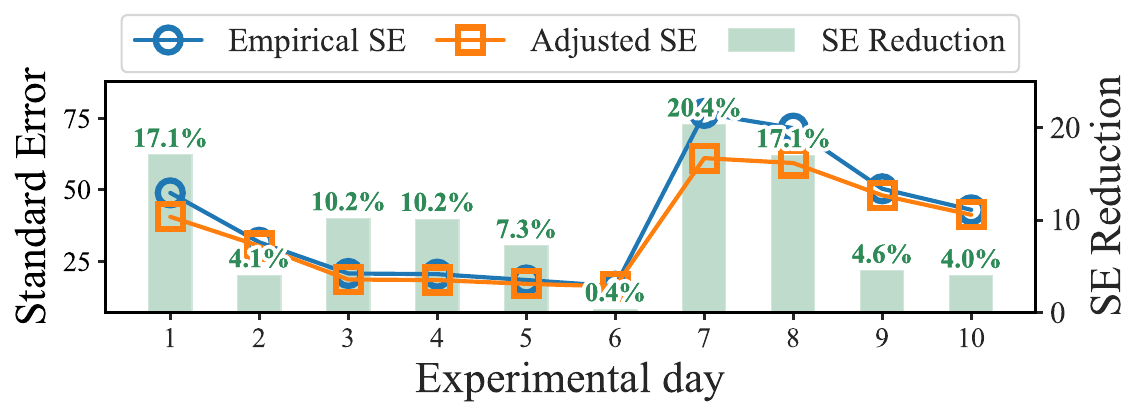} \\
        \small{(b) Standard errors and reduction rates over time}
        \label{fig:recomtest_b}
    \end{minipage}
    % \vspace{-1.5em}
    \caption{
    \textbf{Practicality of our framework on real-world A/B test data from a large-scale streaming platform.}
    (a) Estimated trajectories of longitudinal treatment effects on daily user viewing time.
    The A/B test compares content-based recommendations (treatment) against interaction-based ones (control) over a 10-day period.
    Our dynamic regression adjustment yields narrower $95\%$ pointwise confidence intervals (shaded areas) compared to the unadjusted empirical estimator.
    (b) Comparison of standard errors and the corresponding reduction rates. Our method reduces standard errors (SE) by approximately $0.4\%$ to $20.4\%$ across the experimental period, demonstrating the practical advantage of incorporating covariate transition dynamics for efficient inference.
    }
    \label{fig:recomtest}
    % \vspace{-1.0em}
\end{figure*}

%% file: src_camera/7_conclusion.tex
We present a regression adjustment framework for the estimation of longitudinal treatment effects under randomized experiments.
By explicitly modeling covariate dynamics via transition kernels and employing recursive forward integration, our method successfully extracts information from evolving post-treatment histories without distorting our target estimand.
Moreover, we derive several theoretical foundations of our proposed estimator, including asymptotic normality (\cref{theorem:asymptotic}) and a semiparametric efficiency bound (\cref{theorem:semipara}).
Simulation studies and empirical validations using A/B test data from a large-scale streaming platform demonstrated the practical advantages of our approach.

\mypara{Limitations}
Our approach has some limitations, which are interesting to tackle in the future.
First, this paper considers experimental data under perfect compliance and no interference.
Such conditions are realistic in standard A/B tests, but these assumptions may be restrictive in other contexts,
such as covariate-adaptive randomization (CAR)~\cite{Tsiatis2008-ki} and observational settings.
In these scenarios, time-varying confounding and various latent regimes~\cite{Saggioro2020-rp,rahmani2025causal,modeplait,Mameche2025-hx} may further complicate the estimation of nuisance components.
In addition, any learned components would be treated as additional nuisances, whose uncertainty is propagated to inference.
% To accurately capture valuable insights into the temporal dynamics of treatment effects in such settings, one must address the difficulty of potential shifts in the latent causal structure~\cite{Mameche2025-hx,modeplait}.
Second, adapting our framework for distributional treatment effects (DTE)~\cite{Doksum1974-ns,Lehmann2006-wm} provides richer insights into heterogeneous impacts than solely focusing on overall average effects.
% For example, understanding the distributional treatment effects (DTE)~\cite{Doksum1974-ns,Lehmann2006-wm} can provide a richer perspective than solely focusing on overall average effects.
% Third, combining our estimator with nonstationary causal discovery methods~\cite{Mameche2025-hx,modeplait} is an interesting direction, but it would require accounting for estimation errors of latent time-varying structures in inference.
Lastly, addressing irregular sampling intervals and missing data through continuous-time modeling~\cite{Chen2018-xe} may broaden empirical applicability.

%% file: src_camera/8_statements.tex
% \vspace{-0.72em}
\section*{Impact Statement}
% \vspace{-0.05em}
\label{section:impact_statements}
The goal of this paper is to advance the field of machine learning. There are many potential societal consequences of our work. Here, we highlight key points to ensure our method works well.
At first, our framework builds on the Neyman-orthogonal moment conditions, which provide first-order robustness against small errors in nuisance estimation. However, severe model misspecification can still yield unexpected or unreliable results, making the rigorous validation of nuisance functions essential. We recommend a step-by-step approach to model selection and validation. When modeling the nuisance functions $m_t$ and $p_t$, it is best to establish a stable foundation by starting with simple linear baselines. If the underlying system dynamics turn out to be complex or the linear models underperform, practitioners should then scale up to more flexible, nonlinear models. Furthermore, while the theoretical assumptions cannot be directly evaluated on real-world data, the cross-fitting procedure offers an empirical diagnostic. Practitioners should closely monitor out-of-fold predictive performance (e.g., out-of-fold RMSE). Achieving high predictive accuracy without severe overfitting on the validation folds serves as a highly reliable, practical proxy for confirming that the necessary structural assumptions are reasonably satisfied.
Also, we emphasize that our work focuses on static regimes (i.e., pre-specified treatment sequences). Therefore, adaptive experiments, such as contextual bandits, fall outside the scope of our method. Applying our framework to such settings could lead to substantial bias. The recursive integration and Monte Carlo sampling used in our algorithm are often computationally intensive, particularly for long time horizons, large sample sizes, and high-dimensional covariates. If the data is too large, our method may be impractical due to limited computational budgets.
Lastly, although the proposed estimator is asymptotically normal, finite-sample bias can cause the coverage of confidence intervals to deviate from nominal levels. To improve finite-sample accuracy, the multiplier bootstrap is often an effective approach, but it still relies on empirical evaluations of the influence functions.
Consequently, if the nuisance components are poorly estimated due to insufficient sample sizes, the bootstrap cannot fully correct for the resulting inference errors.

%% file: src_camera/99_appendix.tex
\counterwithin{equation}{section}
\newpage
\section{Notation and Terminology}
\label{appendix:notation}
% \section{Symbols and Definitions}
The main symbols we use in this paper are described in \cref{table:symbols}.
\input{components/table_symbols}

\section{Preliminaries}

\subsection{Empirical Process Theory}
\label{appendix:empirical_process_theory}
We collect here standard notation from empirical process theory and $L_2(\Prob)$-norm conventions used in the proofs.
Let $(\Omega,\mathcal{F},\P)$ be a probability space, and let $(\mathcal{G},\|\cdot\|_\mathcal{G})$ be a normed vector space.
Let $Z$ be a generic observation taking values in a measurable space $(\mathcal{Z},\mathcal{A})$ with law $P\coloneqq\P\circ Z^{-1}$.
For a measurable map $f:\mathcal{Z}\to\mathcal{G}$, define
$$
\|f\|_{P,2}\coloneqq\left(\int_\mathcal{Z}\|f(z)\|_\mathcal{G}^2\Prob(dz)\right)^{1/2},
$$
whenever the right-hand side is finite.
Equivalently, $f\in L_2(\Prob,\mathcal{G})$ if $\|f\|_{P,2}<\infty$.
The range norm $\|\cdot\|_\mathcal{G}$ is part of the definition.
% When it is useful to make the range space explicit, we write
For instance, in the case of $\mathcal{G}=\R^k$, we use the Euclidean norm as the range norm, so that
$$
\|f\|_{P,2}=\left(\int_\mathcal{Z}\|f(z)\|_2^2\Prob(dz)\right)^{1/2}.
$$
% In the case of $\mathcal{G}=$, we have
% $$
% \|f\|_{P,2}=\left(\int_\mathcal{Z}\|f(z)\|_2^2\Prob(dz)\right)^{1/2}.
% $$
For a finite collection $f=(f_j)_{j\in\mathcal{J}}$, where $f_j:\mathcal{Z}_j\to\mathcal{G}_j$ and $\mathcal{J}$ are finite, we use the product convention
$$
\|f\|_{P,2}\coloneqq\left(\sum_{j\in\mathcal{J}}\|f_j\|^2_{P_j,2}\right)^{1/2},
$$
where $P=(P_j)_{j\in\mathcal{J}}$.
Since $\mathcal{J}$ is finite, replacing this sum with any equivalent finite product norm does not affect asymptotic rate conditions.

We next specify the convention for measure-valued maps.
% For a finite signed measure $\nu$ on the measurable space with $\sigma$-algebra $\mathcal{B}$, the total-variation norm is defined as $\|\nu\|_\mathrm{TV}\coloneqq\sup_{B\in\mathcal{B}}|\nu(B)|$.
For two probability measures $\P_1$ and $\P_2$ on $(\Omega,\mathcal{F})$, total-variation distance is written by $d_\mathrm{TV}(\P_1,\P_2)=\sup_{A\in\mathcal{F}}|\P_1(A)-\P_2(A)|$.
Then, for kernel-valued maps $z\mapsto K_1(\cdot\mid z)$ and $z\mapsto K_2(\cdot\mid z)$, we have
$$
\|K_1-K_2\|_{P,2}\coloneqq\left(\int_\mathcal{Z}\d_\mathrm{TV}(K_1(\cdot\mid z),K_2(\cdot\mid z))^2P(dz)\right)^{1/2}.
$$

Let $(Z_i)_{i=1}^n$ be i.i.d. copies of $Z$.
For any measurable scalar function $f:\mathcal{Z}\to\R$, we write
\begin{align*}
    Pf\coloneqq\Exp{f(Z)}=\int_\mathcal{Z}
    f(z)P(dz)=\int_\Omega
    f(Z(\omega))\P(d\omega),\quad
    P_\n f\coloneqq\frac1\n\sum_{i=1}^nf(Z_i),
\end{align*}
and define the empirical process $\G_\n$ by $\G_\n f\coloneqq\sqrt\n(P_\n-P)f$ and $\norm{f}_{P,2}\coloneqq(P f^2)^{1/2}$ for the $L_2$-norm of $f$.

The next lemma provides a simple variance bound for a single empirical process and will be used in \cref{theorem:asymptotic}.
\begin{restatable}[Conditional variance bound for a single empirical process]{lemma}{Symmetrization}
    \label{theorem:symmetrization}
    Let $f:\mathcal{Z}\to\mathbb{R}$ be any measurable function and satisfy $\norm{f}^2_{P,2}<\infty$, where $\norm{f}_{P,2}=(P f^2)^{1/2}$. Then
    \begin{align*}
        \Exp{(\G_\n f)^2}\leq\norm{f}^2_{P,2}~~a.s.
    \end{align*}
\end{restatable}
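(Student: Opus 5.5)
The plan is a direct second-moment computation, using only the i.i.d. structure and the definition of $\G_\n$. Write $\G_\n f=\n^{-1/2}\sum_{i=1}^\n\bigl(f(Z_i)-Pf\bigr)$ and set $\xi_i\coloneqq f(Z_i)-Pf$. Since $\norm{f}_{P,2}<\infty$, Cauchy--Schwarz gives $P|f|\le\norm{f}_{P,2}<\infty$. Hence $Pf$ is finite, each $\xi_i$ is square integrable, and $\Exp{\xi_i}=0$.

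Expanding the square gives
\begin{align*}
    \Exp{(\G_\n f)^2}=\frac1\n\sum_{i=1}^\n\Exp{\xi_i^2}+\frac1\n\sum_{i\neq j}\Exp{\xi_i\xi_j}.
\end{align*}
For the cross terms with $i\neq j$, independence of $Z_i$ and $Z_j$ together with integrability gives $\Exp{\xi_i\xi_j}=\Exp{\xi_i}\Exp{\xi_j}=0$. For the diagonal terms, the $Z_i$ are identically distributed, so each equals $\Var{f(Z)}=Pf^2-(Pf)^2$. Therefore
\begin{align*}
    \Exp{(\G_\n f)^2}=Pf^2-(Pf)^2\le Pf^2=\norm{f}_{P,2}^2.
\end{align*}

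The qualifier ``a.s.'' matters when this lemma is applied in \cref{theorem:asymptotic}. There $f$ is a random difference of estimated and true scores, for example $\momentfunc^{\pi}\ut(\cdot;\estestimator\ut,\est\nuisance\ut)-\momentfunc^{\pi}\ut(\cdot;\estimator\ut,\nuisance\ut)$, with $\est\nuisance\ut$ fitted on the complementary folds. The expectation should then be read conditionally on the training-fold data, and the evaluation-fold points are i.i.d. and independent of $f$. Applying the argument above under this conditional law gives the bound almost surely in the conditioning data. I therefore expect no real obstacle. The only points needing care are checking integrability, which justifies both the expansion and the factorization of cross terms, and making explicit that independence between $f$ and the evaluation sample is what licenses the conditional version of the argument.
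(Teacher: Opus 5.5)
Your proof is correct and follows the paper's argument: expand the square, kill the cross terms by independence, identify the diagonal terms with $\Var{f(Z)}$, and bound this by $Pf^2$; your explicit integrability check is a small improvement over the paper. One slip in your side remark: in the paper's application the function is $\Delta_\eta\momentfunc^{(-\ell)}\uwt$, evaluated at the true $\estimator\ut$ rather than $\estestimator\ut$, and this matters because plugging in $\estestimator\ut$ would make $f$ depend on the evaluation fold and destroy exactly the conditional independence you rightly identify as essential.
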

\begin{proof}
    By definition of the empirical process $\G_\n$,
    \begin{align*}
        \Exp{(\G_\n f)^2}&=\Exp{(\sqrt\n(P_\n-P)f)^2}
        =\Exp{\left(\frac{1}{\sqrt\n}\sum_{i=1}^\n(f(Z_i)-Pf)\right)^2}\\
        &=\frac{1}{\n}\sum_{i=1}^\n\Exp{(f(Z_i)-Pf)^2}+\frac{1}{\n}\sum_{i\neq j}\Exp{(f(Z_i)-Pf)(f(Z_j)-Pf)}\\
        &=\frac{1}{\n}\sum_{i=1}^\n\Exp{(f(Z_i)-Pf)^2}+\frac{1}{\n}\sum_{i\neq j}\Exp{(f(Z_i)-Pf)}\Exp{(f(Z_j)-Pf)}\\
        &=\frac{1}{\n}\sum_{i=1}^\n\Exp{(f(Z_i)-Pf)^2}=\Exp{(f(Z)-Pf)^2}=\Var{f(Z)}\\
        &\leq\Exp{f^2(Z)}=\norm{f}^2_{P,2}
    \end{align*}
    % Next, note that $\mathbb{G}_n f$ is $\sigma(Z_1,\dots,Z_n)$-measurable, and by assumption $\sigma(Z_1,\dots,Z_n)$ is independent of $\mathcal{G}$.
    % Therefore $(\mathbb{G}_n f)^2$ is independent of $\mathcal{G}$, and we have
    % \begin{align*}
    %     \Exp{(\G_\n f)^2\mid\mathcal{G}}=\Exp{(\G_\n f)^2}\leq\norm{f}^2_{P,2}~~a.s.
    % \end{align*}
    Hence, the desired result is obtained.
\end{proof}

% \begin{definition}[Covering numbers]
%     The covering number $N(\varepsilon,\mathcal{F},\|\cdot\|)$ is the minimal number of balls $\{g:\|g-f\|<\varepsilon\}$ of radius $\varepsilon$ needed to cover the set $\mathcal{F}$.
%     The centers of the balls need not belong to $\mathcal{F}$, but they should have finite norms.
% \end{definition}

% \begin{definition}[Envelope function]
%     An envelope function of a class $\mathcal{F}$ is any function $x\mapsto F(x)$ such that $|f(x)|\leq F(x)$ for every $x$ and $f\in\mathcal{F}$.
% \end{definition}

% \begin{definition}[VC-type class]
%     We say $\mathcal{F}$ is of VC-type with coefficients $(\alpha,v)$ and envelope $F$ if the uniform covering numbers satisfy the following:
%     \begin{align*}
%         \sup_QN(\varepsilon\|F\|_{Q,2},\mathcal{F}\ut,L_2(Q))\leq\left(\frac{\alpha}{\varepsilon}\right)^v,\quad\forall\varepsilon\in(0,1],
%     \end{align*}
%     where the supremum is taken over all finitely discrete probability measures.
% \end{definition}
\subsection{Semiparametric Efficiency Theory}
In this section, we provide a brief review of semiparametric theory, focusing on the definitions of the influence function and semiparametric efficiency.
We consider a generic observation $\observe$ taking values in a measurable space $(\mathcal{Z},\mathcal{A})$ and
let $\mathcal{\Prob}$ be a set of all probability measures on $\observespace$.
We specify a statistical model $\mathcal{M}\subset\mathcal{\Prob}$,
which is a (possibly infinite-dimensional) collection of candidate distributions.
We assume that the model $\mathcal{M}$ is dominated by a common $\sigma$-finite measure $\mu$ (e.g., the Lebesgue measure), such that every distribution $\Prob\in\mathcal{M}$ admits a probability density function $\prob=d\Prob/d\mu$.
We assume that the data generating process follows a distribution $P\in\mathcal{M}$.

In semiparametric theory, we are interested in a low-dimensional parameter $\estimator\in\R$ (e.g., mean, variance, and average treatment effect), while leaving the rest of the data generating mechanism unrestricted and treating them as nuisance components.
A convenient way to formalize this idea is to introduce a \textit{statistical functional} $T: \mathcal{M}\to\R$,
which maps each data distribution $P\in\mathcal{M}$ to the corresponding target parameter $\estimator$, i.e., $\estimator = T(P)$.
Correspondingly, an estimator $\est{\estimator}_n$ is often conceptually motivated by the plug-in estimator $\est{\estimator}_n = T(\Prob_\n)$, assuming the domain of $T$ can be extended to include the empirical distribution $\Prob_n$.
While this provides an intuitive baseline,
rigorous estimation requires analyzing how sensitive the target parameter is to local perturbations within the model $\mathcal{M}$.
% This functional perspective allows us to analyze how sensitive the target parameter is to local perturbations in the underlying data distribution, which is crucial for establishing the efficiency of our estimator.

To describe this strictly,
we introduce the geometric structure of the model.
For any $\Prob\in\mathcal{M}$,
let $L^2(\Prob)$ denote the Hilbert space of square-integrable functions with respect to $\Prob$,
equipped with the inner product $\inner{f}{g}_{L^2(\Prob)}=\Exp{f(Z)g(Z)}$.
We further define the subspace of zero-mean functions as $L^2_0(\Prob)\coloneqq\{f\in L^2(\Prob):\Exp{f(Z)}=0\}$.
Based on these spaces, we provide the following definitions.
\begin{definition}[Tangent space]
    Let $\mathcal{M}$ be the statistical model and
    let $\Prob\in\mathcal{\mathcal{M}}$ be any distribution.
    The \textbf{tangent space} $\mathscr{T}(\Prob)\subset L^2_0(\Prob)$ of the model at $\Prob$ is defined as the closed linear span of all score functions $S(Z)=\partial_\alpha\log p_\alpha(Z)\big|_{\alpha=0}$
    associated with regular parametric submodels\footnote{Regular parametric submodels satisfy the differentiable in quadratic mean (DQM) condition~\cite{van-der-Vaart1998-dn}.} $\{\Prob_\alpha:\alpha\in(-\varepsilon,\varepsilon)\}\subset\mathcal{M}$ passing through $\Prob$.
\end{definition}

\begin{definition}[Pathwise differentiability]
    Let $T: \mathcal{M}\to\R$ be a statistical functional and let $P$ be the true data distribution.
    Let $\mathscr{T}(\Prob)\subset L^2_0(\Prob)$ denote the tangent space of the model at $\Prob$.
    The functional $T$ is said to be \textbf{pathwise differentiable} at $\Prob$ relative to the tangent space $\mathscr{T}(\Prob)$
    if there exists a continuous linear functional $T_\Prob:\mathscr{T}(\Prob)\to\R$ such that for every regular parametric submodel $\{\Prob_\alpha:\alpha\in(-\varepsilon,\varepsilon)\}\subset\mathcal{M}$ passing through $\Prob$ at $\alpha=0$ with score function $S(Z)$, the following equation holds $\partial_\alpha T(\Prob_\alpha)|_{\alpha=0}=T_\Prob(S)$.
    Here, the linear map $T_\Prob$ is called the pathwise derivative of $\,T$ at $\Prob$.
\end{definition}

\begin{definition}[Gradient]
    \label{def:gradient}
    Let $T:\mathcal{M}\to\R$ be a statistical functional and
    let $\Prob$ be the true distribution.
    Let $\mathscr{T}(\Prob)\subset L^2_0(P)$ denote the tangent space of the model at $P$.
    A function $\phi(\cdot;\Prob)\in L^2_0(\Prob)$ is referred to as a \textbf{gradient} of $\,T$ at $\Prob$
    if it represents the pathwise derivative $T_\Prob$ via the inner product in $L^2(\Prob)$.
    That is, for every regular parametric submodel $\{\Prob_\alpha:\alpha\in(-\varepsilon,\varepsilon)\}\subset\mathcal{M}$ passing through $\Prob$ with score function $S \in \mathscr{T}(P)$,
    we have $\partial_\alpha T(\Prob_\alpha)|_{\alpha=0}=\Exp{\phi(Z; P)S(Z)}$.
\end{definition}

\begin{definition}[Influence function]
    Consider an estimator $\estestimator_\n$ for $\estimator = T(\Prob)$.
    The estimator is said to be asymptotically linear if it satisfies the following expansion:
    $$ \sqrt{n}(\est{\estimator}_n-\theta)=\frac{1}{\sqrt{n}}\sum_{i=1}^n \psi(Z_i; P) + o_p(1),$$
    where $\psi(\cdot;\Prob) \in L^2_0(P)$ is a zero-mean square-integrable function. The function $\psi$ is called the \textbf{influence function} of the estimator $\est{\theta}_n$. For any regular asymptotically linear estimator, its influence function $\psi$ must be a gradient as defined in \cref{def:gradient}.
\end{definition}

\begin{definition}[Efficient influence function]
\label{def:efficient_influence_function}
Among all possible gradients for the functional $T$ at $\Prob$,
the \textbf{efficient influence function} (or canonical gradient), denoted by $\tilde{\phi}(\cdot;\Prob)$, is the unique gradient that lies in the tangent space $\mathscr{T}(P)$; that is, $\tilde{\phi}\in\mathscr{T}(\Prob)$.
The estimator whose influence function equals the efficient influence function achieves the minimum asymptotic variance among all regular estimators (semiparametric efficiency bound).
\end{definition}
Intuitively,
the score function $S(Z)$ is the sensitivity of the log-likelihood to infinitesimal changes in the parameter $\alpha$ and the pathwise derivative $T_\Prob$ is the sensitivity of the target parameter to changes in the distribution.
Then, the influence function quantifies the effect of an infinitesimal contamination at point $z$ on the estimate of the target parameter.
The efficient influence function is unique (almost surely) and equals the orthogonal projection of any gradient $\phi(\cdot;P)$ onto the tangent space $\mathscr{T}(P)$ (equivalently, any other gradient differs from $\tilde{\phi}$ by an element of $\mathscr{T}(P)^\perp$).

In the context of our paper, the moment function $\momentfunc^\pi\ut(\histobserve\ut; \estimator\ut, \stackadjfunc,\stackprob)$ derived in \cref{eq:moment_func} is equivalent to the efficient influence function for our target parameter $\estimator\ut$ (with the above definitions understood component-wise for the finite-dimensional vector $\estimator\ut$).

\section{Proofs}
\label{appendix:proofs}
We provide proofs of main results and technical lemmas.

\subsection{Proof of \cref{theorem:moment_condition}}

\MomentCond*
\begin{proof}
    It is sufficient to derive that $\Exp{\momentfunc^\pi\uwt(\histobserve\ut;\estimator\ut,\stackadjfunc,p\ut)}=0$ for each $\wt$.
    We decompose it via the linearity of expectation.
    \begin{align*}
        \Exp{\momentfunc^\pi\uwt(\histobserve\ut;\estimator\ut,\stackadjfunc,p\ut)}
        &=\underbrace{\Exp{\frac{\1\{\histtreat\ut=\realhisttreat\ut\}\cdot(\outcome\ut-\Gfunc^{(\t)}\uwt(\histhistory\ut))}{\pi\uwt}}}_{\text{(I)}}
        +\underbrace{\Exp{\correct^{(\t)}\uwt(\histhistory\ut;\adjfunc,\{\prob_{\wtau}\}_{\tau=1}^{\t-1})}}_{\text{(II)}}
        +\underbrace{\Exp{\Gfunc^{(1)}\uwt(\covariate_1)-\outcomefunc(\t)}}_{\text{(III)}}
    \end{align*}
    First term $\text{(I)}$:
    \begin{align*}
        \Exp{\frac{\1\{\histtreat\ut=\realhisttreat\ut\}\cdot(\outcome\ut-\Gfunc^{(\t)}\uwt(\histhistory\ut))}{\pi\uwt}}
        &=\Exp{\frac{\1\{\histtreat\ut=\realhisttreat\ut\}\cdot(\outcome\ut(\wt)-\Gfunc^{(\t)}\uwt(\histhistory\ut))}{\pi\uwt}}
        \\
        &=\Exp{\Exp{\frac{\1\{\histtreat\ut=\realhisttreat\ut\}\cdot(\outcome\ut(\wt)-\Gfunc^{(\t)}\uwt(\histhistory\ut))}{\pi\uwt}~\Bigg|~\histhistory\ut,\histtreat\ut}}
        \\
        &=\Exp{\frac{\1\{\histtreat\ut=\realhisttreat\ut\}}{\pi\uwt}\cdot\left\{\Exp{\outcome\ut(\wt)\mid\histhistory\ut,\histtreat\ut=\realhisttreat\ut}-\Gfunc^{(\t)}\uwt(\histhistory\ut)\right\}}\\
        &=\Exp{\frac{\1\{\histtreat\ut=\realhisttreat\ut\}}{\pi\uwt}\cdot\left\{\Exp{\outcome\ut(\wt)\mid\histhistory\ut(\wtm),\histtreat\ut=\realhisttreat\ut}-\adjfunc^{(\t)}(\histhistory\ut(\wtm))\right\}}\\
        &=\Exp{\frac{\1\{\histtreat\ut=\realhisttreat\ut\}}{\pi\uwt}\cdot\left\{\Exp{\outcome\ut(\wt)\mid\histhistory\ut(\wtm)}-\adjfunc^{(\t)}(\histhistory\ut(\wtm))\right\}}\\
        % &=\Exp{\frac{\1\{\histtreat\ut=\realhisttreat\ut\}}{\pi\uwt}}\cdot\Exp{\Exp{\outcome\ut(\wt)\mid\histhistory\ut(\wtm)}-\adjfunc^{(\t)}(\histhistory\ut(\wtm))}\\
        % &=\Exp{\Exp{\outcome\ut(\wt)\mid\histhistory\ut,\histtreat\ut=\realhisttreat\ut}-\adjfunc^{(\t)}(\histhistory\ut(\wtm))}\\
        &=0
    \end{align*}
    Second term $\text{(II)}$:
    \begin{align*}
        \Exp{\correct^{(\t)}\uwt}&=\Exp{\sum_{\tau=1}^{\t-1}\frac{\1\{\histtreat_\tau=\wtau\}}{\pi_{\wtau}}{\Bigg\{\Gfunc\uwt^{\tau+1}(\histhistory_{\tau+1})-\int_{\historyspace_{\tau+1}}\Gfunc\uwt^{\tau+1}(\histhistory_{\tau},\realhistory_{\tau+1})\prob^{(\tau)}\uwtau(d\realhistory_{\tau+1}\mid\histhistory_\tau)\Bigg\}}}\\
        &=\sum_{\tau=1}^{\t-1}\Exp{\frac{\1\{\histtreat_\tau=\wtau\}}{\pi_{\wtau}}{\Bigg\{\Gfunc\uwt^{\tau+1}(\histhistory_{\tau+1})
        -\Exp{\Gfunc\uwt^{\tau+1}(\histhistory\utautau)\mid\histhistory\utau,\histtreat\utau=\realhisttreat\utau}
        \Bigg\}}}\\
        &=\sum_{\tau=1}^{\t-1}\Exp{\Exp{\frac{\1\{\histtreat_\tau=\wtau\}}{\pi_{\wtau}}{\Bigg\{\Gfunc\uwt^{\tau+1}(\histhistory_{\tau+1})
        -\Exp{\Gfunc\uwt^{\tau+1}(\histhistory\utautau)\mid\histhistory\utau,\histtreat\utau=\realhisttreat\utau}
        \Bigg\}}~\Bigg|~\histhistory\utau,\histtreat\utau}}
        \\
        % &=\sum_{\tau=1}^{\t-1}\Exp{\Exp{\frac{\1\{\histtreat_\tau=\wtau\}}{\pi_{\wtau}}\Gfunc\uwt^{\tau+1}(\histhistory_{\tau+1})~\Bigg|~\histhistory\utau,\histtreat\utau}-\Exp{\Gfunc\uwt^{\tau+1}(\histhistory\utautau)\mid\histhistory\utau,\histtreat\utau=\realhisttreat\utau}}\\
        &=\sum_{\tau=1}^{\t-1}\Exp{\frac{\1\{\histtreat_\tau=\wtau\}}{\pi_{\wtau}}\Bigg\{\Exp{\Gfunc\uwt^{\tau+1}(\histhistory_{\tau+1})\mid\histhistory\utau,\histtreat\utau=\realhisttreat\utau}-\Exp{\Gfunc\uwt^{\tau+1}(\histhistory\utautau)\mid\histhistory\utau,\histtreat\utau=\realhisttreat\utau}\Bigg\}}\\
        &=0
    \end{align*}
    Third term $\text{(III)}$:
    \begin{align*}
        \Exp{\Gfunc^{(1)}\uwt(\covariate_1)-\outcomefunc(\t)}
        &=\Exp{\Gfunc^{(1)}\uwt(\covariate_1)}-\outcomefunc(\t)=0
    \end{align*}
    Hence, the desired result is obtained because we have $\text{(I)} + \text{(II)} + \text{(III)} = 0$.
\end{proof}

\subsection{Proof of \cref{theorem:neyman}}

\Neyman*
\begin{proof}
    Fix $\t\in[T]$ and $\wt\in\bar{\mathcal W}_{\t}$.
    It suffices to prove the claim component-wise for $\momentfunc^\pi\uwt$.
    % For notational simplicity, write $\eta_r=(m_r,p_r)$ for the admissible nuisance path $\eta_t+r(\eta-\eta_t)$, and define
    Let $r\mapsto\eta_r=(m_r,p_r)$ be an admissible path in the nuisance space, defined in a neighborhood of $0$, with $\eta_0=\eta_t$. Define
    \[
        \Phi\uwt(r)
        \coloneqq
        \Exp{\momentfunc^\pi\uwt(\histobserve\ut;\estimator\ut,m_r,p_r)}.
    \]
    For each $r$, let $\Gfunc^{(\tau)}_{\wt,r}$ and
    $\correct^{(\t)}_{\wt,r}$ be the quantities defined in
    \cref{eq:Gtau} and \cref{eq:Ap}, respectively, with
    $(\m^{(\t)}_{\wt},\{\prob^{(\tau)}_{\wtau}\}_{\tau=1}^{\t-1})$
    replaced by
    $(\m^{(\t)}_{\wt,r},\{\prob^{(\tau)}_{\wtau,r}\}_{\tau=1}^{\t-1})$.
    In particular,
    \[
        \Gfunc^{(\t)}_{\wt,r}(\bar h_{\t})
        = \m^{(\t)}_{\wt,r}(\bar h_{\t}),
    \]
    and, recursively for $\tau=\t-1,\ldots,1$,
    \[
        \Gfunc^{(\tau)}_{\wt,r}(\bar h_{\tau})
        =
        \int_{\mathcal H_{\tau+1}}
        \Gfunc^{(\tau+1)}_{\wt,r}(\bar h_{\tau},h_{\tau+1})
        \prob^{(\tau)}_{\wtau,r}(dh_{\tau+1}\mid \bar h_{\tau}).
    \]
    The target parameter $\outcomefunc(\t)$ and the assignment probabilities
    $\pi\uwt$ do not depend on $r$.

    We first record the inverse-probability identity used below.
    By consistency and randomization, for any integrable measurable function
    $f$ on $\bar{\mathcal H}_{\tau+1}$,
    \begin{align*}
        \Exp{\frac{\1\{\histtreat_{\tau}=\wtau\}}{\pi_{\wtau}}
        f(\histhistory_{\tau+1})}
        &=
        \int_{\bar{\mathcal H}_{\tau+1}}
        f(\bar h_{\tau+1})
        \Prob^{\bar w_{\tau}}_{\bar H_{\tau+1}}(d\bar h_{\tau+1}),
    \end{align*}
    and, for any integrable measurable function $g$ on
    $\bar{\mathcal H}_{\tau}$,
    \begin{align*}
        \Exp{\frac{\1\{\histtreat_{\tau}=\wtau\}}{\pi_{\wtau}}
        g(\histhistory_{\tau})}
        &=
        \int_{\bar{\mathcal H}_{\tau}}
        g(\bar h_{\tau})
        \Prob^{\bar w_{\tau-1}}_{\bar H_{\tau}}(d\bar h_{\tau}),
    \end{align*}
    where $\Prob^{\bar w_{0}}_{\bar H_{1}}$ is understood as the marginal
    law $\Prob_{H_1}$ of the baseline history.
    These identities follow, for example, from
    \begin{align*}
        \Exp{\frac{\1\{\histtreat_{\tau}=\wtau\}}{\pi_{\wtau}}
        f(\histhistory_{\tau+1})}
        &=
        \Exp{\frac{\1\{\histtreat_{\tau}=\wtau\}}{\pi_{\wtau}}
        f(\histhistory_{\tau+1}(\bar w_{\tau}))} \\
        &=
        \Exp{f(\histhistory_{\tau+1}(\bar w_{\tau}))},
    \end{align*}
    and the second identity is shown analogously.
    % In particular, these identities do not require $\Prob(\histtreat_{\tau}=\wtau\mid\histhistory_{\tau})=\pi_{\wtau}$.
    The same consistency and randomization argument applied to the outcome term gives
    \begin{align*}
        \Exp{\frac{\1\{\histtreat\ut=\wt\}}{\pi\uwt}\outcome\ut}
        &= \outcomefunc(\t),
    \end{align*}
    and applying the inverse-probability identity to the adjustment term gives
    \begin{align*}
        \Exp{\frac{\1\{\histtreat\ut=\wt\}}{\pi\uwt}
        \Gfunc^{(\t)}_{\wt,r}(\histhistory\ut)}
        &=
        \int_{\bar{\mathcal H}_{\t}}
        \Gfunc^{(\t)}_{\wt,r}(\bar h_{\t})
        \Prob^{\wtm}_{\bar H_{\t}}(d\bar h_{\t}).
    \end{align*}
    Next, by the same identity and by the definition of
    $\correct^{(\t)}_{\wt,r}$,
    \begin{align*}
        \Exp{\correct^{(\t)}_{\wt,r}}
        &=
        \sum_{\tau=1}^{\t-1}
        \Bigg[
        \int_{\bar{\mathcal H}_{\tau+1}}
        \Gfunc^{(\tau+1)}_{\wt,r}(\bar h_{\tau+1})
        \Prob^{\bar w_{\tau}}_{\bar H_{\tau+1}}(d\bar h_{\tau+1})
        \\
        &\qquad\qquad
        -
        \int_{\bar{\mathcal H}_{\tau}}
        \left\{
        \int_{\mathcal H_{\tau+1}}
        \Gfunc^{(\tau+1)}_{\wt,r}(\bar h_{\tau},h_{\tau+1})
        \prob^{(\tau)}_{\wtau,r}(dh_{\tau+1}\mid \bar h_{\tau})
        \right\}
        \Prob^{\bar w_{\tau-1}}_{\bar H_{\tau}}(d\bar h_{\tau})
        \Bigg].
    \end{align*}
    By the recursive definition of $\Gfunc^{(\tau)}_{\wt,r}$, the inner
    integral in the second term equals
    $\Gfunc^{(\tau)}_{\wt,r}(\bar h_{\tau})$. Hence,
    \begin{align*}
        \Exp{\correct^{(\t)}_{\wt,r}}
        &=
        \sum_{\tau=1}^{\t-1}
        \Bigg[
        \int_{\bar{\mathcal H}_{\tau+1}}
        \Gfunc^{(\tau+1)}_{\wt,r}(\bar h_{\tau+1})
        \Prob^{\bar w_{\tau}}_{\bar H_{\tau+1}}(d\bar h_{\tau+1})
        -
        \int_{\bar{\mathcal H}_{\tau}}
        \Gfunc^{(\tau)}_{\wt,r}(\bar h_{\tau})
        \Prob^{\bar w_{\tau-1}}_{\bar H_{\tau}}(d\bar h_{\tau})
        \Bigg].
    \end{align*}
    Finally,
    \begin{align*}
        \Exp{\Gfunc^{(1)}_{\wt,r}(\covariate_1)}
        &=
        \int_{\mathcal H_1}
        \Gfunc^{(1)}_{\wt,r}(h_1)\Prob_{H_1}(dh_1).
    \end{align*}
    Combining the preceding displays, we obtain
    \begin{align*}
        \Phi\uwt(r)
        &=
        \outcomefunc(\t)
        -
        \int_{\bar{\mathcal H}_{\t}}
        \Gfunc^{(\t)}_{\wt,r}(\bar h_{\t})
        \Prob^{\wtm}_{\bar H_{\t}}(d\bar h_{\t})
        \\
        &\quad+
        \sum_{\tau=1}^{\t-1}
        \Bigg[
        \int_{\bar{\mathcal H}_{\tau+1}}
        \Gfunc^{(\tau+1)}_{\wt,r}(\bar h_{\tau+1})
        \Prob^{\bar w_{\tau}}_{\bar H_{\tau+1}}(d\bar h_{\tau+1})
        -
        \int_{\bar{\mathcal H}_{\tau}}
        \Gfunc^{(\tau)}_{\wt,r}(\bar h_{\tau})
        \Prob^{\bar w_{\tau-1}}_{\bar H_{\tau}}(d\bar h_{\tau})
        \Bigg]
        \\
        &\quad+
        \int_{\mathcal H_1}
        \Gfunc^{(1)}_{\wt,r}(h_1)\Prob_{H_1}(dh_1)
        -\outcomefunc(\t)
        \\
        &=0.
    \end{align*}
    The last equality follows by exact telescoping of the finite sum.
    Therefore, $\Phi\uwt(r)=0$ for every admissible $r$ in a neighborhood
    of zero. In particular,
    \[
        \frac{\partial}{\partial r}
        \Exp{\momentfunc^\pi\uwt(\histobserve\ut;\estimator\ut,m_r,p_r)}
        \bigg|_{r=0}
        =
        \Phi'\uwt(0)
        =0.
    \]
    Since the argument holds for every $\wt\in\bar{\mathcal W}_{\t}$,
    the vector-valued moment function also satisfies
    \[
        \frac{\partial}{\partial r}
        \Exp{\momentfunc^\pi\ut(\histobserve\ut;\estimator\ut,m_r,p_r)}
        \bigg|_{r=0}
        =0.
    \]
    The desired result is obtained.
\end{proof}

\subsection{Proof of \cref{theorem:sample_moment_condition}}

\SampleMomentCond*
\begin{proof}
    For any $\t\in[\T]$ and $\wt\in\histtreatspace\ut$ with $\n\uwt>0$,
    % let $\est{\momentfunc}\uwt$ denote the empirical counterpart of $\momentfunc\uwt$ whose residual term is normalized by $\est{\pi}\uwt=\n\uwt/\n$.
    we substitute the estimators $\estadjfunc^{(\t)}$ for $\adjfunc^{(\t)}$, $\estprob\uwtau^{(\tau)}$ for $\prob\uwtau^{(\tau)}$, and $\n\uwt/\n=\est{\pi}\uwt$ in this empirical moment function,
    and evaluate it using the empirical probability measure.
    \begin{align*}
        P_\n\momentfunc^{\est{\pi}}\uwt
        &=\frac1\n\sum_{i=1}^\n
        \momentfunc^{\est{\pi}}\uwt(\histobserve\uit;\estestimator\ut, \eststackadjfunc,\eststackprob)\\
        &=\frac1\n\sum_{i=1}^\n\left[\frac{\1\{\histtreat\uit=\wt\}\cdot(\outcome\uit-\estGfunc^{(\t)}\uwt(\histhistory\uit))}{\est{\pi}\uwt}
        +\estcorrect^{(\t)}\uwt(\histhistory\uit)
        +\estGfunc^{(1)}\uwt(\covariate_{i,1})\right]-\estoutcomefunc^{dy\text{-}adj}(\t)\\
        &=\frac{1}{\n\uwt}\sum_{i:\histtreat\uit=\wt}{(\outcome\uit-\estGfunc^{(\t)}\uwt(\histhistory\uit))}
        +\frac1\n\sum_{i=1}^n
        \left[\estcorrect^{(\t)}\uwt(\histhistory\uit)
        +\estGfunc^{(1)}\uwt(\covariate_{i,1})\right]-\estoutcomefunc^{dy\text{-}adj}(\t).
    \end{align*}
    Imposing $P_\n\momentfunc^{\est{\pi}}\uwt=0$, we have a unique solution:
    \begin{align*}
        \estoutcomefunc^{dy\text{-}adj}(\t)=\frac{1}{\n\uwt}\sum_{i:\histtreat\uit=\wt}{(\outcome\uit-\estGfunc^{(\t)}\uwt(\histhistory\uit))}
        +\frac1\n\sum_{i=1}^n
        \left[\estcorrect^{(\t)}\uwt(\histhistory\uit)
        +\estGfunc^{(1)}\uwt(\covariate_{i,1})\right],
    \end{align*}
    which is equivalent to the dynamic regression-adjusted estimator defined in \cref{eq:tadj_estimator}.
\end{proof}

\subsection{Proof of \cref{theorem:asymptotic}}

\Asymptotic*
\begin{proof}
    % In this proof, we first derive the following relation:
    % \begin{align*}
    %     \sqrt\n\big(\estoutcomefunc^{dy\text{-}adj}(\t)-\outcomefunc(\t)\big)=\frac{1}{\sqrt\n}\sum_{i=1}^\n\momentfunc\uwt(\histobserve\uit;\estimator\ut,\m\ut,\prob\ut)+o_p(1),
    % \end{align*}
    % and we can obtain the desired result by the central limit theorem for the stacked above relation.
    Throughout this proof, we consider the exact recursive-integration version of \cref{alg:overview}, where the recursive integrals in \cref{eq:Gtau} are evaluated exactly.
    The feasible score $\momentfunc^{\est{\pi}}$ is understood as the score obtained by replacing every treatment-path probability appearing in $\momentfunc^\pi\uwt$, including the prefix probabilities in the auxiliary correction term, by its empirical counterpart.
    All arguments below are on the event that the empirical probabilities appearing in the proof are positive; under \cref{assume:iid,assume:positivity} and fixed feasible treatment histories, this event has probability tending to one.
    Recalling that $\estimator\ut=(\outcomefunc(t))_{\wt\in\histtreatspace\ut}$,
    it suffices to derive an asymptotic distribution of each component $\estoutcomefunc^{dy\text{-}adj}(t)-\outcomefunc(t)$ and stack these results over $\wt\in\histtreatspace\ut$.
    Our proof follows the four steps below:
    \begin{enumerate}
        \item We derive an asymptotic linear representation of the dynamic regression-adjusted estimator, showing that for each treatment path $\wt$,
        we express $\sqrt\n
        \big(\estoutcomefunc^{dy\text{-}adj}(\t)-\outcomefunc(\t)\big)$ as an average of the score plus two remainder terms.
        \item We then show that the empirical process remainder term is $o_p(1)$ by combining the stability properties of the forward operator with the convergence rates of the nuisance estimators and standard moment bounds for the empirical process.
        \item Next, we exploit the Neyman orthogonality (\cref{theorem:neyman}) of the score and the Lipschitz continuity of the moment map in the nuisance functions to prove that the bias remainder is $o_p(1)$.
        \item Lastly, since the score has mean zero (\cref{theorem:moment_condition}) and finite $(2+\delta)$-moments (\cref{assume:finite_IF}), the multivariate central limit theorem applied to the stacked scores $\momentfunc^\pi\ut$ yields the asserted asymptotic normality of $\sqrt{\n}(\estestimator\ut-\estimator\ut)$.
    \end{enumerate}
    \mypara{Step 1. Linear representation of our estimator $\estoutcomefunc^{dy\text{-}adj}(\t)$}
    From \cref{theorem:sample_moment_condition},
    our estimator $\estestimator\ut=(\estoutcomefunc^{dy\text{-}adj}(t))_{\wt\in\histtreatspace\ut}$ is the solution to the following equation
    \begin{align*}
        \frac 1\n\sum_{i=1}^\n\momentfunc^{\est{\pi}}\ut(\histobserve\uit;\estestimator\ut,\eststackadjfunc,\eststackprob)=P_\n\momentfunc^{\est{\pi}}\ut(\histobserve\uit;\estestimator\ut,\eststackadjfunc,\eststackprob)=0
    \end{align*}
    Equivalently, for each longitudinal path $\wt\in\histtreatspace\ut$,
    \begin{align*}
        P_\n\momentfunc^{\est{\pi}}\uwt(\histobserve\uit;\estestimator\ut,\eststackadjfunc,\eststackprob)=0.
    \end{align*}
    We decompose the above formula as follows:
    \begin{align*}
        P_\n\momentfunc^{\est{\pi}}\uwt(\histobserve\uit;\estestimator\ut,\eststackadjfunc,\eststackprob)&=P_\n\momentfunc^\pi\uwt(\histobserve\uit;\estimator\ut,\stackadjfunc,\stackprob)+P_\n\Delta\momentfunc\uwt(\histobserve\ut)+P_\n\{\momentfunc^{\est{\pi}}\uwt(\histobserve\uit;\estestimator\ut,\eststackadjfunc,\eststackprob)-\momentfunc^{\est{\pi}}\uwt(\histobserve\uit;\estimator\ut,\eststackadjfunc,\eststackprob)\},
    \end{align*}
    where $\Delta\momentfunc\uwt(\histobserve\ut)=\momentfunc^{\est{\pi}}\uwt(\histobserve\uit;\estimator\ut,\eststackadjfunc,\eststackprob)-\momentfunc^\pi\uwt(\histobserve\uit;\estimator\ut,\stackadjfunc,\stackprob)$ represents the perturbation term induced by replacing the true nuisance functions and treatment-path probabilities with their estimates.
    Next, we consider the third term of the RHS.
    \begin{align*}
        \momentfunc^{\est{\pi}}\uwt(\histobserve\uit;\estestimator\ut,\eststackadjfunc,\eststackprob)-\momentfunc^{\est{\pi}}\uwt(\histobserve\uit;\estimator\ut,\eststackadjfunc,\eststackprob)
        &=-(\estoutcomefunc^{dy\text{-}adj}(\t)-\outcomefunc(\t))\\
        P_\n\{\momentfunc^{\est{\pi}}\uwt(\histobserve\uit;\estestimator\ut,\eststackadjfunc,\eststackprob)-\momentfunc^{\est{\pi}}\uwt(\histobserve\uit;\estimator\ut,\eststackadjfunc,\eststackprob)\}
        &=-(\estoutcomefunc^{dy\text{-}adj}(\t)-\outcomefunc(\t)).
    \end{align*}
    The second equality holds because the RHS does not depend on $\histobserve\ut$. Then
    \begin{align*}
        \estoutcomefunc^{dy\text{-}adj}(\t)-\outcomefunc(\t)&=P_\n\momentfunc^\pi\uwt(\histobserve\uit;\estimator\ut,\stackadjfunc,\stackprob)+P_\n\Delta\momentfunc\uwt(\histobserve\ut)\\
        \sqrt\n(\estoutcomefunc^{dy\text{-}adj}(\t)-\outcomefunc(\t))&=\G_\n\momentfunc^\pi\uwt(\histobserve\uit;\estimator\ut,\stackadjfunc,\stackprob)+\sqrt\n P_\n\Delta\momentfunc\uwt(\histobserve\ut)\\
        &=\G_\n\momentfunc^\pi\uwt(\histobserve\uit;\estimator\ut,\stackadjfunc,\stackprob)
        +\underbrace{\sqrt\n (P_\n-P)\Delta\momentfunc\uwt(\histobserve\ut)}_{\text{empirical process remainder } R_1}
        +\underbrace{\sqrt\n P\Delta\momentfunc\uwt(\histobserve\ut)}_{\text{bias remainder } R_2},
    \end{align*}
    where $\sqrt\n P_\n\momentfunc^\pi\uwt(\cdot)=\G_\n\momentfunc^\pi\uwt(\cdot)$ because of \cref{theorem:moment_condition}.
    In what follows, the perturbation $\Delta\momentfunc\uwt$ is defined using the cross‑fitted nuisance estimators obtained by \cref{alg:overview}.
    Specifically, for each unit $i$ in a fold $\ell$,
    we evaluate $\momentfunc^{\est{\pi}}\uwt(\histobserve\uit)$ with $(\eststackadjfunc,\eststackprob)=(\eststackadjfunc^{(-\ell)},\eststackprob^{(-\ell)})$.
    Thus, $R_1$ is the empirical process remainder term due to sampling variability under cross-fitting, and $R_2$ is the bias remainder term induced by plugging in the estimated nuisance functions and empirical treatment-path probabilities.
    We will show that $R_1=o_p(1)$ and $R_2=o_p(1)$.

    \mypara{Step 2. Empirical process remainder term $R_1$}
    First, we exploit the cross-fitting in \cref{alg:overview} to decompose $R_1$ into fold-specific terms $R_{1,\ell}$ that allow for conditional independence arguments.
    Let $I_1,\ldots,I_\fold$ be the random partition of $\n$ random samples into $\fold$ folds used in \cref{alg:overview}, and $\n_\ell=|I_\ell|$ be the sample size in a fold $I_\ell$.
    For each fold $\ell$,
    we consider $\{\histobserve\uit:i\in I_\ell^c\}$ as the training set;
    otherwise it is treated as the evaluation set,
    where $I_\ell^c\coloneqq \bigcup_{k\ne\ell}I_k$ is the complementary set of $I_\ell$.
    First, we fit the nuisance estimators $(\eststackadjfunc^{(-\ell)},\eststackprob^{(-\ell)})$ via the learning models $\regmodel$ and $\transmodel$ using only the training set.
    For later use, define the fold-specific one-step innovation in the auxiliary correction by
    \begin{align*}
        \widehat D_{\tau,\wt}^{(-\ell)}(\histhistory_{\tau+1})
        \coloneqq{}&\estGfunc\uwt^{(\tau+1),(-\ell)}(\histhistory_{\tau+1})-\int_{\historyspace_{\tau+1}}
        \estGfunc\uwt^{(\tau+1),(-\ell)}(\histhistory_\tau,\realhistory_{\tau+1})
        \estprob_{\wtau}^{(\tau),(-\ell)}(d\realhistory_{\tau+1}\mid\histhistory_\tau),
        \quad \tau=1,\ldots,\t-1.
    \end{align*}
    Let $\widehat A_{\pi,\wt}^{(\t),(-\ell)}$ be the auxiliary correction constructed from these innovations with the true prefix normalizers $\{\pi_{\wtau}\}_{\tau=1}^{\t-1}$, and let $\widehat A_{\est{\pi},\wt}^{(\t),(-\ell)}$ be its feasible counterpart with $\{\est\pi_{\wtau}\}_{\tau=1}^{\t-1}$:
    \begin{align*}
        \widehat A_{\pi,\wt}^{(\t),(-\ell)}(\histhistory\ut)
        \coloneqq\sum_{\tau=1}^{\t-1}\frac{\1\{\histtreat\utau=\wtau\}}{\pi_{\wtau}}\widehat D_{\tau,\wt}^{(-\ell)}(\histhistory_{\tau+1}),\qquad
        \widehat A_{\est{\pi},\wt}^{(\t),(-\ell)}(\histhistory\ut)
        \coloneqq\sum_{\tau=1}^{\t-1}\frac{\1\{\histtreat\utau=\wtau\}}{\est\pi_{\wtau}}\widehat D_{\tau,\wt}^{(-\ell)}(\histhistory_{\tau+1}),
    \end{align*}
    with the convention that these sums are zero when $\t=1$.
    We use the same notation without the superscript $(-\ell)$ for the resulting global cross-fitted quantity, evaluated fold by fold.
    Thus, $\momentfunc^\pi\uwt(\cdot;\estimator\ut,\eststackadjfunc^{(-\ell)},\eststackprob^{(-\ell)})$ uses $\widehat A_{\pi,\wt}^{(\t),(-\ell)}$, whereas $\momentfunc^{\est{\pi}}\uwt(\cdot;\estimator\ut,\eststackadjfunc^{(-\ell)},\eststackprob^{(-\ell)})$ uses $\widehat A_{\est{\pi},\wt}^{(\t),(-\ell)}$ and the empirical normalizer $\est\pi\uwt$ in the top-level residual term.
    Next,
    for each unit $i\in I_\ell$ in the evaluation set of fold $\ell$,
    we decompose the perturbation term as follows:
    \begin{align*}
        \Delta\momentfunc^{(-\ell)}\uwt
        &=\Delta_\eta\momentfunc^{(-\ell)}\uwt+\Delta_\pi\momentfunc^{(-\ell)}\uwt,\\
        \Delta_\eta\momentfunc^{(-\ell)}\uwt(\histobserve\ut)
        &\coloneqq\momentfunc^\pi\uwt(\histobserve\ut;\estimator\ut,\eststackadjfunc^{(-\ell)},\eststackprob^{(-\ell)})-
        \momentfunc^\pi\uwt(\histobserve\ut;\estimator\ut,\stackadjfunc,\stackprob),\\
        \Delta_\pi\momentfunc^{(-\ell)}\uwt(\histobserve\ut)
        &\coloneqq\momentfunc^{\est{\pi}}\uwt(\histobserve\ut;\estimator\ut,\eststackadjfunc^{(-\ell)},\eststackprob^{(-\ell)})-
        \momentfunc^\pi\uwt(\histobserve\ut;\estimator\ut,\eststackadjfunc^{(-\ell)},\eststackprob^{(-\ell)}).
    \end{align*}
    Accordingly, write $R_1=R_{1,\eta}+R_{1,\pi}$.

    We first control $R_{1,\eta}$.
    By the construction of the cross-fitted scores, for each unit $i\in I_\ell$, the nuisance perturbation term is given by
    \begin{align*}
        \Delta_\eta\momentfunc^{(-\ell)}_{\wt,i}
        \coloneqq\Delta_\eta\momentfunc^{(-\ell)}\uwt(\histobserve\uit)
        =\Delta_\eta\momentfunc\uwt(\histobserve\uit).
    \end{align*}
    Moreover, the global cross-fitted nuisance estimators $(\eststackadjfunc,\eststackprob)$
    are obtained by setting $(\eststackadjfunc,\eststackprob)=(\eststackadjfunc^{(-\ell)},\eststackprob^{(-\ell)})$ on the evaluation fold $I_\ell$.
    Consequently, the empirical average of $\Delta_\eta\momentfunc\uwt$ can be written as
    \begin{align*}
        P_n\Delta_\eta\momentfunc\uwt
        =\frac{1}{\n}\sum_{\ell=1}^\fold\sum_{i\in I_\ell}\Delta_\eta\momentfunc^{(-\ell)}\uwt(\histobserve\uit)
        =\sum_{\ell=1}^\fold\frac{\n_\ell}{\n}P_{n,\ell}\Delta_\eta\momentfunc^{(-\ell)}\uwt,
    \end{align*}
    where $P_{n,\ell}$ is the empirical measure on $\{\histobserve\uit:i\in I_\ell\}$.
    Also, by the definition of $\Delta_\eta\momentfunc\uwt$, the population expectation $P\Delta_\eta\momentfunc\uwt$ is written as a weighted average of $P\Delta_\eta\momentfunc\uwt^{(-\ell)}$ as follows:
    \begin{align*}
        P\Delta_\eta\momentfunc\uwt
        =\sum_{\ell=1}^\fold\frac{\n_\ell}{\n}P\Delta_\eta\momentfunc^{(-\ell)}\uwt,
    \end{align*}
    Therefore, the nuisance empirical process term $R_{1,\eta}$ can be written as
    \begin{align*}
        R_{1,\eta}
        =\sqrt{\n}(P_\n-P)\Delta_\eta\momentfunc\uwt
        =\sum_{\ell=1}^\fold\sqrt{\frac{\n_\ell}{\n}}\G_{\n,\ell}\Delta_\eta\momentfunc^{(-\ell)}\uwt
        =\sum_{\ell=1}^\fold R_{1,\eta,\ell},
    \end{align*}
    where $\G_{\n,\ell}f\coloneqq\sqrt{\n_\ell}(P_{n,\ell}-P)f$ is the empirical process based on the subsample $\{\histobserve\uit:i\in I_\ell\}$, and
    \begin{align*}
        R_{1,\eta,\ell}
        \coloneqq\sqrt{\frac{\n_\ell}{\n}}\G_{\n,\ell}\Delta_\eta\momentfunc^{(-\ell)}\uwt
        =\frac{1}{\sqrt\n}\sum_{i\in I_\ell}\left\{\Delta_\eta\momentfunc^{(-\ell)}_{\wt,i}-P\Delta_\eta\momentfunc^{(-\ell)}\uwt\right\},
    \end{align*}
    and $\G_{\n,\ell}$ is the empirical process based on the subsample $\{\histobserve\uit:i\in I_\ell\}$.

    Next, we control each fold-specific term $R_{1,\eta,\ell}$ through a conditional variance argument (\cref{theorem:symmetrization}) that exploits the independence between the training and evaluation samples induced by cross-fitting.
    For a given fold $\ell$, let $\mathcal{G}_\ell$ be the $\sigma$-algebra generated by the training set $\{\histobserve\uit:i\in I_\ell^c\}$ together with all the randomness used to construct $(\eststackadjfunc^{(-\ell)},\eststackprob^{(-\ell)})$.
    Conditional on $\mathcal{G}_\ell$,
    the function $\Delta_\eta\momentfunc^{(-\ell)}\uwt(\cdot)$ is deterministic,
    while the evaluation sample $\{\histobserve\uit:i\in I_\ell\}$ is i.i.d. with law $P$ and independent of $\mathcal{G}_\ell$ by \cref{assume:iid}.
    Therefore, the assumptions of \cref{theorem:symmetrization} are satisfied for the conditional law of $\{\histobserve\uit:i\in I_\ell\}$ given $\mathcal{G}_\ell$,
    and its variance bound applied with $\Exp{\cdot}$ replaced by $\Exp{\cdot\mid\mathcal{G}_\ell}$.
    Hence, applying \cref{theorem:symmetrization} to the empirical process $\G_{\n,\ell}$ on fold $\ell$ yields
    \begin{align*}
        \Exp{R_{1,\eta,\ell}^2\mid\mathcal{G}_\ell}
        =\frac{\n_\ell}{\n}\Exp{(\G_{\n,\ell}\Delta_\eta\momentfunc^{(-\ell)}\uwt)^2~\middle|~\mathcal{G}_\ell}
        \leq\frac{\n_\ell}{\n}\norm{\Delta_\eta\momentfunc^{(-\ell)}\uwt}_{P,2}^2~~a.s.
    \end{align*}
    Now, fix $\varepsilon>0$ and an arbitrary threshold $a>0$.
    By the conditional Chebyshev inequality,
    \begin{align}
        \label{eq:app_b_3_eta}
        \Prob(|R_{1,\eta,\ell}|>\varepsilon\mid\mathcal{G}_\ell)
        \leq\frac{1}{\varepsilon^2}\Exp{R_{1,\eta,\ell}^2\mid\mathcal{G}_\ell}
        \leq\frac{1}{\varepsilon^2}\frac{\n_\ell}{\n}\norm{\Delta_\eta\momentfunc^{(-\ell)}\uwt}_{P,2}^2.
    \end{align}
    Decomposing with respect to
    $\left\{\norm{\Delta_\eta\momentfunc\uwt^{(-\ell)}}_{P,2}\leq a\right\}$, we have
    \begin{align*}
        \Prob\left(|R_{1,\eta,\ell}|>\varepsilon\right)
        =\Prob\left(|R_{1,\eta,\ell}|>\varepsilon,\norm{\Delta_\eta\momentfunc\uwt^{(-\ell)}}_{P,2}>a\right)
        +\Prob\left(|R_{1,\eta,\ell}|>\varepsilon,\norm{\Delta_\eta\momentfunc\uwt^{(-\ell)}}_{P,2}\leq a\right).
    \end{align*}
    Using the basic inequality $\Prob(A\cap B)\leq\Prob(A)$, the first term is bounded by
    \begin{align*}
        \Prob\left(|R_{1,\eta,\ell}|>\varepsilon,\norm{\Delta_\eta\momentfunc\uwt^{(-\ell)}}_{P,2}>a\right)
        \leq\Prob\left(\norm{\Delta_\eta\momentfunc\uwt^{(-\ell)}}_{P,2}>a\right).
    \end{align*}
    For the second term, we use the conditional bound and \cref{eq:app_b_3_eta}
    \begin{align*}
        \Prob\left(|R_{1,\eta,\ell}|>\varepsilon,\norm{\Delta_\eta\momentfunc\uwt^{(-\ell)}}_{P,2}\leq a\right)
        &=\Exp{\1\left\{\norm{\Delta_\eta\momentfunc\uwt^{(-\ell)}}_{P,2}\leq a\right\}\1\left\{|R_{1,\eta,\ell}|>\varepsilon\right\}}\\
        &=\Exp{\Exp{\1\left\{\norm{\Delta_\eta\momentfunc\uwt^{(-\ell)}}_{P,2}\leq a\right\}\1\left\{|R_{1,\eta,\ell}|>\varepsilon\right\}~\middle|~\mathcal{G}_\ell}}\\
        &=\Exp{\1\left\{\norm{\Delta_\eta\momentfunc\uwt^{(-\ell)}}_{P,2}\leq a\right\}\Exp{\1\left\{|R_{1,\eta,\ell}|>\varepsilon\right\}~\middle|~\mathcal{G}_\ell}}\\
        &=\Exp{\1\left\{\norm{\Delta_\eta\momentfunc\uwt^{(-\ell)}}_{P,2}\leq a\right\}\Prob\left(|R_{1,\eta,\ell}|>\varepsilon\mid\mathcal{G}_\ell\right)}\\
        &\leq\frac{1}{\varepsilon^2}\frac{\n_\ell}{\n}\Exp{\1\left\{\norm{\Delta_\eta\momentfunc\uwt^{(-\ell)}}_{P,2}\leq a\right\}\norm{\Delta_\eta\momentfunc\uwt^{(-\ell)}}_{P,2}^2}\\
        &\leq\frac{1}{\varepsilon^2}\frac{\n_\ell}{\n}a^2.
    \end{align*}
    This final bound follows immediately from the fact that
    $\|\Delta_\eta\momentfunc\uwt^{(-\ell)}\|^2_{P,2}$ is at most $a^2$ whenever the indicator function is true.
    Combining these bounds yields, for every $\varepsilon>0$ and $a>0$,
    \begin{align}
        \label{eq:app_b_1_eta}
        \Prob\left(|R_{1,\eta,\ell}|>\varepsilon\right)
        \leq\Prob\left(\norm{\Delta_\eta\momentfunc\uwt^{(-\ell)}}_{P,2}>a\right)
        +\frac{1}{\varepsilon^2}\frac{\n_\ell}{\n}a^2.
    \end{align}
    This formula implies that the tail probability of $R_{1,\eta,\ell}$ is controlled by the $L_2$-norm of the nuisance perturbation $\Delta_\eta\momentfunc\uwt^{(-\ell)}$.
    In particular, for any fixed $\varepsilon>0$,
    we have $\Prob(|R_{1,\eta,\ell}|>\varepsilon)\to0$ as soon as $\|\Delta_\eta\momentfunc\uwt^{(-\ell)}\|_{P,2}\to0$ in probability.

    Here, we derive a bound for the aggregate nuisance empirical process term $R_{1,\eta}=\sum_{\ell=1}^\fold R_{1,\eta,\ell}$.
    First, by applying the triangle inequality, we bound the magnitude of the sum by the sum of magnitudes:
    \begin{align*}
        |R_{1,\eta}|=\left|\sum_{\ell=1}^\fold R_{1,\eta,\ell}\right|\leq\sum_{\ell=1}^\fold |R_{1,\eta,\ell}|.
    \end{align*}
    Next, observe that the event $\{|R_{1,\eta}|>\varepsilon\}$ implies that the sum of the magnitudes exceeds.
    For this to occur,
    at least one of the fold-specific terms $|R_{1,\eta,\ell}|$ must exceed the average contribution $\varepsilon/\fold$.
    This leads to the following set inclusion:
    \begin{align*}
        \{|R_{1,\eta}|>\varepsilon\}\subseteq\bigcup_{\ell=1}^\fold\left\{|R_{1,\eta,\ell}|>\frac{\varepsilon}{\fold}\right\}
    \end{align*}
    Therefore, applying the union bound yields
    \begin{align*}
        \Prob(|R_{1,\eta}|>\varepsilon)&\leq\sum_{\ell=1}^\fold \Prob\left(|R_{1,\eta,\ell}|>\frac{\varepsilon}{\fold}\right)
    \end{align*}
    Applying \cref{eq:app_b_1_eta} with $\varepsilon$ replaced by $\varepsilon/\fold$ yields, for any $a>0$,
    \begin{align}
        \label{eq:app_b_2_eta}
        \begin{split}
            \Prob(|R_{1,\eta}|>\varepsilon)
            &\leq\sum_{\ell=1}^\fold\Prob\left(\norm{\Delta_\eta\momentfunc\uwt^{(-\ell)}}_{P,2}>a\right)
            +\frac{\fold^2}{\varepsilon^2}a^2\sum_{\ell=1}^\fold\frac{\n_\ell}{\n}\\
            &\leq\sum_{\ell=1}^\fold\Prob\left(\norm{\Delta_\eta\momentfunc\uwt^{(-\ell)}}_{P,2}>a\right)
            +\frac{\fold^2}{\varepsilon^2}a^2.
        \end{split}
    \end{align}

    Finally, we show that the nuisance perturbation term $\max_{\ell\in[\fold]}\|\Delta_\eta\momentfunc\uwt^{(-\ell)}\|_{P,2}\to0$ in probability as $\n\to\infty$.
    From the definition of $\momentfunc^\pi\uwt$, we have for any $\histobserve\ut$,
    \begin{align*}
        \Delta_\eta\momentfunc\uwt(\histobserve\ut)
        =-\frac{\1\{\histtreat\ut=\wt\}}{\pi\uwt}\left\{\estGfunc^{(\t)}\uwt(\histhistory\ut)-\Gfunc^{(\t)}\uwt(\histhistory\ut)\right\}
        +\left\{\widehat A_{\pi,\wt}^{(\t)}(\histhistory\ut)-\correct^{(\t)}\uwt(\histhistory\ut)\right\}
        +\left\{\estGfunc^{(1)}\uwt(\covariate_1)-\Gfunc^{(1)}\uwt(\covariate_1)\right\}.
    \end{align*}
    Hence, by the triangle inequality and the positivity \cref{assume:positivity},
    we obtain
    \begin{align*}
        \norm{\Delta_\eta\momentfunc\uwt}_{P,2}\lesssim
        \norm{\estGfunc^{(\t)}\uwt(\histhistory\ut)-\Gfunc^{(\t)}\uwt(\histhistory\ut)}_{P,2}
        +\norm{\widehat A_{\pi,\wt}^{(\t)}(\histhistory\ut)-\correct^{(\t)}\uwt(\histhistory\ut)}_{P,2}
        +\norm{\estGfunc^{(1)}\uwt(\covariate_1)-\Gfunc^{(1)}\uwt(\covariate_1)}_{P,2},
    \end{align*}
    where $A\lesssim B$ denotes that there exists a constant $C$ such that $A\leq C\cdot B$.
    By \cref{assume:uniform_stable},
    the RHS is further bounded by
    \begin{align*}
        \norm{\Delta_\eta\momentfunc\uwt}_{P,2}\lesssim
        \omega(\norm{\eststackadjfunc-\stackadjfunc}_{P,2}
        +\norm{\eststackprob-\stackprob}_{P,2}),
    \end{align*}
    which is $o_p(1)$ by \cref{assume:rate_convergence}.
    Recalling the cross-fitting in \cref{alg:overview},
    every training set $I_\ell^c$ approximately contains $(1-1/\fold)\n$ samples,
    i.e., its sample sizes are of order $\n$.
    Hence, the same argument applies to the fold-specific nuisance estimators $(\eststackadjfunc^{(-\ell)},\eststackprob^{(-\ell)})$.
    Specifically, we obtain
    $\|\Delta_\eta\momentfunc\uwt^{(-\ell)}\|_{P,2}=o_p(1)$
    for each fold $\ell\in[\fold]$.
    Since $\fold$ is fixed, this implies
    \begin{align*}
        \max_{\ell\in[\fold]}\norm{\Delta_\eta\momentfunc\uwt^{(-\ell)}}_{P,2}\overset{p}{\longrightarrow}0.
    \end{align*}
    In conclusion,
    with the bound in \cref{eq:app_b_2_eta} and the choice of a deterministic sequence $a_n\downarrow0$ sufficiently slowly,
    we obtain $\Prob(|R_{1,\eta}|>\varepsilon)\to0$ for every $\varepsilon>0$, that is,
    \begin{align*}
        R_{1,\eta}=\sqrt\n(P_n-P)\Delta_\eta\momentfunc\uwt=o_p(1).
    \end{align*}

    It remains to control $R_{1,\pi}$.
    By the definitions of $\widehat A_{\pi,\wt}^{(\t),(-\ell)}$ and $\widehat A_{\est{\pi},\wt}^{(\t),(-\ell)}$,
    \begin{align*}
        \Delta_\pi\momentfunc^{(-\ell)}\uwt(\histobserve\ut)
        =(\est\pi\uwt^{-1}-\pi\uwt^{-1})\1\{\histtreat\ut=\wt\}
        \left\{\outcome\ut-\estGfunc^{(\t),(-\ell)}\uwt(\histhistory\ut)\right\}+\sum_{\tau=1}^{\t-1}(\est\pi_{\wtau}^{-1}-\pi_{\wtau}^{-1})\1\{\histtreat\utau=\wtau\}
        \widehat D_{\tau,\wt}^{(-\ell)}(\histhistory_{\tau+1}).
    \end{align*}
    Thus,
    \begin{align*}
        R_{1,\pi}
        ={}&(\est\pi\uwt^{-1}-\pi\uwt^{-1})
        \sum_{\ell=1}^\fold\sqrt{\frac{\n_\ell}{\n}}\G_{\n,\ell}
        \left[\1\{\histtreat\ut=\wt\}\left\{\outcome\ut-\estGfunc^{(\t),(-\ell)}\uwt(\histhistory\ut)\right\}\right]\\
        &+\sum_{\tau=1}^{\t-1}(\est\pi_{\wtau}^{-1}-\pi_{\wtau}^{-1})
        \sum_{\ell=1}^\fold\sqrt{\frac{\n_\ell}{\n}}\G_{\n,\ell}
        \left[\1\{\histtreat\utau=\wtau\}\widehat D_{\tau,\wt}^{(-\ell)}(\histhistory_{\tau+1})\right].
    \end{align*}
    By \cref{assume:iid,assume:positivity}, and since the number of feasible treatment histories and the horizon are fixed,
    \begin{align*}
        \max_{\tau\leq\t}\left|\est\pi_{\wtau}^{-1}-\pi_{\wtau}^{-1}\right|=O_p(n^{-1/2}).
    \end{align*}
    The top-level and prefix empirical-process sums in the last display are $O_p(1)$ by the same conditional variance bound and \cref{assume:finite_IF,assume:uniform_stable,assume:rate_convergence}.
    Hence $R_{1,\pi}=o_p(1)$.
    Combining $R_{1,\eta}=o_p(1)$ and $R_{1,\pi}=o_p(1)$, we obtain
    \begin{align*}
        R_1=\sqrt\n(P_n-P)\Delta\momentfunc\uwt=o_p(1).
    \end{align*}
    This establishes the desired control over the empirical process remainder term.

    \mypara{Step 3. Bias remainder term $R_2$}
    We now control the population bias remainder. By the cross-fitted construction,
    $$
    P\Delta\momentfunc\uwt=\sum_{\ell=1}^\fold\frac{\n_\ell}{\n}P\Delta\momentfunc\uwt^{(-\ell)},
    $$
    where
    $$
    P\Delta\momentfunc\uwt^{(-\ell)}
    =P\left[\momentfunc^{\est{\pi}}\uwt(\histobserve\ut;\estimator\ut,\eststackadjfunc^{(-\ell)},\eststackprob^{(-\ell)})
    -\momentfunc^\pi\uwt(\histobserve\ut;\estimator\ut,\stackadjfunc,\stackprob)\right].
    $$
    For each fold $\ell$, the exact telescoping calculation in the proof of \cref{theorem:neyman} can be applied directly to the cross-fitted nuisance estimates $(\eststackadjfunc^{(-\ell)},\eststackprob^{(-\ell)})$, because these estimates are constructed on the training fold and then evaluated on the held-out fold.
    In particular,
    $$
    P\left[\frac{\1\{\histtreat\ut=\wt\}}{\pi\uwt}\left\{\outcome\ut-\estGfunc\uwt^{(t),(-\ell)}(\histhistory\ut)\right\}
    +\sum_{\tau=1}^{\t-1}\frac{\1\{\histtreat\utau=\wtau\}}{\pi_{\wtau}}\widehat D_{\tau,\wt}^{(-\ell)}(\histhistory_{\tau+1})
    +\estGfunc\uwt^{(1),(-\ell)}(\covariate_1)-\outcomefunc(t)\right]=0.
    $$
    Likewise, by \cref{theorem:moment_condition},
    $$
    P\left[\frac{\1\{\histtreat\ut=\wt\}}{\pi\uwt}\left\{\outcome\ut-\Gfunc\uwt^{(t)}(\histhistory\ut)\right\}+\correct\uwt^{(t)}(\histhistory\ut)+\Gfunc\uwt^{(1)}(\covariate_1)-\outcomefunc(t)\right]=0.
    $$
    Hence, the contribution to $P\Delta\momentfunc\uwt^{(-\ell)}$ comes from replacing the true treatment-path probabilities by their empirical analogs in the top-level residual term and in the prefix normalizers of the auxiliary correction term. Therefore,
    \begin{align*}
        P\Delta\momentfunc\uwt^{(-\ell)}
        ={}&(\est\pi\uwt^{-1}-\pi\uwt^{-1})P\left[\1\{\histtreat\ut=\wt\}\left\{\outcome\ut-\estGfunc\uwt^{(t),(-\ell)}(\histhistory\ut)\right\}\right]\\
        &+\sum_{\tau=1}^{\t-1}(\est\pi_{\wtau}^{-1}-\pi_{\wtau}^{-1})
        P\left[\1\{\histtreat\utau=\wtau\}\widehat D_{\tau,\wt}^{(-\ell)}(\histhistory_{\tau+1})\right].
    \end{align*}
    Since
    $$
    P\left[\1\{\histtreat\ut=\wt\}\left\{\outcome\ut-\Gfunc\uwt^{(t)}(\histhistory\ut)\right\}\right]=0,
    $$
    we have
    $$
    P\left[\1\{\histtreat\ut=\wt\}\left\{\outcome\ut-\estGfunc\uwt^{(t),(-\ell)}(\histhistory\ut)\right\}\right]
    =
    P\left[\1\{\histtreat\ut=\wt\}\left\{\Gfunc^{(t)}\uwt(\histhistory\ut)-\estGfunc\uwt^{(t),(-\ell)}(\histhistory\ut)\right\}\right].
    $$
    By Cauchy-Schwarz and \cref{assume:uniform_stable,assume:rate_convergence},
    $$
    \left|P\left[\1\{\histtreat\ut=\wt\}\left\{\Gfunc^{(t)}\uwt(\histhistory\ut)-\estGfunc\uwt^{(t),(-\ell)}(\histhistory\ut)\right\}\right]\right|
    \le\|\estGfunc\uwt^{(t),(-\ell)}-\Gfunc^{(t)}\uwt\|_{P,2}=o_p(1).
    $$
    For the prefix terms, define the true one-step innovation
    \begin{align*}
        D_{\tau,\wt}(\histhistory_{\tau+1})
        \coloneqq\Gfunc\uwt^{(\tau+1)}(\histhistory_{\tau+1})-\int_{\historyspace_{\tau+1}}
        \Gfunc\uwt^{(\tau+1)}(\histhistory_\tau,\realhistory_{\tau+1})
        \prob_{\wtau}^{(\tau)}(d\realhistory_{\tau+1}\mid\histhistory_\tau).
    \end{align*}
    By the conditional-mean property of the true transition kernel,
    $P[\1\{\histtreat\utau=\wtau\}D_{\tau,\wt}(\histhistory_{\tau+1})]=0$.
    Therefore, by Cauchy-Schwarz and the same stability argument used for the auxiliary correction term,
    \begin{align*}
        \left|P\left[\1\{\histtreat\utau=\wtau\}\widehat D_{\tau,\wt}^{(-\ell)}(\histhistory_{\tau+1})\right]\right|
        &=\left|P\left[\1\{\histtreat\utau=\wtau\}\left\{\widehat D_{\tau,\wt}^{(-\ell)}(\histhistory_{\tau+1})-D_{\tau,\wt}(\histhistory_{\tau+1})\right\}\right]\right|\\
        &\le\left\|\widehat D_{\tau,\wt}^{(-\ell)}-D_{\tau,\wt}\right\|_{P,2}=o_p(1),
    \end{align*}
    uniformly over the finitely many $\tau=1,\ldots,\t-1$, by \cref{assume:uniform_stable,assume:rate_convergence}.
    Moreover, by \cref{assume:iid,assume:positivity}, and since the number of feasible treatment histories and the horizon are fixed,
    $$
    \max_{\tau\leq\t}|\est\pi_{\wtau}-\pi_{\wtau}|=O_p(n^{-1/2}),\quad
    \sqrt n\max_{\tau\leq\t}|\est\pi_{\wtau}^{-1}-\pi_{\wtau}^{-1}|=O_p(1).
    $$
    Combining the previous formulations and using that $\fold$ and $\t$ are fixed, we obtain
    $$
    R_2=\sqrt{n}\sum_{\ell=1}^\fold\frac{\n_\ell}{\n}P\Delta\momentfunc^{(-\ell)}\uwt=O_p(1)o_p(1)=o_p(1).
    $$
    This shows that the bias remainder term is negligible.
    The argument uses the telescoping structure of the moment function and the root-$n$ consistency of the empirical treatment path proportions.

    \mypara{Step 4. Derivation of asymptotic normality}
    For any $\t\in[\T]$,
    substituting the bounds for the remainder terms $R_1=o_p(1)$ and $R_2=o_p(1)$ into the asymptotic linear representation derived in Step 1 yields
    \begin{align*}
        \sqrt{\n}(\estestimator\ut-\estimator\ut)=\frac{1}{\sqrt\n}\sum_{i=1}^\n\momentfunc^\pi\ut(\histobserve\uit;\estimator\ut,\stackadjfunc,\stackprob)+o_p(1).
    \end{align*}
    Here, $\momentfunc^\pi\ut(\histobserve\uit;\estimator\ut,\stackadjfunc,\stackprob)$ is a vector of influence functions defined component-wise for each treatment history $\realhisttreat\ut\in\histtreatspace\ut$ as shown in \cref{eq:moment_func}.
    The random vectors $\{\momentfunc^\pi\ut(\histobserve\uit;\estimator\ut,\stackadjfunc,\stackprob)\}_{i=1}^\n$ are independent and identically distributed (i.i.d.) by \cref{assume:iid}.

    Furthermore, \cref{theorem:moment_condition} ensures that these influence functions have zero mean, i.e., $\Exp{\momentfunc^\pi\ut(\histobserve\ut;\estimator\ut,\stackadjfunc,\stackprob)}=0$.
    \cref{assume:finite_IF} guarantees the existence of finite second moments for the moment function, ensuring that the covariance matrix $\Sigma\ut=\Var{\momentfunc^\pi\ut(\histobserve\ut;\estimator\ut,\stackadjfunc,\stackprob)}$ is well-defined and finite.

    Therefore, by applying the Multivariate Central Limit Theorem (CLT) to the leading term, we obtain:
    \begin{align*}
        \frac{1}{\sqrt\n}\sum_{i=1}^\n\momentfunc^\pi\ut(\histobserve\uit;\estimator\ut,\stackadjfunc,\stackprob)\overset{d}{\longrightarrow}\mathcal{N}(0,\Sigma\ut),
    \end{align*}
    as $\n\to\infty$.
    Finally, by Slutsky's theorem, the addition of the $o_p(1)$
    remainder term does not affect the asymptotic distribution.
    Consequently, we conclude that:
    \begin{align*}
        \sqrt\n(\estestimator\ut-\estimator\ut)\overset{d}{\longrightarrow}\mathcal{N}(0,\Sigma\ut).
    \end{align*}
    Hence, the desired result is obtained.
\end{proof}

\subsection{Proof of \cref{theorem:semipara}}

\SemiparaBound*
\begin{proof}
    First, we characterize the tangent space.
    For any $\t\in[\T]$,
    the joint density of the observed random variables $\histobserve\ut=(\histcovariate\ut,\histtreat\ut,\histoutcome\ut)$ can be written as:
    \begin{align*}
        p(\realhistobserve\ut)
        &=\prod_{\tau=1}^\t p(\realcovariate\utau\mid\realhistobserve\utaum)
        p(\realtreat\utau\mid\realhistobserve\utaum,\realcovariate\utau)
        p(\realoutcome\utau\mid\realhistobserve\utaum,\realcovariate\utau,\realtreat\utau)\\
        &=\pi\uwt\prod_{\tau=1}^\t 
        p(\realcovariate\utau\mid\realhistobserve\utaum)
        % p(\realtreat\utau)
        p(\realoutcome\utau\mid\realhisthistory\utau,\realhisttreat\utau),
    \end{align*}
    where $\realhistobserve_0$ is an empty set,
    the equality $\prod _{\tau=1}^\t p(\realtreat\utau\mid\realhistobserve\utaum,\realcovariate\utau)=\pi\uwt$ follows from the known fixed randomization law under \cref{assume:exchangeability},
    and $(\realhistobserve\utaum,\realcovariate\utau,\realtreat\utau)=(\realhisthistory\utau,\realhisttreat\utau)$.
    Since the assignment probabilities are fixed by the randomized design, the treatment assignment law contributes no nuisance tangent direction.
    Here, let $\mathscr{T}\ut$ be the tangent space at time $\t$.
    The tangent directions arise only from the conditional covariate laws $p(\realcovariate\utau\mid\realhistobserve\utaum)$ and the conditional outcome laws $p(\realoutcome\utau\mid\realhisthistory\utau,\realhisttreat\utau)$.
    Specifically, $\mathscr{T}\ut$ is the closed linear span of the following sum of square-integrable martingale differences:
    \begin{align*}
        \mathscr{T}\ut=
        \mathrm{cl}_{L_2(P)}\left(
        \left\{\sum_{\tau=1}^\t\left[
        \mathscr{S}_{\covariate,\tau}(\histobserve\utaum,\covariate\utau)
        +\mathscr{S}_{\outcome,\tau}(\histhistory\utau,\histtreat\utau,\outcome\utau)
        \right]:
        \mathscr{S}_{\covariate,\tau}\in\mathscr{T}_{\covariate,\tau}, 
        \mathscr{S}_{\outcome,\tau}\in\mathscr{T}_{\outcome,\tau}
        \right\}
        \right).
    \end{align*}
    In this equation, each subspace is defined as follows, with $\histobserve_0$ interpreted as the empty history:
    \begin{align*}
        \mathscr{T}_{\covariate,\tau}&\coloneqq
        \left\{s_{\covariate}(\histobserve\utaum,\covariate\utau)\in L_2(P):
        \Exp{s_{\covariate}(\histobserve\utaum,\covariate\utau)\mid\histobserve\utaum}=0\right\},\\
        \mathscr{T}_{\outcome,\tau}&\coloneqq
        \left\{s_{\outcome}(\histhistory\utau,\histtreat\utau,\outcome\utau)\in L_2(P):
        \Exp{s_{\outcome}(\histhistory\utau,\histtreat\utau,\outcome\utau)\mid\histhistory\utau,\histtreat\utau}=0\right\}.
    \end{align*}
    We next show that the efficient influence function for $\estimator\ut$ coincides with the score $\momentfunc^\pi\ut(\histobserve\ut;\estimator\ut,\stackadjfunc,\stackprob)$ introduced in \cref{section:moment_condition_problem}, and hence that the semiparametric efficiency bound is given by $\Sigma\ut=\Var{\momentfunc^\pi\ut(\histobserve\ut;\estimator\ut,\stackadjfunc,\stackprob)}$ whenever the displayed covariance is finite.
    For notational convenience, fix $\t\in[\T]$ and a treatment history $\wt\in\histtreatspace\ut$,
    and work with the scalar component $\outcomefunc(\t)\in\estimator\ut$.

    \mypara{Step 1. Membership of $\momentfunc^\pi\uwt$ in the tangent space $\mathscr{T}\ut$}
    We decompose the moment function $\momentfunc^\pi\uwt$ into time indexed martingale differences that match the structure of the tangent space.
    First, define
    \begin{align*}
        \mathscr{S}_{\covariate,1}(\covariate_1)\coloneqq\Gfunc\uwt^{(1)}(\covariate_1)-\outcomefunc(\t),
    \end{align*}
    which is measurable with respect to $\sigma(\covariate_1)$.
    Since $\histobserve_0$ is the empty history, we compute
    \begin{align*}
        \Exp{\mathscr{S}_{\covariate,1}(\covariate_1)\mid\histobserve_0}=\Exp{\Gfunc\uwt^{(1)}(\covariate_1)-\outcomefunc(\t)}=0,
    \end{align*}
    so that $\mathscr{S}_{\covariate,1}\in\mathscr{T}_{\covariate,1}$.

    Next, we decompose each one-step innovation in the auxiliary correction term into an outcome component at time $\tau$ and a covariate-transition component at time $\tau+1$.
    For $\tau=1,\ldots,\t-1$, define
    \begin{align*}
        G_{\tau+1}&\coloneqq\Gfunc\uwt^{(\tautau)}(\histhistory\utautau),\\
        C_{\tau}(\histobserve\utau)&\coloneqq\Exp{G_{\tau+1}\mid\histobserve\utau},\\
        M_{\tau}(\histhistory\utau)&\coloneqq\Exp{G_{\tau+1}\mid\histhistory\utau,\histtreat\utau=\realhisttreat\utau}.
    \end{align*}
    By the definition of the transition kernel,
    \begin{align*}
        M_{\tau}(\histhistory\utau)
        =\int_{\historyspace\utautau}\Gfunc\uwt^{(\tautau)}(\histhistory\utau,\realhistory\utautau)
        \prob^{(\tau)}\uwtau(d\realhistory\utautau\mid\histhistory\utau).
    \end{align*}
    Hence, the auxiliary correction term can be written as
    \begin{align*}
        \correct^{(\t)}\uwt(\histhistory\ut)
        &=\sum_{\tau=1}^{\t-1}\frac{\1\{\histtreat\utau=\realhisttreat\utau\}}{\pi\uwtau}\left\{G_{\tau+1}-M_{\tau}(\histhistory\utau)\right\}\\
        &=\sum_{\tau=1}^{\t-1}\mathscr{S}^{\correct}_{\outcome,\tau}(\histhistory\utau,\histtreat\utau,\outcome\utau)
        +\sum_{\tau=1}^{\t-1}\mathscr{S}^{\correct}_{\covariate,\tau+1}(\histobserve\utau,\covariate\utautau),
    \end{align*}
    where
    \begin{align*}
        \mathscr{S}^{\correct}_{\outcome,\tau}(\histhistory\utau,\histtreat\utau,\outcome\utau)
        &\coloneqq
        \frac{\1\{\histtreat\utau=\realhisttreat\utau\}}{\pi\uwtau}
        \left\{C_{\tau}(\histobserve\utau)-M_{\tau}(\histhistory\utau)\right\},\\
        \mathscr{S}^{\correct}_{\covariate,\tau+1}(\histobserve\utau,\covariate\utautau)
        &\coloneqq
        \frac{\1\{\histtreat\utau=\realhisttreat\utau\}}{\pi\uwtau}
        \left\{G_{\tau+1}-C_{\tau}(\histobserve\utau)\right\}.
    \end{align*}
    The first component is measurable with respect to $\sigma(\histhistory\utau,\histtreat\utau,\outcome\utau)$, since $\histobserve\utau=(\histhistory\utau,\histtreat\utau,\outcome\utau)$.
    Moreover,
    \begin{align*}
        \Exp{\mathscr{S}^{\correct}_{\outcome,\tau}(\histhistory\utau,\histtreat\utau,\outcome\utau)\mid\histhistory\utau,\histtreat\utau}
        &=\frac{\1\{\histtreat\utau=\realhisttreat\utau\}}{\pi\uwtau}
        \left\{\Exp{C_{\tau}(\histobserve\utau)\mid\histhistory\utau,\histtreat\utau}-M_{\tau}(\histhistory\utau)\right\}=0,
    \end{align*}
    because on the event $\{\histtreat\utau=\realhisttreat\utau\}$,
    \begin{align*}
        \Exp{C_{\tau}(\histobserve\utau)\mid\histhistory\utau,\histtreat\utau=\realhisttreat\utau}
        =\Exp{G_{\tau+1}\mid\histhistory\utau,\histtreat\utau=\realhisttreat\utau}=M_{\tau}(\histhistory\utau).
    \end{align*}
    Thus, $\mathscr{S}^{\correct}_{\outcome,\tau}\in\mathscr{T}_{\outcome,\tau}$.
    The second component is measurable with respect to $\sigma(\histobserve\utau,\covariate\utautau)$, since $G_{\tau+1}=\Gfunc\uwt^{(\tautau)}(\histhistory\utautau)$ is a function of $(\histobserve\utau,\covariate\utautau)$.
    Moreover,
    \begin{align*}
        \Exp{\mathscr{S}^{\correct}_{\covariate,\tau+1}(\histobserve\utau,\covariate\utautau)\mid\histobserve\utau}
        &=\frac{\1\{\histtreat\utau=\realhisttreat\utau\}}{\pi\uwtau}
        \left\{\Exp{G_{\tau+1}\mid\histobserve\utau}-C_{\tau}(\histobserve\utau)\right\}=0,
    \end{align*}
    so that $\mathscr{S}^{\correct}_{\covariate,\tau+1}\in\mathscr{T}_{\covariate,\tau+1}$.

    Finally, define
    \begin{align*}
        \mathscr{S}_{\outcome,\t}(\histhistory\ut,\histtreat\ut,\outcome\ut)\coloneqq
        \frac{\1\{\histtreat\ut=\realhisttreat\ut\}}{\pi\uwt}\left\{\outcome\ut-\Gfunc\uwt^{(\t)}(\histhistory\ut)\right\}.
    \end{align*}
    This component is measurable with respect to $\sigma(\histhistory\ut,\histtreat\ut,\outcome\ut)$.
    We compute the expectation of both sides
    \begin{align*}
        \Exp{\mathscr{S}_{\outcome,\t}(\histhistory\ut,\histtreat\ut,\outcome\ut)~\middle|~\histhistory\ut,\histtreat\ut}
        &=\frac{\1\{\histtreat\ut=\realhisttreat\ut\}}{\pi\uwt}
        \left\{\Exp{\outcome\ut~\middle|~\histhistory\ut,\histtreat\ut=\realhisttreat\ut}
        -\Gfunc\uwt^{(\t)}(\histhistory\ut)\right\}=0,
    \end{align*}
    so that $\mathscr{S}_{\outcome,\t}\in\mathscr{T}_{\outcome,\t}$.
    Hence, the moment function $\momentfunc^\pi\uwt(\histobserve\ut)$ is described as
    \begin{align*}
        \momentfunc^\pi\uwt(\histobserve\ut)
        =\mathscr{S}_{\covariate,1}(\covariate_1)
        +\mathscr{S}_{\outcome,\t}(\histhistory\ut,\histtreat\ut,\outcome\ut)
        +\sum_{\tau=1}^{\t-1}
        \left\{\mathscr{S}^{\correct}_{\outcome,\tau}(\histhistory\utau,\histtreat\utau,\outcome\utau)
        +\mathscr{S}^{\correct}_{\covariate,\tau+1}(\histobserve\utau,\covariate\utautau)\right\}.
    \end{align*}
    Therefore, we have $\momentfunc^\pi\uwt(\histobserve\ut)\in\mathscr{T}\ut$.
    Since this argument applies to every scalar component $\momentfunc^\pi\uwt$ of the vector-valued score,
    the vector-valued score $\momentfunc^\pi\ut(\histobserve\ut)$ lies component-wise in the tangent space $\mathscr{T}\ut$.

    \mypara{Step 2. Identification of $\momentfunc^\pi\ut$ as the efficient influence function}
    We now show that the moment function $\momentfunc^\pi\ut(\histobserve\ut)$ is in fact the efficient influence function for the target parameter $\estimator\ut$ in the semiparametric model defined above.
    Recall from \cref{theorem:moment_condition},
    for the true distribution $P$, the true target parameter $\theta\ut$, and the true nuisance functions $\nuisance\ut=(\stackadjfunc,\stackprob)$,
    we have the population moment condition
    \begin{align*}
        \Exp{\momentfunc^\pi\ut(\histobserve\ut;\estimator\ut,\nuisance\ut)}=0.
    \end{align*}
    Also, by construction of the model, this moment condition continues to hold along any regular parametric submodel of $P$ that preserves the randomized assignment law.
    Let $\{P_\alpha:\alpha\in(-\varepsilon,\varepsilon)\}$ be an arbitrary regular one-dimensional parametric submodel of the data generating process, dominated by $P=P_0$, with score function
    \begin{align*}
        S(\histobserve\ut)=\frac{\partial}{\partial\alpha}\log \prob_\alpha(\histobserve\ut)\bigg|_{\alpha=0}\in\mathscr{T}\ut,
    \end{align*}
    where $\prob_\alpha$ denotes the density of $P_\alpha$.
    For each $\alpha$, let $\estimator\ut(\alpha)\coloneqq\estimator\ut(P_\alpha)$ and $\nuisance\ut(\alpha)\coloneqq\nuisance\ut(P_\alpha)$ be the target and nuisance functions implied by $P_\alpha$.
    By construction of the model and \cref{theorem:moment_condition}, these satisfy
    \begin{align*}
        \mathbb{E}_{\alpha}\left[\momentfunc^\pi\ut(\histobserve\ut;
        \estimator\ut(\alpha),\nuisance\ut(\alpha))\right]=0,
        \quad\forall\alpha\in(-\varepsilon,\varepsilon).
    \end{align*}
    Define
    \begin{align*}
        F(\alpha)\coloneqq\mathbb{E}_{\alpha}\left[\momentfunc^\pi\ut(\histobserve\ut;\estimator\ut(\alpha),\nuisance\ut(\alpha))\right].
    \end{align*}
    Then $F(\alpha)\equiv0$ for all $\alpha$, so in particular $F'(0)=0$.
    Here, we now expand $F'(0)$ using the chain rule in the three arguments:
    the target parameter $\estimator\ut$, the nuisance functions $\nuisance\ut$, and the underlying distribution $P_\alpha$.
    By standard semiparametric calculus,
    \begin{align*}
        F'(0)&=\frac{\partial}{\partial\alpha}\left\{\int\momentfunc^\pi\ut(\realhistobserve\ut;\estimator\ut(\alpha),\nuisance\ut(\alpha))\prob_\alpha(\realhistobserve\ut)\nu(d\realhistobserve\ut)\right\}\bigg|_{\alpha=0}\\
        &=\int\frac{\partial}{\partial\alpha}\momentfunc^\pi\ut(\realhistobserve\ut;\estimator\ut(\alpha),\nuisance\ut(\alpha))\bigg|_{\alpha=0}\prob(\realhistobserve\ut)\nu(d\realhistobserve\ut)
        +\int\momentfunc^\pi\ut(\realhistobserve\ut;\estimator\ut(0),\nuisance\ut(0))\frac{\partial}{\partial\alpha}\prob_\alpha(\realhistobserve\ut)\bigg|_{\alpha=0}\nu(d\realhistobserve\ut)\\
        &=\int\left\{\partial_\estimator\momentfunc^\pi\ut(\realhistobserve\ut;\estimator\ut,\nuisance\ut)\estimator'\ut(0)+\partial_\nuisance\momentfunc^\pi\ut(\realhistobserve\ut;\estimator\ut,\nuisance\ut)[\nuisance'\ut(0)]\right\}\prob(\realhistobserve\ut)\nu(d\realhistobserve\ut)\\
        &\quad+\int\momentfunc^\pi\ut(\realhistobserve\ut;\estimator\ut,\nuisance\ut)S(\realhistobserve\ut)p(\realhistobserve\ut)\nu(d\realhistobserve\ut)\\
        &=\underbrace{\Exp{\momentfunc^\pi\ut(\histobserve\ut;\estimator\ut,\nuisance\ut)S(\histobserve\ut)}}_{\text{variation of }P}+\underbrace{\Exp{\partial_\estimator\momentfunc^\pi\ut(\histobserve\ut;\estimator\ut,\nuisance\ut)}\!\estimator'\ut(0)}_{\text{variation of }\estimator\ut}+\underbrace{\Exp{\partial_\nuisance\momentfunc^\pi\ut(\histobserve\ut;\estimator\ut,\nuisance\ut)[\nuisance'\ut(0)]}}_{\text{variation of }\nuisance\ut},
    \end{align*}
    where
    \begin{align*}
        \estimator'\ut(0)=\frac{d}{d\alpha}\estimator\ut(\alpha)\bigg|_{\alpha=0},\quad
        \nuisance'\ut(0)=\frac{d}{d\alpha}\nuisance\ut(\alpha)\bigg|_{\alpha=0}.
    \end{align*}
    First, recalling that $\momentfunc^\pi\ut=(\momentfunc^\pi\uwt)_{\wt\in\histtreatspace\ut}$ and each element $\momentfunc^\pi\uwt$ is 
    \begin{align*}
        \momentfunc^\pi\uwt(\histobserve\ut;\estimator\ut,\stackadjfunc,\stackprob)=\frac{\1\{\histtreat\ut=\wt\}\cdot(\outcome\ut-\Gfunc^{(\t)}\uwt(\histhistory\ut))}{\pi\uwt}+\correct^{(\t)}\uwt(\histhistory\ut)
        +\Gfunc^{(1)}\uwt(\covariate_1)-\outcomefunc(\t).
    \end{align*}
    Hence, we have $\partial_{\outcomefunc(\t)}\momentfunc^\pi\uwt=-1$, so $\Exp{\partial_{\estimator}\momentfunc^\pi\ut(\histobserve\ut;\estimator\ut,\nuisance\ut)}=-I$.

    Next, by \cref{theorem:neyman},
    for any admissible direction $h$ in the nuisance function class,
    \begin{align*}
        \frac{\partial}{\partial r}\Exp{\momentfunc^\pi\ut(\histobserve;\estimator\ut,\nuisance\ut+rh)}\bigg|_{r=0}=\Exp{\partial_\nuisance\momentfunc^\pi\ut(\histobserve;\estimator\ut,\nuisance\ut)[h]}=0.
    \end{align*}
    Applying this with $h=\nuisance'\ut(0)$ yields
    \begin{align*}
        \Exp{\partial_\nuisance\momentfunc^\pi\ut(\histobserve;\estimator\ut,\nuisance\ut)[\nuisance'\ut(0)]}=0.
    \end{align*}
    Lastly, substituting the two results above into $F'(0)=0$, we obtain
    \begin{align*}
        \estimator'\ut(0)=\Exp{\momentfunc^\pi\ut(\histobserve\ut;\estimator\ut,\nuisance\ut)S(\histobserve\ut)},
    \end{align*}
    for every regular submodel score $S(\cdot)\in\mathscr{T}\ut$.
    Therefore, $\momentfunc^\pi\ut$ is a gradient of $\estimator\ut$.
    Combining this gradient representation with Step 1, $\momentfunc^\pi\ut$ is the unique gradient lying in the tangent space $\mathscr{T}\ut$, and hence it is the efficient influence function.
    Consequently, the semiparametric efficiency bound for $\estimator\ut$ is the covariance of this efficient influence function:
    \begin{align*}
        \Sigma\ut=\Var{\momentfunc^\pi\ut(\histobserve\ut;\estimator\ut,\stackadjfunc,\stackprob)}.
    \end{align*}

    \mypara{Step 3. Efficiency bound and attainment}
    We finally establish the semiparametric efficiency bound and show that the dynamic regression-adjusted estimator attains it.
    Under Assumptions~\crefrange{assume:finite_IF}{assume:uniform_stable}, \cref{theorem:asymptotic} provides the asymptotic linear expansion
    \begin{align*}
        \sqrt\n(\estestimator\ut-\estimator\ut)=\frac{1}{\sqrt\n}\sum_{i=1}^\n\momentfunc^\pi\ut(\histobserve\uit;\estimator\ut,\stackadjfunc,\stackprob)+o_p(1),
    \end{align*}
    so $\estestimator\ut$ is regular and asymptotically linear with the influence function equal to the efficient influence function $\momentfunc^\pi\ut$.
    Consequently, its asymptotic covariance equals $\Var{\momentfunc^\pi\ut(\histobserve\ut)}$, i.e., it achieves the semiparametric efficiency bound $\Sigma\ut$. Hence, we have the desired result.
    
\end{proof}

\subsection{Proof of \cref{theorem:variance_reduction}}

\VarReduction*
\begin{proof}
    Consider the empirical estimator $\estoutcomefunc^{emp}(\t)$ defined in \cref{eq:simple_estimator}.
    Under \crefrange{assume:sutva}{assume:positivity},
    $\estoutcomefunc^{emp}(\t)$ is an unbiased estimator of $\outcomefunc(\t)$.
    Then, it admits an influence function representation with influence function:
    \begin{align*}
        \momentfunc\uwt^{emp}(\histobserve\ut)&=\frac{\1\{\histtreat\ut=\realhisttreat\ut\}}{\pi\uwt}(\outcome\ut-\outcomefunc(\t)).
    \end{align*}
    Let $\momentfunc\ut^{emp}=(\momentfunc\uwt^{emp})_{\wt\in\histtreatspace\ut}$ and denote $\Sigma\ut^{emp}=\Var{\momentfunc\ut^{emp}(\histobserve\ut)}$,
    so that $\Sigma\ut^{emp}$ is the asymptotic covariance matrix of $\sqrt\n(\estestimator^{emp}\ut-\estimator\ut)$,
    where $\estestimator^{emp}\ut=(\estoutcomefunc^{emp})_{\wt\in\histtreatspace\ut}$.
    
    By \cref{theorem:semipara},
    $\momentfunc^\pi\ut(\histobserve\ut;\estimator\ut,\stackadjfunc,\prob\ut)$
    is the efficient influence function for the vector parameter $\estimator\ut=(\outcomefunc(\t))_{\wt\in\histtreatspace\ut}$,
    and therefore $\Sigma\ut=\Var{\momentfunc^\pi\ut(\histobserve\ut;\estimator\ut,\stackadjfunc,\stackprob)}$
    is the semiparametric efficiency bound.
    Moreover, the oracle dynamic regression-adjusted estimator $\widetilde{\mu}\uwt^{dy\text{-}adj}(\t)$ is regular and asymptotically linear with influence function $\momentfunc^\pi\ut$.
    Therefore, its asymptotic covariance equals $\Sigma\ut$.

    Since $\Sigma\ut$ is the semiparametric efficiency bound for $\estimator\ut$,
    any regular estimator of $\estimator\ut$ must have an asymptotic covariance matrix that dominates $\Sigma\ut$ in the Loewner order $\Sigma\ut^{emp}-\Sigma\ut\succeq0$.
    Finally, using the asymptotic linear expansions as mentioned above,
    this inequality translates into the claimed variance reduction for the leading $1/\n$ term of the estimator variance.
\end{proof}

\section{Sufficient Conditions for Assumptions}

\cref{assume:rate_convergence,assume:uniform_stable} are stated as high-level regularity conditions on the full nuisance vector. This section gives primitive sufficient conditions under which these assumptions hold. The conditions below are not necessary. Rather, they should be read as a practical checklist for applying the proposed estimator safely. In particular, Neyman orthogonality removes first-order local sensitivity to nuisance estimation errors, but it does not protect the estimator against gross misspecification, unstable transition simulation, insufficient treatment-path sample sizes, or applying the method outside static randomized regimes.

We first introduce operator notation. For a transition kernel $\prob^{(\tau)}\uwtau$, define the corresponding one-step integration operator by

$$
(Q\uwtau^{(\tau)}f)(\realhisthistory\utau)=\int_{\historyspace\utautau}f(\realhisthistory\utau,\realhistory\utautau)\prob^{(\tau)}\uwtau(d\realhistory\utautau\mid\realhisthistory\utau).
$$

\subsection{Primitive Conditions}

The following primitive conditions are sufficient for the stability and smoothness assumptions used in \cref{section:theoretical_anslysis}.
Here, the transition-kernel error is measured by the same total-variation-based $L_2(\Prob)$ norm defined in \cref{appendix:empirical_process_theory}.
\begin{enumerate}[leftmargin=25pt]
    \item[(C1)] The horizon $T$ and the number of feasible treatment histories $|\histtreatspace\ut|$ are fixed. Moreover,
    $$
    \underline{\pi}\ut\coloneqq\min_{\tau\le t}\min_{\realhisttreat\utau\in\histtreatspace\utau}\pi\uwtau>0.
    $$
    Since $|\histtreatspace\ut|$ is finite, this is implied by \cref{assume:positivity} for each fixed $t$.

    \item[(C2)] The outcome has a finite $(2+\delta)$-moment for some $\delta>0$:
    $$
    \sup_{\tau\le T}\mathbb{E}\bigl[|\outcome\utau|^{2+\delta}\bigr]<\infty.
    $$
    The true and estimated regression functions used in the recursive adjustment are uniformly bounded:
    $$
    \sup_{\realhisttreat\ut,t,\realhisthistory\ut}|\adjfunc^{(t)}(\realhisthistory\ut)|\le M_m,
    \qquad
    \sup_{\realhisttreat\ut,t,\realhisthistory\ut}|\estadjfunc^{(t)}(\realhisthistory\ut)|\le M_m.
    $$
    The second bound can always be enforced by clipping the regression learner.
    If the outcome is uniformly bounded, then the first bound is automatically satisfied by the conditional mean regression function.
    For example, random forests with bounded outcomes satisfy the estimated-regression bound automatically because their predictions are convex averages of observed outcomes. With unbounded outcomes, the same bound can be enforced by clipping the predictions.

    \item[(C3)] The true and estimated transition kernels are stable under the total-variation perturbation metric used in \cref{appendix:empirical_process_theory}.
    Specifically, for every pair of bounded measurable functions $f$ and $\widetilde f$ with
    $\|f\|_\infty\vee\|\widetilde f\|_\infty\le M_f$,
    $$
    \|\est{Q}\uwtau^{(\tau)}f-Q\uwtau^{(\tau)}\widetilde f\|_{P,2}
    \le
    C_Q
    \left\{\|f-\widetilde f\|_{P,2}
    +2M_f\|\estprob\uwtau^{(\tau)}-\prob\uwtau^{(\tau)}\|_{P,2}\right\}.
    $$
    In particular,
    $$
    \|(\est{Q}\uwtau^{(\tau)}-Q\uwtau^{(\tau)})f\|_{P,2}
    \le
    2C_Q\|f\|_\infty\|\estprob\uwtau^{(\tau)}-\prob\uwtau^{(\tau)}\|_{P,2}.
    $$
    The first display follows from the contraction property of the true conditional-expectation operator, up to the fixed constants induced by the finite treatment-history and positivity conditions in (C1), and the second display follows directly from
    $|
    \int f\,d(\estprob\uwtau^{(\tau)}-\prob\uwtau^{(\tau)})|
    \le 2\|f\|_\infty d_{\mathrm{TV}}(\estprob\uwtau^{(\tau)},\prob\uwtau^{(\tau)})$.
    Therefore, small total-variation errors in the transition kernel do not get amplified in the fixed-horizon recursive construction, provided that the functions propagated through the transition operators are uniformly bounded as in (C2).
    % No compactness of the historical state space is required for these inequalities.
    When the kernels admit densities with respect to a common dominating measure, the total-variation term can be checked through
    $\frac12\int|\widehat p-p|$ conditionally on the history.
    In applications where $\prob\uwtau^{(\tau)}$ is implemented through a conditional sampler, the same condition can be empirically verified through the induced one-step output law, rather than through an explicit density.

    % \item[(C4)] The transition model is locally smooth in the same total-variation metric.
    % Along every admissible path $r\mapsto\prob(r)$ with $\prob(0)=p$, the map $r\mapsto Q(r)f$ is twice G\^ateaux differentiable for every bounded measurable $f$.
    % More concretely, for any nuisance direction $v$, the first and second directional derivatives satisfy bounds of the form
    % $$
    % \|\dot Q_p[v]f\|_{P,2}
    % \le C_{\mathrm{sm}}\|f\|_\infty\|v\|_{P,2},
    % \qquad
    % \|\ddot Q_p[v_1,v_2]f\|_{P,2}
    % \le C_{\mathrm{sm}}\|f\|_\infty\|v_1\|_{P,2}\|v_2\|_{P,2},
    % $$
    % where $\dot Q_p[v]f$ and $\ddot Q_p[v_1,v_2]f$ denote the corresponding directional derivatives of the one-step integration operator.
    % A concrete sufficient condition is that the parameter-to-kernel map is twice differentiable in total variation on a nondegenerate parameter set.
    % This condition holds for Gaussian location-scale transition models when the covariance eigenvalues are bounded away from zero and infinity and the mean and covariance parameter maps are locally twice differentiable in $L_2(\Prob)$.
    % It also holds for zero-inflated log-normal transition models if the zero-inflation probabilities are restricted to $[\varepsilon,1-\varepsilon]$ and the scale parameters to $[\varepsilon,\varepsilon^{-1}]$ for some $\varepsilon>0$, with locally twice differentiable probability, location, and scale parameter maps.
\end{enumerate}

\subsection{\cref{assume:rate_convergence}}
First, we consider the regression nuisance $\adjfunc^{(t)}$.
Suppose $\adjfunc^{(t)}$ depends on only $s_m$ active coordinates of $\histhistory\ut$ and is H\"older smooth with smoothness $\beta_m$ in those active coordinates.
If the supervised learner satisfies the standard nonparametric root-risk bound
$$
\|\estadjfunc^{(t)}-\adjfunc^{(t)}\|_{P,2}=O_p(n\uwt^{-\frac{\beta_m}{2\beta_m+s_m}}\ell_n),
$$
where $\ell_n$ is a polylogarithmic factor, then
$$
\|\estadjfunc^{(t)}-\adjfunc^{(t)}\|_{P,2}=o_p(n^{-1/4}),
$$
whenever $\beta_m>s_m/2$, since $n\uwt/n\to_\prob\pi\uwt>0$ holds under positivity and fixed feasible treatment histories.
This condition formalizes the usual \textit{low effective dimension plus sufficient smoothness} requirement. Honest random forests and generalized random forests are designed as adaptive local averaging estimators that exploit such low-dimensional heterogeneity and admit formal large-sample theory under regularity conditions~\citep{Athey2019-pp}.
Therefore, random forests provide a natural choice for $\adjfunc^{(t)}$ when the regression signal is sparse, smooth, or approximately low-dimensional.
In practice, out-of-fold prediction errors on each treatment path provide useful diagnostics for the plausibility and finite-sample stability of this risk condition, although they do not, by themselves, certify the asymptotic $L_2(\Prob)$ rate.

For the transition nuisance $\prob\uwtau^{(\tau)}$, consider the following structural assumption. Suppose
$$
\history\utautau=a_{\tau,\wtau}(\histhistory\utau)+B_{\tau,\wtau}(\histhistory\utau)\varepsilon\utau,\qquad\varepsilon\utau\sim\nu_0,
$$
% where $\nu_0$ has a smooth density with respect to a common dominating measure, such as a standard Gaussian distribution, and the singular values of $B_{\tau,\wtau}(\cdot)$ are uniformly bounded between two positive constants.
where $\nu_0$ has a sufficiently regular density, such as a standard Gaussian density, and the associated location-scale family is locally Lipschitz in total variation (equivalently, in $L_1$ for densities) on compact nondegenerate parameter sets.
Assume that $B_{\tau,\wtau}(\cdot)$ is represented by an identifiable scale parameterization, such as a Cholesky factor with positive diagonal entries, and that the singular values of both the true and fitted scale maps are uniformly bounded between two positive constants.
If the conditional location map $a_{\tau,\wtau}$ and scale map $B_{\tau,\wtau}$ are H\"older or compositional H\"older functions with effective dimension $s_p$ and smoothness $\beta_p$, and if they are estimated componentwise by sparse $\rho$-MLP estimators with suitably chosen architectures and negligible optimization error, then the standard sparse deep neural network (DNN) prediction bounds give
$$
||\est{a}_{\tau,\wtau}-a_{\tau,\wtau}||_{P,2}+||\est{B}_{\tau,\wtau}-B_{\tau,\wtau}||_{P,2}
=O_p(n\uwtau^{-\frac{\beta_p}{2\beta_p+s_p}}\ell_n),
% =o_p(n^{-1/4}),
$$
where $\ell_n$ is a polylogarithmic factor.
Such rates are available, up to logarithmic factors, for ReLU networks over compositional H\"older classes~\cite{Schmidt-Hieber2020-eu}, for continuous piecewise-linear and locally quadratic activations over H\"older classes, including Swish $\rho(x)=x/(1+e^{-x})$~\cite{Ohn2019-cw}, and for sparse-penalized/adaptive DNN estimators and compositional extensions, with the locally quadratic compositional case subject to the additional smoothness caveat discussed by~\citet{Ohn2022-ty}.
% For approximation-transfer results covering broader activations, including SiLU/Swish, see Zhang, Lu and Zhao (2024).

Then, this rate is $o_p(n^{-1/4})$ whenever $\beta_p>s_p/2$ since $n\uwtau/n\to_\prob\pi\uwtau>0$ under positivity and fixed feasible treatment histories.
Therefore, the induced transition kernels satisfy
$$
\|\estprob\uwtau^{(\tau)}-\prob\uwtau^{(\tau)}\|_{P,2}
\lesssim||\est{a}_{\tau,\wtau}-a_{\tau,\wtau}||_{P,2}+||\est{B}_{\tau,\wtau}-B_{\tau,\wtau}||_{P,2}=o_p(n^{-1/4}),
$$
where the first inequality follows from the total-variation-local Lipschitzness of the regular nondegenerate location-scale family in its parameters.
Consequently, Gaussian $\rho$-MLP transition models provide a natural choice for $\prob\uwtau^{(\tau)}$ when the conditional transition dynamics are smooth, nonlinear, and approximately low-dimensional. The same reasoning applies to other regular conditional density families whose parameter-to-law maps are locally Lipschitz in total variation on compact subsets of the nondegenerate parameter space.
Since $T$ and the feasible treatment histories are fixed, these componentwise bounds imply $\|\estprob\ut-\prob\ut\|_{P,2}=o_p(n^{-1/4})$.

\subsection{\cref{assume:uniform_stable}}
We verify that conditions (C1)--(C3), with the total-variation norm convention in \cref{appendix:empirical_process_theory}, imply \cref{assume:uniform_stable}. The verification is deterministic: once the terminal regression functions are bounded and the one-step transition operators are stable in the total-variation metric, the backward recursion can only propagate perturbations through finitely many sums and products. This is the precise sense in which the recursive construction does not amplify nuisance estimation errors under a fixed horizon.

\begin{lemma}[Primitive sufficient conditions for uniform stability]
\label{lemma:primitive_uniform_stability}
Suppose (C1)--(C3) hold. Then, for any $\t\in[T]$, there exists a finite constant $C_t<\infty$ such that, for any treatment history $\wt\in\histtreatspace\ut$ and any two nuisance collections
$$
\eta=(\adjfunc^{(\t)},\{\prob\uwtau^{(\tau)}\}_{\tau=1}^{\t-1}),
\qquad
\widetilde\eta=(\widetilde{m}\uwt^{(\t)},\{\widetilde{\prob}\uwtau^{(\tau)}\}_{\tau=1}^{\t-1}),
$$
that satisfy (C2)--(C3), the corresponding recursively defined quantities satisfy
\begin{align*}
&\max_{1\le\tau\le\t}
\norm{\widetilde{\Gfunc}^{(\tau)}\uwt-\Gfunc^{(\tau)}\uwt}_{P,2}
+
\norm{\widetilde{\correct}^{(\t)}\uwt-\correct^{(\t)}\uwt}_{P,2}\le
C_t\left\{
\norm{\widetilde{m}\uwt^{(\t)}-\adjfunc^{(\t)}}_{P,2}
+
\left(\sum_{\tau=1}^{\t-1}
\norm{\widetilde{\prob}\uwtau^{(\tau)}-\prob\uwtau^{(\tau)}}_{P,2}^2
\right)^{1/2}
\right\}.
\end{align*}
Consequently, \cref{assume:uniform_stable} holds with the linear modulus of continuity $\omega(r)=Cr$ for some finite constant $C<\infty$, uniformly over the finitely many $t\le T$ and $\wt\in\histtreatspace\ut$.
\end{lemma}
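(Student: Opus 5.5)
The plan is a backward induction on $\tau$ from $\t$ down to $1$. It controls the $L_2(P)$ discrepancy of the forward operators one step at a time using (C3), and then treats the auxiliary correction term as a finite sum of one-step innovations weighted by bounded inverse probabilities.

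First, the base case at $\tau=\t$ is immediate: $\widetilde{\Gfunc}^{(\t)}\uwt-\Gfunc^{(\t)}\uwt=\widetilde m^{(\t)}\uwt-\adjfunc^{(\t)}$. By (C2) both functions are bounded by $M_m$. Because each $Q^{(\tau)}\uwtau$ is an averaging operator against a probability kernel, it preserves sup-norm bounds. Hence every $\Gfunc^{(\tau)}\uwt$ and $\widetilde\Gfunc^{(\tau)}\uwt$ stays bounded by $M_f\coloneqq M_m$. This is exactly what licenses applying (C3) at every stage.

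Second, the inductive step. Writing $\Gfunc^{(\tau)}\uwt=Q^{(\tau)}\uwtau\Gfunc^{(\tau+1)}\uwt$ and similarly with tildes, (C3) with $f=\widetilde\Gfunc^{(\tau+1)}\uwt$ and $\widetilde f=\Gfunc^{(\tau+1)}\uwt$ gives
\begin{align*}
\norm{\widetilde\Gfunc^{(\tau)}\uwt-\Gfunc^{(\tau)}\uwt}_{P,2}\le C_Q\left\{\norm{\widetilde\Gfunc^{(\tau+1)}\uwt-\Gfunc^{(\tau+1)}\uwt}_{P,2}+2M_m\norm{\widetilde\prob^{(\tau)}\uwtau-\prob^{(\tau)}\uwtau}_{P,2}\right\}.
\end{align*}
Unrolling this over at most $\t-1$ steps yields a bound of the form $C_Q^{\t}\{\norm{\widetilde m-m}_{P,2}+2M_m\sum_\tau\norm{\widetilde\prob^{(\tau)}-\prob^{(\tau)}}_{P,2}\}$. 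Cauchy--Schwarz converts the sum into $\sqrt{\t-1}$ times the $\ell_2$ aggregate.

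The main obstacle is that the $L_2(P)$ norms are taken with respect to different marginals at different stages: the law of $\histhistory_\tau$ under the observed mixture, versus the law under $\wtm$ along which the recursion integrates. I would handle this by reading (C3) as already absorbing the change of measure. The paper's remark that the constants are "induced by the finite treatment-history and positivity conditions in (C1)" supports this reading. Concretely, the density ratio between $\Prob^{\wtau}_{\histhistory_{\tau+1}}$ and the observed law restricted to $\{\histtreat_\tau=\wtau\}$ is controlled by $1/\underline\pi\ut$, so each such change costs only a factor depending on $\underline\pi\ut$.

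Third, the correction term. Define the innovations $D_\tau=\Gfunc^{(\tau+1)}\uwt-Q^{(\tau)}\uwtau\Gfunc^{(\tau+1)}\uwt=\Gfunc^{(\tau+1)}\uwt-\Gfunc^{(\tau)}\uwt$, and $\widetilde D_\tau$ analogously. Then
\begin{align*}
\norm{\widetilde{\correct}^{(\t)}\uwt-\correct^{(\t)}\uwt}_{P,2}\le\underline\pi\ut^{-1}\sum_{\tau=1}^{\t-1}\left(\norm{\widetilde\Gfunc^{(\tau+1)}\uwt-\Gfunc^{(\tau+1)}\uwt}_{P,2}+\norm{\widetilde\Gfunc^{(\tau)}\uwt-\Gfunc^{(\tau)}\uwt}_{P,2}\right).
\end{align*}
This follows from the triangle inequality and the indicator bound $\1\{\cdot\}/\pi\uwtau\le\underline\pi\ut^{-1}$. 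Each summand is already controlled by the inductive step. Collecting all constants gives $C_t$, which depends only on $C_Q$, $M_m$, $\t$ and $\underline\pi\ut$.

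Finally, taking the maximum over the finitely many $\t\le T$ and $\wt\in\histtreatspace\ut$, as allowed by (C1), yields the uniform linear modulus $\omega(r)=Cr$ required by \cref{assume:uniform_stable}.
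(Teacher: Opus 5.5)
Your proposal is correct and matches the paper's proof essentially step for step. It uses sup-norm preservation to license (C3), unrolls the one-step recursion $\|\widetilde\Gamma^{(\tau)}_{\bar w_t}-\Gamma^{(\tau)}_{\bar w_t}\|_{P,2}\le C_Q\{\|\widetilde\Gamma^{(\tau+1)}_{\bar w_t}-\Gamma^{(\tau+1)}_{\bar w_t}\|_{P,2}+2M_m\|\widetilde p^{(\tau)}_{\bar w_\tau}-p^{(\tau)}_{\bar w_\tau}\|_{P,2}\}$ over the fixed horizon, and bounds the correction term through the innovations with the factor $\underline\pi_t^{-1}$, treating (C3) as absorbing the change of measure just as the paper implicitly does. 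The only cosmetic differences are that you bound the innovation difference using the identity $Q^{(\tau)}_{\bar w_\tau}\Gamma^{(\tau+1)}_{\bar w_t}=\Gamma^{(\tau)}_{\bar w_t}$ where the paper re-applies (C3) to reach the same quantity, and that you make the Cauchy--Schwarz step to the $\ell_2$ aggregate of transition errors explicit.
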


\begin{proof}
Fix $\t\in[T]$ and $\wt\in\histtreatspace\ut$. Let $\widetilde Q\uwtau^{(\tau)}$ denote the one-step integration operator generated by $\widetilde{\prob}\uwtau^{(\tau)}$. Define
$$
\epsilon_m
\coloneqq
\norm{\widetilde{m}\uwt^{(\t)}-\adjfunc^{(\t)}}_{P,2},
\qquad
\epsilon_{p,\tau}
\coloneqq
\norm{\widetilde{\prob}\uwtau^{(\tau)}-\prob\uwtau^{(\tau)}}_{P,2},
$$
and
$$
\rho_t(\widetilde\eta,\eta)
\coloneqq
\epsilon_m+\left(\sum_{\tau=1}^{\t-1}\epsilon_{p,\tau}^2\right)^{1/2}.
$$
By (C2), the terminal functions $\Gfunc^{(\t)}\uwt=\adjfunc^{(\t)}$ and $\widetilde{\Gfunc}^{(\t)}\uwt=\widetilde{m}\uwt^{(\t)}$ are uniformly bounded by $M_m$. Since each lower-order $\Gfunc$ is obtained by integrating a bounded function with respect to a probability kernel,
$$
\sup_{1\le\tau\le\t}\norm{\Gfunc^{(\tau)}\uwt}_{\infty}
\le M_m,
\qquad
\sup_{1\le\tau\le\t}\norm{\widetilde{\Gfunc}^{(\tau)}\uwt}_{\infty}
\le M_m.
$$
Let
$$
\Delta_\tau
\coloneqq
\norm{\widetilde{\Gfunc}^{(\tau)}\uwt-\Gfunc^{(\tau)}\uwt}_{P,2}.
$$
At the terminal step, $\Delta_\t=\epsilon_m$. For $\tau=\t-1,\ldots,1$, condition (C3) gives
\begin{align*}
\Delta_\tau
&=
\norm{\widetilde Q\uwtau^{(\tau)}\widetilde{\Gfunc}^{(\tau+1)}\uwt
-Q\uwtau^{(\tau)}\Gfunc^{(\tau+1)}\uwt}_{P,2}
\le
C_Q\left\{\Delta_{\tau+1}+2M_m\epsilon_{p,\tau}\right\}.
\end{align*}
Iterating this finite recursion yields
$$
\Delta_\tau
\le
C_Q^{\t-\tau}\epsilon_m
+2M_m\sum_{s=\tau}^{\t-1}C_Q^{s-\tau+1}\epsilon_{p,s}.
$$
Since $\t$ is fixed, the right-hand side is bounded by $C_{\Gamma,t}\rho_t(\widetilde\eta,\eta)$ for a finite constant $C_{\Gamma,t}$ depending only on $\t$, $C_Q$, and $M_m$. Hence,
$$
\max_{1\le\tau\le\t}
\norm{\widetilde{\Gfunc}^{(\tau)}\uwt-\Gfunc^{(\tau)}\uwt}_{P,2}
\le C_{\Gamma,t}\rho_t(\widetilde\eta,\eta).
$$

It remains to control the auxiliary correction term. Define the one-step innovations
\begin{align*}
D_{\tau,\wt}^{\eta}(\histhistory_{\tau+1})
&\coloneqq
\Gfunc^{(\tau+1)}\uwt(\histhistory_{\tau+1})
-(Q\uwtau^{(\tau)}\Gfunc^{(\tau+1)}\uwt)(\histhistory_\tau),\\
D_{\tau,\wt}^{\widetilde\eta}(\histhistory_{\tau+1})
&\coloneqq
\widetilde{\Gfunc}^{(\tau+1)}\uwt(\histhistory_{\tau+1})
-(\widetilde Q\uwtau^{(\tau)}\widetilde{\Gfunc}^{(\tau+1)}\uwt)(\histhistory_\tau).
\end{align*}
By the triangle inequality and (C3),
\begin{align*}
\norm{D_{\tau,\wt}^{\widetilde\eta}-D_{\tau,\wt}^{\eta}}_{P,2}
&\le
\Delta_{\tau+1}
+
\norm{\widetilde Q\uwtau^{(\tau)}\widetilde{\Gfunc}^{(\tau+1)}\uwt
-Q\uwtau^{(\tau)}\Gfunc^{(\tau+1)}\uwt}_{P,2}\\
&\le
(1+C_Q)\Delta_{\tau+1}+2C_QM_m\epsilon_{p,\tau}.
\end{align*}
Using positivity in (C1) and the fact that indicators are bounded by one,
\begin{align*}
\norm{\widetilde{\correct}^{(\t)}\uwt-\correct^{(\t)}\uwt}_{P,2}
&\le
\sum_{\tau=1}^{\t-1}\frac{1}{\pi_{\wtau}}
\norm{D_{\tau,\wt}^{\widetilde\eta}-D_{\tau,\wt}^{\eta}}_{P,2}\\
&\le
\frac{1}{\underline\pi\ut}
\sum_{\tau=1}^{\t-1}
\left\{(1+C_Q)\Delta_{\tau+1}+2C_QM_m\epsilon_{p,\tau}\right\}.
\end{align*}
Combining this display with the bound on $\Delta_{\tau}$ gives
$$
\norm{\widetilde{\correct}^{(\t)}\uwt-\correct^{(\t)}\uwt}_{P,2}
\le C_{A,t}\rho_t(\widetilde\eta,\eta)
$$
for some finite constant $C_{A,t}$. Taking $C_t=C_{\Gamma,t}+C_{A,t}$ proves the displayed bound. Since $T$ and $|\histtreatspace\ut|$ are fixed, taking the maximum over all relevant $\t$ and $\wt$ gives a finite global constant $C$ and therefore the modulus $\omega(r)=Cr$.
\end{proof}

Combining \cref{lemma:primitive_uniform_stability} with \cref{assume:rate_convergence} gives the operational implication used in the proof of \cref{theorem:asymptotic}. If
$$
\norm{\eststackadjfunc-\stackadjfunc}_{P,2}
+
\norm{\eststackprob-\stackprob}_{P,2}
=o_p(n^{-1/4}),
$$
then, for each fixed $t$,
$$
\max_{\wt\in\histtreatspace\ut}
\left[
\max_{1\le\tau\le\t}
\norm{\estGfunc^{(\tau)}\uwt-\Gfunc^{(\tau)}\uwt}_{P,2}
+
\norm{\widehat A_{\pi,\wt}^{(\t)}-\correct^{(\t)}\uwt}_{P,2}
\right]
=o_p(n^{-1/4}),
$$
where $\widehat A_{\pi,\wt}^{(\t)}$ denotes the auxiliary correction constructed from the estimated nuisance functions with the true prefix normalizers. Thus, the recursive integration and innovation-correction steps inherit the primitive nuisance rate; they do not require a separate, faster rate for the constructed quantities.

This also clarifies how the simulation learners fit into the assumptions. Random forest regression satisfies the boundedness requirement in (C2) when outcomes are bounded, because its predictions are convex averages of observed outcomes. With unbounded outcomes, clipping the fitted values enforces the same requirement. This boundedness check is different from the rate verification in Appendix~D.2, where random forests are invoked as adaptive local averaging estimators with formal large-sample theory under regularity conditions~\citep{Athey2019-pp}.

For the transition component, the Gaussian MLP sampler used in the simulation studies satisfies the stability requirement when its output law is kept in a compact nondegenerate location-scale family and its network maps have controlled layer operator norms. Lipschitz activations together with bounded layer operator norms control the Lipschitz constant of the whole transition network~\citep{Gouk2021-sg}. On a compact nondegenerate Gaussian location-scale family, the map from the location and scale parameters to the induced probability law is locally Lipschitz in total variation. Hence perturbations in the fitted MLP parameters induce controlled perturbations in $\prob\uwtau^{(\tau)}$ under the TV-based $L_2(P)$ norm. The same interpretation applies to the zero-inflated log-normal MLP used in the empirical analysis, provided the zero-inflation probabilities and scale parameters are bounded away from degeneracy.

\section{Experimental Setup}
\subsection{Computing Infrastructure}
All experiments were conducted on a MacBook Pro equipped with an Apple M4 Max processor and 64 GB of unified memory, running macOS Sequoia (version 15.6.1).

\subsection{Data Generating Process}
\label{appendix:experiments_settings}
To evaluate the effectiveness of our method in capturing complex, dynamic causal relationships, we generate synthetic datasets based on a coupled dynamical system.
The data generating process is designed to emulate a scenario where the covariates evolve according to a continuous-time chaotic system, which is influenced by both the outcome history and the treatment assignments.
The full data generating process is defined by the following system of equations:
\begin{align}
    \tag{\ref{eq:generating}}
    \begin{split}
    \outcome\uit
    &=g(\covariate\uit,\outcome\uitm)
    +\tau(\t)\treat\uit+\varepsilon\uit,\\
    d\covariate\uit&=
    Q\cdot f(Q^\top\covariate\uit,\outcome\uitm,\treat\uit)d\t
    +\sigma_\covariate dB_t,
    \end{split}
\end{align}
where $i$ indexes the unit and $t$ indexes time. The outcome $\outcome\uit$ is determined by a nonlinear function $g$, a time-varying treatment effect $\tau(\t)$, and a Gaussian noise term $\varepsilon\uit \sim \mathcal{N}(0,\sigma_\outcome^2)$. The $\d$-dimensional covariates $\covariate\uit$ evolve according to a stochastic differential equation (SDE), driven by a drift function $f$ and a Brownian motion term $dB\ut$ with diffusion scale $\sigma_\covariate$. The initial states are sampled as $X_{i,1} \sim \mathcal{N}(F, \sigma_0^2 I_\d)$ with $\outcome_{i,0} = 0$.
The covariate dynamics $f$ follow the rotated Lorenz-96 model~\cite{lorenz96},
which is widely recognized as a standard benchmark for modeling complex chaotic dynamics:
\begin{align*}
    f(S\uit,\outcome\uit,\treat\uit) &= \left[(S_{i,j+1,t} - S_{i,j-2,t})S_{i,j-1,t} - S_{i,j,t} + F\ut \right]_{j \in [\d]},
\end{align*}
Crucially, to introduce feedback loops between the outcome, treatment, and covariates, we modify the external forcing term to be time-dependent and state-dependent:
\begin{align*}
    F\ut &= F + 0.5\outcome\uitm + 0.2\treat\uit,
\end{align*}
where $F$ is the baseline constant forcing and $F\ut$ represents the effective forcing coupled with the outcome and treatment. This coupling creates a challenging causal structure where $\outcome$ and $\treat$ dynamically influence the future evolution of $X$.
Next, the nonlinear outcome function $g$ and the time-varying treatment effect $\tau(\t)$ are specified as follows:
\begin{gather*}
    g(\covariate_{i,t}, \outcome_{i,t-1})=\exp(-(\covariate_{i,1,t} - F)^2/5) + \sin(\covariate_{i,2,t}\covariate_{i,3,t}/3) + 0.8\log(1 + \exp(\outcome\uitm)), \\[0.2em]
    % g(\covariate_{i,t}, \outcome_{i,t-1})&=1.5\operatorname{sign}(\tilde{\covariate}_{i,1,\t}\tilde{\covariate}_{i,2,\t})
    % +\operatorname{sign}(\tilde{\covariate}_{i,3,\t}\tilde{\covariate}_{i,4,\t})
    % +0.8\operatorname{sin}(3\tilde{\covariate}_{i,5,\t}\tilde{\covariate}_{i,6,\t})+\tanh(2\outcome\uitm), \\
    \tau(\t) = 0.2 - 0.6 \exp(-\t/4) + 0.4 \exp(-3\t/5),
\end{gather*}
This specification includes exponential and sinusoidal interactions among covariate dimensions, alongside an autoregressive dependency on $Y_{i,t-1}$.
Although the observational data is discrete, the underlying covariate dynamics are continuous. To ensure high-fidelity simulation of the chaotic system, we solve the SDE in \cref{eq:generating} using the explicit {fourth-order Runge-Kutta (RK4)} method for the deterministic drift component.
% We use a fine-grained integration step size of $\delta t = 0.01$ between observations to maintain numerical stability and accurately capture the system's divergence properties.
The stochastic component is incorporated via the Euler-Maruyama approximation~\cite{Euler-Maruyama} consistent with the integration steps.
Lastly, \cref{table:dgp_params} summarizes the parameter values of the data generating process.

\input{components/table_dgp_params}

% \subsection{Implementation Details}

\section{Additional Results}
\label{appendix:additional_results}

\subsection{Simulation Study}
\cref{fig:additional_inference} illustrates  the statistical properties (RMSE, average 95\% CI length, and coverage probability) of different estimators using synthetic data with a sample size of $\n=500$ and $\n=2000$.
The results are consistent with the findings for $\n=1000$ presented in the main text.
Also, we can observe a clear convergence in performance and a further reduction in the variance of the estimators as the sample size increases from $\n=500$ to $\n=2000$.
This scaling behavior highlights the practical utility of our framework in both data-constrained settings and large-scale online experiments, ensuring that the estimator remains semiparametrically efficient as established in \cref{theorem:semipara}.

\input{components/fig_additional_inference}

\subsection{Performance Comparison}
We compare our method to the two standard methods, g-formula and AIPW approaches, and report the RMSE at each time point below.

\begin{table}[h]
\centering
\caption{RMSE comparison with g-formula and AIPW baselines on synthetic data with $n=1000$.}
\begin{tabular}{l|ccccccc}
\toprule
Time & 1 & 2 & 3 & 4 & 5 & 6 & 7 \\
\midrule
Ours & 0.0188 & 0.0280 & 0.0385 & 0.0467 & 0.0550 & 0.0612 & 0.0669 \\
G-formula & 0.0195 & 0.0312 & 0.0438 & 0.0568 & 0.0687 & 0.0800 & 0.0932  \\
AIPW & 0.0188 & 0.0682 & 0.0895 & 0.107 & 0.123 & 0.135 & 0.146 \\
\bottomrule
\end{tabular}
\end{table}

\subsection{Monte Carlo Approximation}

To evaluate the impact of Monte Carlo approximation error on the variance reduction gains, we conducted a sensitivity study with varying $S=\{16,64,256\}$.
We can see that, for too small $S=16$, the approximation error has a somewhat negative effect on the variance reduction gains, but for $S$ above a certain level, this effect is negligible.

\begin{table}[h]
\centering
\caption{Effect of Monte Carlo sample size on RMSE for synthetic data with $n=500$.}
\begin{tabular}{l|ccccccc}
\toprule
Time & 1 & 2 & 3 & 4 & 5 & 6 & 7 \\
\midrule
16 & 0.0302 & 0.0486 & 0.0672 & 0.0843 & 0.100 & 0.114 & 0.130 \\
64 & 0.0302 & 0.0473 & 0.0637 & 0.0799 & 0.0900 & 0.104 & 0.114 \\
256 & 0.0302 & 0.0475 & 0.0644 & 0.0781 & 0.0900 & 0.105 & 0.112 \\
\bottomrule
\end{tabular}
\end{table}

\subsection{Computational Time}

We measured the computational time of our algorithm using synthetic datasets of different lengths and sample sizes.
As shown in the following results, the computational time scales approximately linearly with the sample size $n$ and quadratically with the time horizon $T$.
This quadratic scaling with respect to $T$ is theoretically expected due to the recursive nature of the forward integration step in \cref{alg:overview}.

\begin{table}[htbp]
\centering
\caption{Computational times for different time horizons and sample sizes}
\label{tab:computational_time}
\begin{tabular}{lll}
\toprule
Time horizon $T$ & Sample size $n$ & Computational time (s) \\
\midrule
5  & 1000 & 55.361   \\
   & 2000 & 91.630   \\
   & 3000 & 150.895  \\
\midrule
10 & 1000 & 219.690  \\
   & 2000 & 435.110  \\
   & 3000 & 714.088  \\
\midrule
15 & 1000 & 600.408  \\
   & 2000 & 1253.388 \\
   & 3000 & 2216.349 \\
\bottomrule
\end{tabular}
\end{table}

%% file: components/table_symbols.tex
\begin{table}[h]
\centering
% \footnotesize
\caption{Symbols and definitions.}
\label{table:symbols}
\begin{tabular}{l|l}
\toprule
Symbol & Definition \\
\midrule
$\T$ & Time horizon length \\
$\n$ & Sample size \\
$\d$ & Number of covariate dimensions \\
\midrule
$\covariate\uit$ & Covariates of $i$-th unit at time $\t$ \\
$\treat\uit$ & Treatment indicator of $i$-th unit at time $\t$ \\
$\outcome\uit$ & Outcome of $i$-th unit at time $\t$ \\
$\{\histobserve\uit\}_{i=1}^\n$ & Observed data, i.e., $\histobserve\uit=(\histcovariate_{i,\t},\histtreat_{i,\t},\histoutcome_{i,\t})$ \\
$\outcome\uit(\wt)$ & Potential outcome for treatment group $\wt$ of $i$-th unit at time $\t$ \\
$\histhistory\ut(\wtm)$ & Historical data for treatment group $\wtm$ up until time $\t$\\
$\pi\uwt$ & Treatment assignment probability for treatment group $\wt$, i.e., $\pi\uwt=\Prob(\histtreat\ut=\realhisttreat\ut)$ \\
$\n\uwt$ & Sample size in treatment group $\wt$ \\
$\adjfuncfull$ & Conditional mean function, i.e., $\adjfuncfull=\E[\outcome\ut(\wt)\mid\histhistory\ut(\wtm)=\realhisthistory\ut]$ \\
$\prob^{(\tau)}\uwtau(\cdot\mid\realhisthistory_\tau)$ & Transition kernel of $\history\utautau$ given $\realhisthistory\utau$ for treatment history $\wtau$, i.e., $\realhistory_{\tau+1}\sim\prob\uwtau^{(\tau)}(\cdot\mid\realhisthistory_{\tau})$ \\
$\Gfunc\uwt^{(\tau)}(\realhisthistory\utau)$ & Iterated conditional expectation \\
$\momentfunc\ut(\histobserve\ut)$ & Moment function \\
$\correct\uwt^{(\t)}(\histhistory\ut)$ & Auxiliary correction term \\
$\estimator\ut$ & Target estimand, i.e., $\estimator\ut=(\outcomefunc(\t))_{\wt\in\histtreatspace\ut}$ \\
\bottomrule
% \midrule
\end{tabular}
\normalsize
\end{table}

%% file: components/table_dgp_params.tex
\begin{table}[h]
\centering
\caption{Parameters for data generating process}
\label{table:dgp_params}
\begin{tabular}{c|c|c}
\toprule
Symbol & Description & Value \\
\midrule
$\Delta\t$ & time step & 0.01 \\
$F$ & external forcing & 4.0 \\
$\sigma_0$ & initial standard deviation of $\covariate$ & 0.8 \\
$\sigma_\outcome$ & noise standard deviation of $\outcome$ & 0.2 \\
$\sigma_\covariate$ & noise standard deviation of $\covariate$ & 0.2 \\
\bottomrule
\end{tabular}
\end{table}

%% file: components/fig_additional_inference.tex
\begin{figure*}[h]
    \begin{minipage}[b]{0.49\linewidth}
        \centering
        \includegraphics[width=\linewidth]{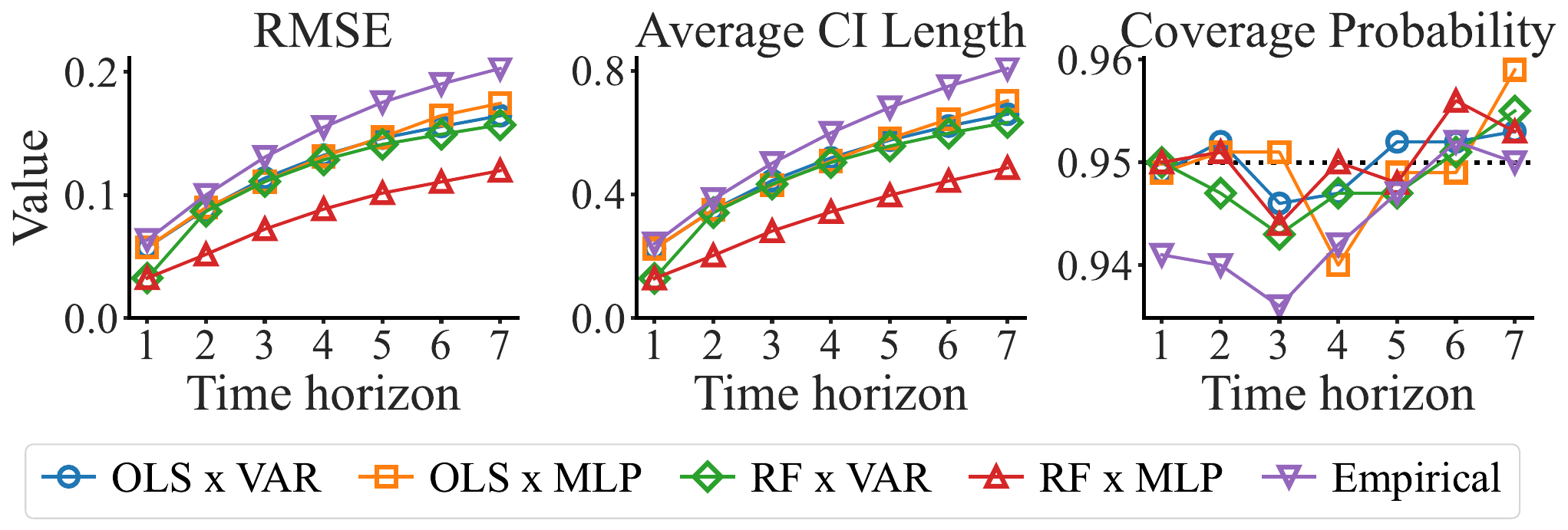} \\
        \small{(a) $\n=500$}
        \label{fig:inference_1000}
    \end{minipage}
    \begin{minipage}[b]{0.49\linewidth}
        \centering
        \includegraphics[width=\linewidth]{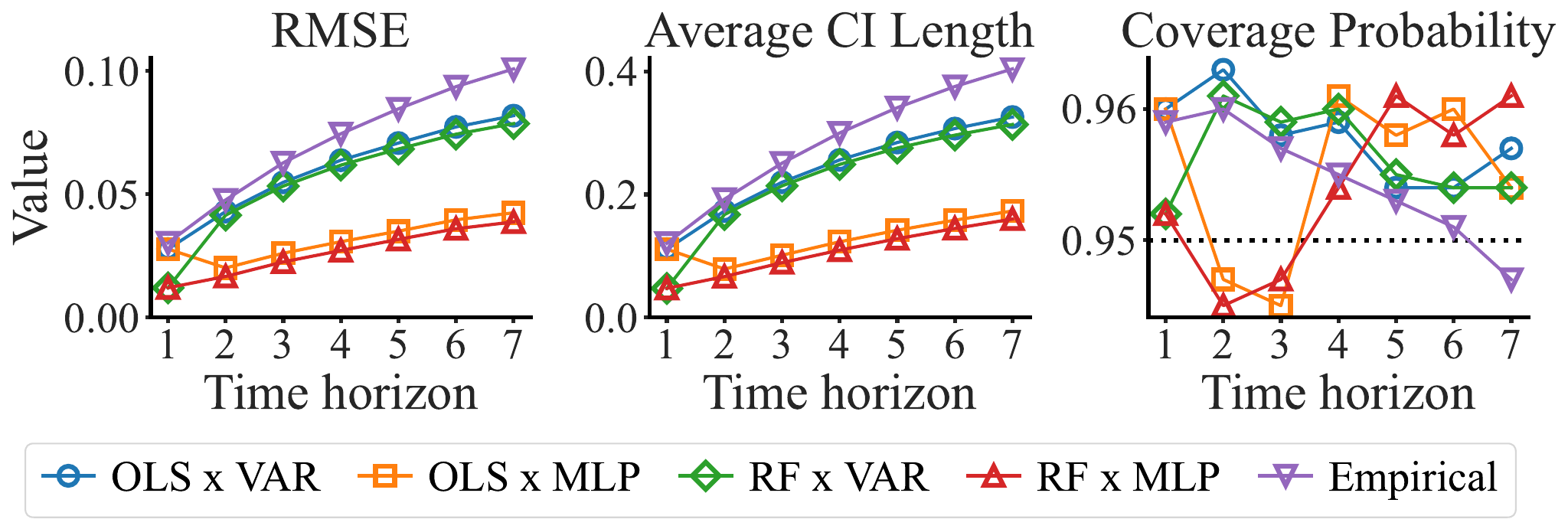} \\
        \small{(b) $\n=2000$}
        \label{fig:inference_3000}
    \end{minipage}
    % \vspace{-1.5em}
    \caption{
    {Additional statistical properties of different estimators} on synthetic data with a sample size of (a) $\n=500$ and (b) $\n=2,000$.}
    \label{fig:additional_inference}
\end{figure*}